\newmdenv[
  leftmargin = 0pt,
  innerleftmargin = 0em,
  innertopmargin = -10pt,
  innerbottommargin = 0pt,
  innerrightmargin = 0pt,
  rightmargin = 0pt,
  linewidth = 1.5pt,
  topline = false,
  rightline = false,
  bottomline = false
  ]{leftbar}
\theoremstyle{plain}
\newtheorem{thm}{Theorem}
\newtheorem{theorem}[thm]{Theorem}
\newtheorem{corollary}[thm]{Corollary}
\newtheorem{lemma}[thm]{Lemma}
\newtheorem{definition}[thm]{Definition}
\newtheorem{scholium}[thm]{Scholium}
\newtheorem{example}[thm]{Example}
\theoremstyle{nonumberplain}
\newtheorem{proof}{Proof}
\newtheorem{proofsketch}{Proof sketch}
\newcommand{\CASE}{\medskip \textsc{Case:}\enspace}
\newcommand{\SCASE}{\medskip \textsc{Special case:}\enspace}
\newcommand{\CASES}{\medskip \textsc{Cases:}\enspace}
\newcommand{\safe}[1]{{#1^{\mathsf{S}}}}
\newcommand{\norm}[1]{#1^{\mathsf{N}}}
\newcommand{\key}[1]{\ensuremath{\mathop{\mathsf{#1}}\nolimits}\xspace}
\newcommand{\Case}{\key{case}}
\newcommand{\Of}{\key{of}}
\newcommand{\In}{\key{in}}
\newcommand{\Let}{\key{let}}
\newcommand{\fold}{\key{fold}}
\newcommand{\type}[1]{\ensuremath{{\bf #1}}\xspace}
\newcommand{\Unit}{\type{unit}}
\newcommand{\Nat}{\type{nat}}
\newcommand{\Tree}{\type{tree}}
\newcommand{\String}{\type{string}}
\newcommand{\size}{\ensuremath{\mathrm{size}}\xspace}
\newcommand{\toSafe}{\key{toSafe}}
\newcommand{\toNorm}{\key{toNorm}}
\newcommand{\constr}[1]{\ensuremath{\mathop{\mathsf{#1}}\nolimits}\xspace}
\newcommand{\Zero}{\constr{Zero}}
\newcommand{\Succ}{\constr{Succ}}
\newcommand{\Leaf}{\constr{Leaf}}
\newcommand{\Fork}{\constr{Fork}}
\newcommand{\Empty}{\constr{Empty}}
\newcommand{\Cons}{\constr{Cons}}
\newcommand{\Branch}{\constr{Branch}}
\newcommand{\Cn}[1]{\mathop{\mathsf{c}_{#1}}}
\newcommand{\Cs}[1]{\mathop{\mathsf{c}_{\safe{#1}}}}
\newcommand{\Dn}[1]{\mathop{\mathsf{d}_{#1}}}
\newcommand{\Ds}[1]{\mathop{\mathsf{d}_{\safe{#1}}}}
\newcommand{\uCn}[1]{\mathop{\underline{\mathsf{c}}_{#1}}}
\newcommand{\id}{\mathrm{id}}
\newcommand{\Id}{\mathrm{Id}}
\newcommand{\lfp}[1]{\mu #1\,\sqdot\,}
\newcommand{\SetCat}{\ensuremath{\mathcal{S}\mkern-2.5mu\mathit{et}}\xspace}
\newcommand{\ustrut}{\rule[-.18\baselineskip]{0pt}{0pt}}
\newcommand{\inj}{\mathop{\iota}\nolimits}
\newcommand{\proj}{\mathop{\pi}\nolimits}
\newcommand{\upair}[2]%
  {\underline{(}#1\underline{\ustrut,}\, #2\underline{)}}
\newcommand{\uinj}{\mathop{\underline{\iota}}\nolimits}
\newcommand{\uep}{\underline{()}}
\newcommand{\SB}{\mathcal{B}}
\newcommand{\SE}{\mathcal{E}}
\newcommand{\RSi}{\ensuremath{\mathit{RS_1}}\xspace}
\newcommand{\RSmi}{\ensuremath{\mathit{RS^-_1}}\xspace}
\newcommand{\iffdef}{\mathrel{\iff_{\mkern-12mu\text{def}}}}
\newcommand{\newfontobj}[2]{
  \newcommand{#1}[1]{
    \expandafter\def\csname##1\endcsname{{#2{##1}}}}}
\newfontobj{\class}{\textrm}
\newfontobj{\fnc}{\mathit}
\newfontobj{\category}{\mathbf}
\newfontobj{\Constructor}{\mathsf}
\newcommand{\Strut}[1]{\rule{0pt}{#1}}
\renewcommand{\gets}{\ensuremath{\mathrel{\colon=}}\xspace}
\newcommand{\dom}{\mathop{\mathrm{dom}}}
\newcommand{\set}[1]{\{\,#1\,\}}
\newcommand{\Set}[1]{\left\{\,#1\,\right\}}
\newcommand{\entails}{\vdash}
\renewcommand{\phi}{\varphi}
\newcommand{\suchthat}{\mathrel{\,\stackrel{\rule{0.03em}{0.5ex}}%
{\rule[-.1ex]{0.03em}{0.5ex}}\,}}
\newcommand{\sqdot}{\mathord{\rule{0.4mm}{0.4mm}}}
\newcommand{\lam}[1]{\lambda #1{\mkern1mu.\mkern1mu}}
\newcommand{\defeq}{\mathrel{\stackrel{\text{def}}{=}}}
\newcommand{\fv}{\ensuremath{\mathbf{fv}}}
\renewcommand{\colon}{\mathpunct{:}}
\newcommand{\of}{\mathbin{:}}
\newcommand{\ofx}{\mathbin{:\,}}
\newcommand{\is}{\ensuremath{\mathrel{\colon\colon=}}}
\newcommand{\synsep}{\mathrel{\,|\,}}
\newcommand{\yields}{\mathbin{\downarrow}}
\newcommand{\Quad}[1]{\hspace*{#1em}}
\renewcommand{\today}{\number\day \space \ifcase\month\or
	January\or February\or March\or April\or May\or June\or
	July\or August\or September\or October\or November\or
	December\fi \space \number\year}
\newcommand{\Comment}[1]{\hbox{\textsl{/\!/ #1}}}
\newcommand{\irule}[2]{\ensuremath{{\frac{\strut\textstyle #1}%
  {\strut\textstyle #2}}}} 
\newcommand{\rulelabel}[1]{\hbox{\itshape #1:}\Quad{0.5}}
\newcommand{\sidecond}[1]{\Quad{0.5} \left(#1\right)}
\newcommand{\codepth}{\textit{co\kern-1ptDepth}}
\newcommand{\xs}{\mathit{xs}}
\newcommand{\step}{\ensuremath{\mathit{step}}\xspace}
\let\oldmarginpar\marginpar
\renewcommand\marginpar[1]{\-\oldmarginpar[\raggedleft\footnotesize #1]%
{\raggedright\footnotesize #1}}
\newcommand{\nat}{\mathbb{N}}
\newcommand{\Ltree}{\type{ltree}}
\newcommand{\Lst}{\type{list}}
\newcommand{\List}{\type{list}}
\newcommand{\dotcup}{\ensuremath{\mathbin{\mathaccent\cdot\cup}}}
\def\dashedrule#1#2#3{{%
  \dimen1=#2 \divide\dimen1 by 2
  
  \def\@ruledash{%
    \rule{\dimen1}{0pt}%
    \rule[0.5ex]{#1}{0.4pt}%
     \rule{\dimen1}{0pt}}%
     
    \count1=0
    \loop%
    \ifnum\count1<#3%
       \advance\count1 by 1%
       \@ruledash%
    \repeat}}
\let\boundedby\sqsubseteq
\def\vboundedby{\boundedby^{\mathrm{val}}}
\newcommand{\bounded}[3][\relax]{\ifx#1\relax{#2}\boundedby{#3}\else{#2}\boundedby_{#1}{#3}\fi}
\newcommand{\vbounded}[3][\relax]{\ifx#1\relax{#2}\vboundedby{#3}\else{#2}\vboundedby_{#1}{#3}\fi}
\def\oftype{\mathrel:}
\let\evalto\downarrow
\def\evalnohyp#1#2{{#1}\evalto{#2}}
\def\evalhyp#1#2#3{{#1}\entails\evalnohyp{#2}{#3}}
\newcommand{\eval}[3][\relax]{\ifx#1\relax\evalnohyp{#2}{#3}\else\evalhyp{#1}{#2}{#3}\fi}
\def\typenohyp#1#2{{#1}\oftype{#2}}
\def\typehyp#1#2#3{{#1}\entails\typenohyp{#2}{#3}}
\newcommand{\typing}[3][\relax]{\ifx#1\relax\typenohyp{#2}{#3}\else\typehyp{#1}{#2}{#3}\fi}
\newcommand{\compl}[1]{\overline{#1}}
\newcommand{\cost}[1]{\lceil #1 \rceil}
\newcommand{\seq}[2][\relax]{
  \ifx#1\relax
    \langle#2\rangle
  \else
    \ifx#1\left
      \left\langle#2\right\rangle
    \else
      \csname #1l\endcsname\langle#2\csname #1r\endcsname\rangle
    \fi
  \fi
}
\renewcommand{\set}[2][\relax]{
  \ifx#1\relax
    \{#2\}
  \else
    \ifx#1\left
      \left\{#2\right\}
    \else
      \csname #1l\endcsname\{#2\csname #1r\endcsname\}
    \fi
  \fi
}
\newcommand*\rel@kern[1]{\kern#1\dimexpr\macc@kerna}
\newcommand*\widebar[1]{%
  \begingroup
  \def\mathaccent##1##2{%
    \rel@kern{0.8}%
    \overline{\rel@kern{-0.8}\macc@nucleus\rel@kern{0.2}}%
    \rel@kern{-0.2}%
  }%
  \macc@depth\@ne
  \let\math@bgroup\@empty \let\math@egroup\macc@set@skewchar
  \mathsurround\z@ \frozen@everymath{\mathgroup\macc@group\relax}%
  \macc@set@skewchar\relax
  \let\mathaccentV\macc@nested@a
  \macc@nested@a\relax111{#1}%
  \endgroup
}
\newcommand{\vecx}{\vec{x}}
\newcommand{\uclos}[2]%
  {\underline{(\!(}#1\underline{\ustrut,}\, #2\underline{)\!)}}
\renewcommand{\Id}{\mathsf{Id}}
\newcommand{\App}{\textsf{App}}
\newcommand{\Pair}{\textsf{Pair}}
\newcommand{\Proj}{\textsf{Proj}}
\newcommand{\Inj}{\textsf{Inj}}
\newcommand{\VAL}{\textsf{V}}
\newcommand{\Var}{\textsf{Variable}}
\newcommand{\Destr}{\textsf{Destr}}  
\newcommand{\datatype}{\mathsf{datatype}\;}
\newcommand{\nn}[1]{\overline{\sharp_{#1}}}
\newcommand{\emptydag}{\mathbf{K}_0}
\newcommand{\Smi}{\ensuremath{S_1^-}\xspace}
\newcommand{\Grd}{\type{grd}}
\newcommand{\Alist}{\type{alist}}
\newcommand{\State}{\type{state}}
\newcommand{\numeral}[1]{\overline{#1}}
\newcommand{\RSmii}{\ensuremath{\mathit{RS}_{1.1}^-}\xspace}
\renewcommand{\size}{\mathord{\underline{\mathit{size}}}}
\newcommand{\ts}{\mathord{\underline{\mathit{ts}}}}
\newcommand{\cs}{\mathord{\underline{\mathit{cs}}}}
\newcommand{\note}[1]{\makebox[0pt][l]{\textsc{#1 rules}}}
\newcommand{\CS}[1]{\mathop{\mathsf{cs}_{#1}}}
\newcommand{\cquote}[1]{\mathopen\ulcorner\! #1 \!\mathclose\urcorner}
\newcommand{\Const}{\mathsf{Constr}}
\newcommand{\VERT}{\mathsf{Vert}}
\newcommand{\Term}{\type{term}}
\newcommand{\Vtg}{\type{vtg}}
\newcommand{\Context}{\type{context}}
\newcommand{\Record}{\type{record}}
\newcommand{\Kont}{\type{kont}}
\newcommand{\Vertex}{\mathbf{vertex}}
\newcommand{\muP}{{\mu P}}
\newcommand{\Max}{\mathord{\mathit{max}}}
\newcommand{\BX}{\mathbf{X}}
\newcommand{\kids}[2]{
  \setprofcurve{-2} 
  \drawcurvedtrans[r](#1,#2){\scriptsize }
  \setprofcurve{2} 
  \drawcurvedtrans(#1,#2){\scriptsize }}
\newcommand{\residual}[1]{\Vert #1\Vert}
\newcommand{\TO}{&\quad\to\quad&}
\newcommand{\rulenum}[1]{\text{(R#1)}}
\title[General Ramified  Recurrence and Polynomial-time Completeness]{General Ramified  Recurrence and Polynomial-time Completeness
 (Preliminary Draft)}
\author{Norman Danner and James S.~Royer}
\begin{document}

\maketitle \marginnote{Norman Danner, 
         Dept.~of Mathematics and Computer Science, 
         Wesleyan University, Middletown, CT 06459 USA.  
         Email: \textsf{ndanner@wesleyan.edu}
         \medskip
         
         \noindent
         James S.~Royer, Dept.~of Elec.~Engrg.~and Computer Science, 
         Syracuse University, Syracuse, NY 13244 USA.
         Email: \textsf{jsroyer@syr.edu} 
         }

\vspace*{-2ex}
\begin{abstract}
  We exhibit a sound and complete implicit-complexity formalism for 
  functions 
  feasibly computable by structural recursions over inductively 
  defined data structures.
  \emph{A feasibly computable structural recursion} 
  here means that the structural-recursive definition has a run time 
  that is polynomial in the sizes of the representation of the  data 
  inputs
  and where these representations may make use of data sharing.  
  \emph{Inductively defined data structures} 
  here includes lists and trees.  
  \emph{Soundness} 
  here means that the  programs in the
  implicit-complexity formalism have feasible run times.
  \emph{Completeness}
  here means that each function computed by a feasible structural
  recursion has a program in the implicit-complexity formalism.
  This paper is a follow up on the work of
  \citep{gen:rem:2010,Avanzini2018OnSM} who focused on the soundness
  of such formalisms but did not consider the question of
  completeness.  
  \marginnote[-1cm]{\emph{Acknowledgements:}
Norman Danner's work was supported by NSF grant number 1618203
and James Royer's was supported by NSF grant number 1319769.
The second author wishes to thank his cat Penny for not stomping on
his keyboard too hard or too often.}
  
\end{abstract}


\section{Introduction}\label{S:intro} 

\citet{gen:rem:2010} and \citet{Avanzini2018OnSM} studied what
``feasibly computable'' should mean with respect to structural
(primitive) recursions over inductively defined data and how such
feasibly computable functions can be captured within an
implicit-complexity formalism.  Their focus was on showing the
soundness of such a formalism (i.e., that the programs of the
formalism have feasible runtimes). The present paper's focus is
on identifying such a formalism that is both sound and
\emph{complete}, i.e., that each feasibly-computable structurally
recursive function has a program in the formalism.

To illustrate the issues involved, let us consider some informal
examples.  First we introduce two data types using an ML-like
syntax:
\begin{gather*}
  \datatype \Nat \,= \Zero 
  \;|\; \Succ \Of\Nat\\
  \datatype \Tree = \Leaf 
  \;|\; \Branch \Of \Tree\times\Tree  
\end{gather*}
where $\Nat$ is a data-type for the natural numbers with
constructors $\Zero \ofx \Nat$ and $\Succ\ofx \Nat\to\Nat$ and
$\Tree$ is a data-type for binary trees with constructors
$\Leaf\ofx\Tree$ and $\Branch \ofx\Tree\times\Tree \to\Tree$.
Now, the variety of functional programming we are considering is
call-by-value with heap allocated data structures with structure
sharing.  For example, consider the following primitive recursive
definition.
\begin{align*}
  &\grow\of\Nat\to\Tree
  \\
  &\grow \,\Zero  = \Leaf
  \\
  &\grow \,(\Succ n)  = \big(\;\Let \, t = \grow\,n \,\In\,\Branch(t,t)\;\big)
\end{align*}
The result of evaluating\sidenote[][-2cm]{%
   \emph{Conventions:} Let $\numeral{m}= \Succ^{(m)}(\Zero)$ for each
  $m\in\nat$; i.e., $\numeral{0}=\Zero$, $\numeral1$ = $\Succ(\Zero)$,   
  $\numeral{2}=\Succ(\Succ(\Zero))$, etc. 
   We assume:\\ 
   $\plus(\numeral{m_1},\numeral{m_2}) = \numeral{m_1+m_2}$. \\
   $\Max(\numeral{m_1},\numeral{m_2})  = \numeral{\max(m_1,m_2)}$.
}  $\grow(\overline{3})$ is a directed acyclic
graph (dag) along the lines of the one shown in
Figure~\ref{fig:grow3}.
\begin{marginfigure}[1cm]
\setlength{\unitlength}{2.65pt}
{\ }\hfill
\begin{picture}(5,50)(-2.5,5)
\thicklines \setstatediam{10}\setrepeatedstatediam{9}
\letstate C1=(0,58) \drawstate(C1){\scriptsize$\Branch$}
\letstate C2=(0,42) \drawstate(C2){\scriptsize$\Branch$}
\kids{C1}{C2}
\letstate C3=(0,26) \drawstate(C3){\scriptsize$\Branch$}
\kids{C2}{C3}
\letstate C4=(0,10) \drawstate(C4){\scriptsize$\Leaf$}
\kids{C3}{C4}
\end{picture}\hfill{\ }
%
\caption{$\grow(\overline{3})$'s dag.}\label{fig:grow3}
\end{marginfigure} 
This four-vertex dag represents a complete binary tree of height three 
by means of structure sharing.  This compressed representation does not 
cause any troubles for our  programs provided, as in ML and Haskell,
our programs have no means to distinguish this representation 
from a 15-vertex sharing-free representation
of the same tree. 

These differing representations pose  some puzzles 
regarding the run-time complexity of programs over them.  
Consider the following two structural-recursive 
definitions.
\begin{align*}
  & \treeSize  \of\Tree\to\Nat
  \\
  &\treeSize \,\Leaf  = \numeral1
  \\
  &\treeSize \,(\Branch(tl,tr)) 
     = \Succ(\plus(\treeSize\, tl,\treeSize\, tr))
\\[1ex]
  & \height \of\Tree\to\Nat
  \\
  &\height \,\Leaf = \numeral0
  \\
  &\height \,(\Branch(tl,tr)) = \Succ(\Max(\height\,tl,\height\,tr))
\end{align*}
For any given $t\of\Tree$, $\treeSize(t)$ = the number of vertices in
the binary tree represented by $t$ and $\height(t)$ = the height of 
$t$'s binary tree.   Thus, for any $m\in\nat$:
\begin{align*}
 &\treeSize(\grow(\numeral{m})) = \numeral{2^{m+1}-1}.
 \\
 &\height(\grow(\numeral{m})) = \numeral{m}.
\end{align*}

Let us consider $\lam{n}\treeSize(\grow(n))$.  As there is an
exponential blow-up,  $\lam{n}\treeSize(\grow(n))$ fails to
be feasible.  We want our notion of feasibility to be closed under
composition, so it follows that at least one of $\treeSize$ and
$\grow$ should also count as infeasible.  If we classify $\grow$ as
infeasible, then the root cause of the infeasibility would seem to
be data sharing.  It \emph{is} possible to count $\treeSize$ as
feasible \emph{provided}, as in \citep{Burrell:2009}, one forbids
data sharing.  But abandoning data-sharing would be contrary to
long-standing, well-founded practices in functional programming with
which we would like to be consistent.  Thus, we seem to be forced to
count $\treeSize$ as infeasible.

Now let us consider $\lam{n}\height(\grow(n))$.  The function
computed by $\lam{n}\height(\grow(n))$, namely $n\mapsto n$, is
feasible if anything is.  However under standard evaluation
strategies, the computations specified by $\lam{n}\height(\grow(n))$
are infeasible.  The problem is that these strategies are oblivious
to data sharing and so, when fed the result of $\grow(\numeral{m})$,
the resulting computation goes through $(2^{m+1}-1)$-many calls of
$\height$.  A key insight in \citep{gen:rem:2010,Avanzini2018OnSM}
is that it is better to treat these structural recursions as dynamic
programming problems to be evaluated bottom-up rather than in the
standard top-down fashion.  In a dynamic programming evaluation of
$\height(t)$, the underlying dag\sidenote[][-2ex]{%
  ``Every dynamic program has an underlying dag structure: think of
  each node as representing a subproblem, and each edge as a
  precedence constraint on the order in which the subproblems can be
  tackled.  Having nodes $u_1,\dots,u_k$ point to $v$ means
  `subproblem $v$ can only be solved once the answers to
  $u_1,\dots,u_k$ are known.''' {\citep[Page~163]{DPV}} }
is just $t$'s dag.  The evaluation starts with the $\Leaf$ vertices
and works its way up the dag so that when considering a vertex
$\Branch(tl,tr)$ and computing
$\Succ(\Max(\height\,tl,\height\,tr))$ the values of $\height\,tl$
and $\height\,tr$ have already been computed and saved (memoized) so
that the recursive calls $\height\,tl$ and $\height\,tr$ turn into
look-ups.  Let us call the determination of $\height$ for a
$\Branch$ or a $\Leaf$ vertex of $t$ a \emph{recursive step} and
also define $\size(t)$ = the number of vertices in $t$'s
dag.\sidenote{%
  \emph{Convention:} Functions and operations on dags that, 
  unlike our definition of $\height$,   
  are cognizant of the concrete dag structure
  are \underline{underlined}.  Also, the formal definition of 
  $\size$ (Definition~\ref{d:size}) has some key differences
  from this informal version.  
  }
Thus, in computing $\height(t)$, the number of recursion steps is
$\size(t)$.
Moreover, all the $\Nat$s produced during the course of this
computation are of size at most $\size(t)$ and the computations
involving them (i.e., $\Succ(\Max(nl,nr))$) are clearly
polynomial-time in the sizes of $nl$ and $nr$.  Thus, under the
dynamic programming evaluation scheme $\lam{t}\height(t)$, and thus
$\lam{n}\height(\grow(n))$, appear to be feasibly computable (i.e., 
computable within time polynomial in $\size(t)$).  Note that
$\lam{n}\treeSize(\grow(n))$ remains infeasible under dynamic
programming evaluation as it still involves an exponential blow-up.

The focus of \citet{gen:rem:2010} and \citet{Avanzini2018OnSM} was
to use their insight on dynamic programming evaluation of structural
recursions to help resolve a long-standing problem in implicit
complexity theory.  \emph{Implicit complexity theory} seeks to
characterize computational complexity classes via restricted
programming or logic formalisms.  Standard characterizations of
complexity classes usually involve a low level machine model and
explicit resource bounds based on the size of the machine's input.
In contrast, an implicit complexity characterization captures a
complexity class without reference to either a machine model or
explicit resource bounds.  Implicit complexity techniques worked
well to handle feasible structural recursions over \emph{sequential}
data types (e.g., strings, lists), but seemingly broke down on
structural recursions on \emph{branching} data types (e.g., $\Tree$
above).  \citet{gen:rem:2010} and \citet{Avanzini2018OnSM} showed
what was needed was a shift to a cleverer operational semantics.
They demonstrated this by generalizing Leivant's
\citeyearpar{Leivant:FM2} early work on tiered/ramified  
formalisms to produce
tiered functional algebras over structured data that could 
handle branching structural recursions under what we have
been calling the dynamic programming evaluation.  Moreover, 
they formalized the dynamic programming evaluation scheme in a rewrite
system which they proved polynomial-time sound.

A question not addressed in \citep{gen:rem:2010,Avanzini2018OnSM} was
that of completeness, that is, 
\emph{does their formalism have a program
for each function computable by a feasible structural recursion? } 

We conjecture, no.  In particular, we suspect that the $\height$
function defined above is not computable in their formalism.
However, their formalism can easily compute
$\mathit{heightMod}\of \Tree\times \Nat \to\Nat$ such that
\begin{align*}
   &\mathit{heightMod}(t,n) = (\text{the height of $\Tree$ $t$}) \bmod n.
\end{align*}
Thus the problem comes down to extracting from $t$ any sort of 
upper bound on its height; that seems hard if not impossible
within the constraints of their formalism.

\subsection*{Our results}

We present a sound and complete implicit-complexity formalism for 
feasibly computable structural recursions over inductively 
defined data structures.  We do this as follows.
\begin{itemize}

  \item
First, \S\ref{S:Smi} introduces the 
formalism $\Smi$ for structural (primitive) recursions on inductively 
defined data together with introducing two operational semantics for $\Smi$,
\begin{description}
   \item[TD:] for
the standard top-down evaluations of structural recursions
and \item[DP:] for the dynamic-programing  evaluations of structural 
recursions.  
\end{description}
We also introduce a notion of \emph{cost} for both the 
TD and DP 
operational semantics. The runtime complexity of $\Smi$ programs
as measured by our notions of cost will be obviously polynomially 
related to the runtimes of said programs under a straightforward
RAM-based interpreter for $\Smi$ under the appropriate operational semantics.   Thus polynomial-cost under the TD and the DP operation semantics gives us two robust notions of feasible structural
recursion over inductively defined data.  

  \item
Then \S\ref{S:RSmi} introduces $\RSmi$, a normal/safe 
\citep{BellantoniCook} ramified version of $\Smi$.  Despite surface 
differences, $\RSmi$ is roughly comparable to the formalism of 
\citep{Avanzini2018OnSM}; see \S\ref{S:compare}.  

  \item 
In \S\ref{S:seq}, $\RSmi$ is shown complete for $\Smi$ functions 
$f\of\gamma_1\to\gamma_0$ that are TD polynomial-cost where $\gamma_1$ 
is an hereditarily sequential type (i.e., involves no branching data types).

  \item 
Next \S\ref{S:serial}  shows that each DP polynomial-cost 
$\Smi$-computable  $f\of\gamma_1\to\gamma_0$ can be factored as
\begin{align} \label{e:fact}
   &  f = \deserialize_{\gamma_0}\circ \widehat{f} \circ \serialize_{\gamma_1}
\end{align}
where:
\begin{inparaenum}[\em(i)]
  \item 
    $\serialize_{\gamma_1}$  DP polynomial-cost maps $\gamma_1$-values 
    to hereditarily sequential forms, 
  \item 
  	$\deserialize_{\gamma_0}$ TD polynomial-cost maps 
  	hereditarily sequential forms to type-$\gamma_0$ values, 
  \item 
  	$\widehat{f}$ is TD polynomial-cost $\Smi$-computable function
  	over hereditarily sequential types.
\end{inparaenum}
  \item 
Finally, \S\ref{S:completing} introduces $\RSmii$ which 
consists of $\RSmi$ with the addition of a \emph{compressed size 
function}, $\cs_\delta$, (Definition~\ref{d:sizes}) for each inductive
data-type $\delta$, i.e., for each $\delta$-value $v$, $\cs_\delta(v)$ is 
size of the version of $v$ with maximal structure sharing (and, so, 
minimal size). 
By using a version of the classic directed acyclic graph compression
algorithm of \citep{DST:comm:sub80}, each  $\cs_\delta$ is 
DP polynomial-cost $\Smi$-computable; hence, 
$\RSmii$ preserves $\RSmi$'s polynomial-soundness.  Moreover, 
using the $\cs_\delta$-functions and the aforementioned dag compression
algorithm, we show that we can compute the $\serialize_\gamma$ functions
in $\RSmii$. That plus the factorization of \eqref{e:fact} yields 
that each DP polynomial-cost $\Smi$ function is 
$\RSmii$ computable.
\end{itemize}

\subsection*{Related work}

Our work here is a response to the papers
\citep{gen:rem:2010,Avanzini2018OnSM}.   It is also an offshoot 
our yet to be released \citep{DR:23} where we study feasible
structural recursions and corecursions. 

Implicit computational complexity theory began with the 
work of \citet{BellantoniCook} and  \citet{Leivant:FM2} and the
ideas from these papers still exert a strong influence on 
current work, including this paper.

The work of Neil Jones, especially 
\citep{Jones93,BenAmram-Jones-2000} and \citep{Jones:book}
 strongly shaped our approach to the problems considered below.


\section{Semantic preliminaries} 

$\SetCat$, the category of sets and total functions, suffices as 
the semantic setting of this paper's programming formalisms.
Types are thus interpreted as sets and the type constructors:
product ($\times$),
coproduct ($+$), and
exponentiation ($\to$) 
have their usual $\SetCat$-interpretations and are right associative.
Let
$\pi_1\of A_1\times A_2 \to A_1$ and
$\pi_2\of A_1\times A_2 \to A_2$ 
be the canonical
product projections
and 
$\iota_1\of A_1\to A_1+A_2$ 
and $\iota_2\of A_2\to A_1+A_2$
be the canonical coproduct injections.
Also let $()$ be a $\SetCat$-constant denoting the $0$-tuple.

\emph{Notation:} 
For $1\leq j\leq \ell$, let $\inj_{j,\ell}\of\, A_j\to (A_1+\dots+A_\ell)$ 
be the canonical injection of $A_j$ into $A_1+\dots+A_\ell$
as given by:
$\inj_{\ell,\ell} = \inj_2^{(\ell-1)}$
and
$\inj_{j,\ell} = \inj_1\circ \inj_2^{(j-1)}$ when $ j<\ell$.
 

\paragraph{Polynomial functors} 
$\Id$ is the \emph{identity functor}; 
$\mathsf{C}_A$ is the $A$-\emph{constant functor} for a given set
$A$; and $F_1\times F_2$ and $F_1+F_2$
respectively denote the \emph{product} and \emph{coproduct} of
functors $F_1$ and $F_2$.  These act as Figure~\ref{fig:poly}.
A \emph{polynomial functor} $P$ is a $\SetCat$-endofunctor
inductively built from $\Id$, constant functors, products, and
coproducts;\footnote{%
  \citet{jacobs:17} calls these \emph{simple polynomial functors}.
  Grander notions polynomial functor include, for example, 
  the exponential functor, $X\mapsto X^A$.  See 
  \citep{jacobs:17}.
} 
thus grammatically:
\begin{align*}
 P \;\is\; \Id \synsep \mathsf{C}_{A} \synsep P_1\times P_2 \synsep P_1+P_2
\end{align*}
The constant-objects in our polynomial functors will always be
interpretations of types. Polynomial functors are thus type constructors.
The \emph{degree} of a polynomial functor $P$ is the degree of
$P(X)$ as an ordinary polynomial over $X$.\sidenote{E.g., 
$P_\Nat(X) = \mathsf{C}_\Unit + X$ is degree~1 and 
$P_\Tree(X) = \mathsf{C}_\Unit + X\times X$ is degree~2.} 
\begin{figure}[t]
\begin{align*}
  \Id X  &= X & \Id f  &= f \\
  C_A X  &= A & C_A f  &= \id_A
  \\
  (F_1\times F_2) X  &= F_1 X\times F_2 X
  &
  ((F_1\times F_2)\, f) \,(a,b)
  &= ((F_1\, f)(a),(F_2 \,f)(b))
  \\
  (F_1+F_2)X &= F_1 X+F_2 X
  & 
  \Quad1
  ((F_1+F_2)\,f)\,(\inj_j\,a) &= \inj_j((F_j\, f)\, a)
\end{align*}
\caption{\raggedright Action of polynomial functors, where $F_1$ and $F_2$ are $\SetCat$-endofunctors,
$f\of X\to Y$, and $A$, $X$, and $Y$ are sets.}\label{fig:poly}
\end{figure}

\section{Structural recursions}\label{S:Smi}

We formalize general first-order structural-recursive function
definitions by $\Smi$, a first-order typed lambda calculus borrowed
from \citep{DR:23}.\sidenote{%
  $\Smi$ is a restriction of $S_1$ from \citep{DR:23}, a formalism
  that includes codata and structural corecursions.}
The $\Smi$-types include products, coproducts, and inductively
defined data types.  The $\Smi$-definable functions over $\Nat$
correspond to the usual primitive recursive functions over the
natural numbers.

We have two uses for $\Smi$. 

\emph{Use 1: Reference models of computation.}
We shall consider $\Smi$ under two different operational semantics,
one using the standard top-down evaluation strategy for structural
recursions and the other using the dynamic-programming strategy of
\citep{gen:rem:2010,Avanzini2018OnSM}.  We pair both of these
operational semantics with an associated notion of the cost of a
computation.  From these we obtain two notions of
``feasible/polynomial-time'' structural recursions which will be our
standards in the following.

\emph{Use 2: A base for an implicit complexity ramification.}  Our
implicit complexity formalism $\RSmi$ is a normal/safe ramification
of $\Smi$.  This ramification expands $\Smi$'s type system and 
modifies the typing of certain constructs, but $\Smi$'s (dynamic 
programming)  operational semantics is used unchanged for $\RSmi$. 
We borrow $\RSmi$ from \citep{DR:23}.\sidenote{In \citep{DR:23}
$\RSmi$ is extended to $\RSi$, a formalism for feasible recursions 
and corecursions.}
  $\RSmi$ was inspired by the $BC$ 
function algebra from \citep[\S5]{BellantoniCook}.

\subsection{$\Smi$ types}\label{S:Smi:types}

The $\Smi$-types are given by the following.
\begin{align*}
  T & \;\is\; G \synsep G\to G
  && \hbox{($\Smi$ types)}
  \\
  G & \;\is\; N 
      \synsep G+G \synsep G \times G
  && \hbox{(Ground types)}
  \\     
  N     & \;\is\; \Unit \synsep \muP 
  && \hbox{(Normal base types)}
  \\
  P & \;\is\; \Id \synsep  \mathsf{C}_N \synsep P+P \synsep P \times P
  && \hbox{(Polynomial functors over $N$)}  
\end{align*}
The $\Smi$-types, $T$, consist of ground (level-0) types $G$ and 
level-1 types over $G$, i.e., types of the form $G\to G$.
The ground types consist of normal base types $N$ and sums and products
of ground types.  Types of the form $\muP$ are called \emph{inductive data types}, or usually just \emph{data types}. 
The normal base types\sidenote{%
  We call these \emph{normal} base types
  to be consistent with the terminology for $\RSmi$  below.
}
consist of data types together with $\Unit$, the type with () as its
sole inhabitant.  Semantically, $\muP$ is the \emph{least fixed
  point} of $P$, i.e., it is the smallest set $X$ isomorphic to
$P X$.  As polynomial functors are monotone, they have least fixed
points by the Knaster-Tarski Theorem \citep{jacobs:17}.
\textbf{N.B.} Data types, such as $\lfp{t}t$ can be 
empty/uninhabited.
We exclude empty/uninhabited data types from $\Smi$ typing judgments.\sidenote{\label{fn:mt}%
   \emph{Testing for emptiness:}  
   Suppose $P$ is a polynomial functor in which each constituent 
   base type is nonempty. Let $\mathbf{0}$ = the empty base type. 
   (Extensionally, there is only one.)  Since $P$ is monotone, $\muP
   \equiv \cup_{j\geq 0} P^{(j)}(\mathbf{0})$, and hence, 
   $\muP\equiv\mathbf{0}$
   iff $P \mathbf{0} \equiv \mathbf{0}$.  Checking 
   whether $P \mathbf{0} \equiv \mathbf{0}$ is straightforward.}

\emph{Conventions:} 
  $\gamma$ \ ranges over  ground types and
  $\delta$ \ ranges over data types.
  Sometimes in place of $\muP$ we write $\lfp{t}(P t)$ and in place
  of $\mathsf{C}_\gamma$ we simply write $\gamma$.  Also, the
  ML-like definitions of types as in \S\ref{S:intro} can be treated
  as syntactic sugar for the $\muP$-style definitions.  For example,
  the inductively defined types $\Nat$ and $\Tree$ from
  \S\ref{S:intro} and $\List_\gamma$ (lists over type-$\gamma$
  items) can be defined equivalently as follows.
\begin{align*}
   \Nat\;\; &= \Zero \synsep \Succ \Of \Nat
            \Quad{4.5} = \lfp{t}(\Unit + t)
            \Quad{1.8}   = \mu(\mathsf{C}_\Unit + \Id)    
   \\
   \Tree\;  &= \Leaf \synsep \Branch \Of \Tree\times\Tree
            \Quad{0.2} = \lfp{t}(\Unit+t\times t) 
            \Quad{0.3} =  \mu(\mathsf{C}_\Unit + \Id\times\Id) 
   \\
   \List_\gamma & = \Empty \synsep \Cons \Of \gamma\times\List_\gamma    
             \Quad{1.2} = \lfp{t}(\Unit + \gamma\times t)
             \Quad{0.1} = \mu(\mathsf{C}_\Unit+\mathsf{C}_\gamma\times \Id)
\end{align*}
\emph{Convention:} For lists we shall borrow some more ML-notation
for syntactic sugar, namely, $[\,]$ for $\Empty$ and $a :: as$ for 
$\Cons\,(a,as)$.

\begin{definition}[Sequential and branching types] \label{d:sequential} 
A ground type $\gamma$ is:
\begin{asparaenum}[(a)]
  \item \label{i:seq}
    \emph{sequential} iff 
    \begin{inparaenum}[(i)]
      \item $\gamma=\Unit$; or
      \item $\gamma=\gamma_1+ \gamma_2$ or $=\gamma_1\times\gamma_2$
        with both $\gamma_1$ and $\gamma_2$  sequential; or 
      \item $\gamma = \mu P$ where the degree of $P$ is at most one;
    \end{inparaenum}
  \item \label{i:hseq}
    \emph{hereditarily sequential} iff 
    each polynomial functor occurring within $\gamma$ is 
    of degree at most one; and
  \item \label{i:branching}
    \emph{branching} iff $\gamma$ fails to be sequential. 
\end{asparaenum}
\end{definition}

\subsection{$\Smi$ syntax and typing}

\begin{figure*}[p]
\textsc{$\Smi$ typing rules}
\begin{gather*}
    \rulelabel{Id-I}
    \irule{ 
                }{\Gamma,\,x\of\gamma\entails x\of\gamma} 
	\Quad{2.5}
    \rulelabel{$\to$-I}
    \irule{ 
                \Gamma,\,x\of\gamma_1\entails e\of\gamma_0}{
        \Gamma\entails (\lam{x} e) \of\gamma_1\to\gamma_0} 
        \Quad{2.5}
    \rulelabel{$\to$-E}\;
    \irule{ 
                \Gamma\entails e_0\of\gamma_1\to\gamma_0
                \Quad{1.5}
        			\Gamma\entails e_1\of\gamma_1}{
                \Gamma\entails (e_0\;e_1)\of \gamma_0}
\\[1ex]                
   \rulelabel{$\Unit$-I}
   \irule{}{\Gamma\entails () \of \Unit}
	\Quad{2.5}
    \rulelabel{$\times$-I}
    \irule{ 
	            \Gamma\entails e_1\of\gamma_1
	            \Quad{2}
	            \Gamma\entails e_2\of\gamma_2}{
                \Gamma\entails (e_1,e_2)
                \of\gamma_1\times\gamma_2}       
	\Quad{2.5}
    \rulelabel{$\times$-E$_j$}
    \irule{ 
	            \Gamma\entails e\of\gamma_{1}\times\gamma_2}{
                \Gamma\entails (\proj_j e)\of\gamma_j} 
\\[1ex]
    \rulelabel{$+$-I$_j$}
    \irule{ 
	            \Gamma\entails e\of\gamma_{j}}{
                \Gamma\entails (\inj_j e)\of\gamma_1+\gamma_2} 
    \Quad{2.5}
    \rulelabel{$+$-E}
    \irule{
	\Gamma\entails e_0\of \gamma_1+\gamma_2
    \Quad{1.5}             
    \set{\Gamma,x_j\of\gamma_j\entails e_j\of \gamma}_{j=1,2}}%
    {\Gamma\entails (\Case e_0 \Of \,
        (\inj_1 x_1) \Rightarrow e_1 ;\;
        (\inj_2 x_2) \Rightarrow e_2)\of\gamma}
\\[1ex]
  \rulelabel{$\Cn{\muP}$-I} 
  \irule{\Gamma\entails e\of P(\muP)}%
  {\Gamma\entails (\Cn{\muP} e)\of\muP}
\Quad{2.5}
  \rulelabel{$\Dn{\muP}$-I} 
  \irule{\Gamma\entails e\of\muP}{\Gamma\entails (\Dn{\muP} e) \of P(\muP)}
\Quad{2.5}
  \rulelabel{$\fold_{\muP}$-I}
  \irule{\Gamma \entails \lam{x}e_0\of P(\gamma)\to\gamma
         \Quad{1} 
         \Gamma\entails e_1\of\muP}%
        {\Gamma\entails \fold_{\muP}\; (\lam{x}e_0) \;e_1\of \gamma}   
\end{gather*}

\medskip
\textsc{$\Smi$ top-down evaluation rules}
\begin{gather*}
  \rulelabel{Env} 
  \irule{}{%
        x\theta\yields \theta(x)}  
\Quad{2.5}
  \rulelabel{$\lambda$-App} 
  \irule{
         e_1\theta\yields v_1\Quad{1}
         e_0\theta_0[x\mapsto v_1]\yields v 
         }{%
         ((\lam{x}e_0)\;e_1)\theta\yields v}\\[-5ex]
\\[1ex]
  \rulelabel{Unit} 
  \irule{}{()\theta \yields \underline{()}}
\Quad{2.5}
  \rulelabel{Pair} 
  \irule{e_1\theta\yields v_1 \Quad{1.25}
         e_2\theta_1\yields v_2}{%
         (e_1,e_2)\theta \yields \upair{v_1}{v_2}}
\Quad{2.}                
  \rulelabel{Proj$_j$} 
  \irule{e\theta\yields \upair{v_1}{v_2}}{%
         (\proj_j e)\theta \yields v_j}
\\[1ex]
  \rulelabel{Inj$_j$} 
  \irule{e\theta\yields v}{%
         (\inj_j e)\theta\yields (\uinj_j v)}
  \Quad{2.5}
  \rulelabel{Case} 
  \irule{e_0\theta\yields (\uinj_j v_j)\Quad{1.25} 
         e_j\theta[x_j\mapsto v_j]\yields v}{%
         (\Case e_0  \Of  \,
        (\inj_1 x_1) \Rightarrow e_1 ;\;
        (\inj_2 x_2) \Rightarrow e_2)\theta\yields
         v }
\\[1ex]
  \rulelabel{Const$_{\mu P}$}
  \irule{e\theta \yields v}%
  {(\Cn{\mu P} e)\theta \yields (\uCn{\muP} v)} 
  \Quad{2.5}
  \rulelabel{Destr$_{\mu P}$}
  \irule{e\theta \yields (\uCn{\mu P}\,v) }%
        {(\Dn{\mu P} e)\theta \yields v } 
  \Quad{2.5}
    \rulelabel{Fold$_{\mu P}$}
  \irule{f(\,g(\Dn{\mu P} e)\,) 
      \theta\yields v }%
  {(\fold_{\mu P} f\,e)\theta \yields v}
  \sidecond{\star}
  \\[1ex]
  (*)\;\parbox[t]{0.38\textwidth}{$(P\,(\fold_{\mu P} f))$ simplifies to $S$-term $g$ per the polynomial functor reduction rules. }\\[-1.5cm]
\end{gather*}
\caption{$\Smi$ typing and evaluation rules. Above $j\in\set{1,2}$
and each ground type is required to be inhabited.} 
\label{fig:Smi:typing:eval}
\end{figure*}

$\Smi$ has the following fairly standard raw syntax.
\begin{align*}
  E \;&\is\;
        X
        \synsep (\lam{X}E)
        \synsep (E_1 \; E_2)
        && \hbox{($\lambda$-calculus)}
	    \\
	    &\Quad1
	    \synsep ()
        \synsep (E_1,E_2)
        \synsep (\proj_1 E)
        \synsep (\proj_2 E)
        && \hbox{(products)}
		\\
        &\Quad1
        \synsep (\inj_1 E)
        \synsep (\inj_2 E)
        \synsep (\Case E_0  \Of  \,
        (\inj_1 X_1) \Rightarrow E_1 ;\;
        (\inj_2 X_2) \Rightarrow E_2) 
        && \hbox{(coproducts)}
        \\
        &\Quad1
        \synsep (\Cn{\muP} E)
        \synsep (\Dn{\muP} E)
        \synsep (\fold_{\muP} E_0 \,E_1)
        && \hbox{(data)}
        \\
        X & \;\is\; \hbox{identifiers}
\end{align*}
Figure~\ref{fig:Smi:typing:eval} provides the typing rules for
$\Smi$. 
These are also fairly standard, but the constructs related
to data types need some discussion.  Let $\delta = \muP$.  For
type-$\delta$ data, the constructor function
$\Cn\delta \of P(\delta)\to\delta$ and the destructor function
$\Dn\delta \of \delta\to P(\delta)$ together witness the isomorphism
between type-$P(\delta)$ data and type-$\delta$ data.  A desugaring
of the constructors of the ML-like definitions of $\Nat$ and $\Tree$
gives $\Zero = \Cn\Nat(\inj_1 ())$,
$\Succ = \lam{x} \Cn\Nat(\inj_2 x)$, $\Leaf =\Cn\Tree(\inj_1())$,
and $\Branch = \lam{(x,y)}\Cn\Tree(\,(x,y)\,)$.
The recursor $\fold_\delta$ needs to satisfy
\begin{align*} 
  (\fold_\delta f)\circ \Cn\delta & = f \circ P(\fold_\delta f), \;
  \hbox{ for all $f\of P(\gamma)\to\gamma$.}
\end{align*}
This equation expresses structural (\emph{primitive}) recursion over
$\delta$ \citep{Gibbons2002:Calculating}.
For example, 
$\plus \of \Nat\times\Nat \to\Nat$, that adds two $\Nat$s, is given by:
\begin{align*}
  &\lam{z} \left(\strut\fold_\Nat \, \left(\lam{w}\Case w \Of\, (\inj_1 w_1) \Rightarrow (\pi_1\,z); \; 
   (\inj_2 w_2) \Rightarrow \Cn\Nat(\inj_2 w) \right)\; (\pi_2\, z) \right)
\end{align*}
Definitions like this are  hard to read.  Thus, we will often give 
function definitions slathered with syntactic sugar, e.g., for $\plus$:
\begin{align*}
  &\lam{(x,y)} (\fold_\Nat \, f\; y) \quad \hbox{where  } f(\inj_1 w_1) = x; 
  \;\; f(\inj_2 w_2) = \Succ(w_2)
\end{align*}
or even restate a $\fold$ expression in terms of the equivalent 
structural recursive equations, e.g.:
\begin{align*}
  & \plus(x,\Zero) \Quad{1.4} = x\\
  & \plus(x,\Succ(y)) = \Succ(\plus(x,y))
\end{align*}

\subsection{$\Smi$  evaluation semantics}

\paragraph{Top-down  evaluation semantics}
Figure~\ref{fig:Smi:typing:eval} defines the evaluation relation,
$\yields$, that provides an operational semantics for $\Smi$ under
the top-down (non-dynamic-programming) evaluation
strategy.
\emph{Terminology:} An \emph{evaluation relation} relates closures
to values.  A \emph{closure}~$(\Gamma\entails e\of\tau)\theta$
consists of $\Gamma\entails e\of\tau$, a type judgment, and $\theta$, an
environment~for type context $\Gamma$.  (We write $e\theta$ for
$(\Gamma\entails e\of\tau)\theta$ when $e$'s typing is understood.)
$\theta$ is an \emph{environment for $\Gamma$} (notation:
$\theta\of\Gamma$) when $\theta\of \dom(\Gamma)\to\hbox{Values}$ is
such that, for each $x\in\dom(\Gamma)$, $\theta(x)$ is a
type-$\Gamma(x)$ value. 
A \emph{value}  is a \emph{value term graph} (see \S\ref{S:vtg}  below)
that represents a ground-type closure $e\theta$ where $e$ is
normal form.

Note that for simplicity, the \emph{Fold}$_\muP$ rule handles the
$(P\,(\fold_{\muP}\, f))\leadsto g$ reduction off-stage.  The rules
for this reduction are just a recasting of the functional equations
of Figure~\ref{fig:poly}.  That is,
$(\Id\, f) \leadsto f$,
$(C_A\, f) \leadsto \lam{x}x$, 
$((F_1\times F_2)\, f)\leadsto \lam{x}(((F_1\, f)(\pi_1(x)),(F_2 \,f)(\pi_2(x))))$, and 
$((F_1+F_2)\,f)  \leadsto \lam{x}(\Case x \Of \,
(\inj_1 x_1) \Rightarrow (\inj_1 ((F_1\,f)\,x_1));
(\inj_2 x_2) \Rightarrow (\inj_2 ((F_2\,f)\,x_2)))$. These rules amount to an
evaluation of $(P\,f)$ as an ordinary polynomial. 

\begin{example}\label{ex:g}
Consider $\List_{\gamma} = \muP$ where $P\, X =
\Unit+\gamma\times X$.  Then for 
$\fold_{\List_{\gamma}}$, we have $(P\; (\fold_{\List_{\gamma}} (\lam{z}e_0))
\leadsto g$ where
\begin{align*} 
g = \lam{w}\lefteqn{\Case w \Of\,}\\
         &\;(\inj_1 w_1) \Rightarrow \inj_1((\lam{x}x)\,w_1);
         \\ \nonumber
         &\;(\inj_2 w_2) \Rightarrow \inj_2\left(\left(\lam{x}\left(((\lam{y}y)\,(\proj_1 x)),\
                                           (\fold_{\List_{\gamma}} (\lam{z}e_0)\, (\proj_2 x))\right)\right)
             \; w_2\right).
\end{align*}
\end{example}

\paragraph{Dynamic programming evaluation semantics} 
In this paper we shall avoid dealing with the formal details of
the dynamic programming evaluation semantics for $\Smi$.  For these
see either \citep{DR:23} or \citep{Avanzini2018OnSM,gen:rem:2010}.


\subsection{The structure of values}\label{S:vtg}

\emph{Value term graphs} 
(or simply, \emph{values})
are particular labeled 
rooted-dags;  $v$, with and without decorations,
ranges over value term graphs.  Value term graphs 
(e.g., see Figure~\ref{fig:grow1}) 
are of the form:
\begin{itemize}
  \item 
    $\uep$ as in the \emph{Unit}-rule 
    consists of a single vertex labeled $\uep^\Unit$;
  \item 
    $(\uinj_j v)$ as in the $\mathit{Inj}_j$-rule 
    consists of  $v$ together with 
    a separate root vertex labeled by 
    $\uinj_j^{\gamma_1+\gamma_2}$ that has an out-edge to $v$'s root;
  \item 
    $(\uCn\muP v)$ as in the $\mathit{Const}_\muP$-rule 
    consists of $v$ together with 
    a separate root vertex labeled by 
    $\uCn\muP$ that has an out-edge to $v$'s root; and
  \item 
  	$\upair{v_1}{v_2}$ as in the $\mathit{Pair}$-rule
	consists of the union of $v_1$ and $v_2$ 
	together with a separate root vertex labeled by 
    $\underline{(,\!)}^{\gamma_1\times\gamma_2}$ 
    that, for $i=1,2$, has an out-edge 
    (labeled by $\proj_i$) to $v_i$'s root. 
\end{itemize}

\emph{Example:} The value term graph produced by evaluating
$\grow(\numeral1)$ is shown in Figure~\ref{fig:grow1}.
\begin{marginfigure}[1cm]
\setlength{\unitlength}{3.2pt}
\hfill\begin{picture}(15,80)(-7.5,6)
\thicklines \setstatediam{8}\setrepeatedstatediam{8}
\letstate C1=(0,85) \drawstate(C1){$\uCn{\mkern2mu\Tree}$}
\letstate C2=(0,70) \drawstate(C2){$\uinj_2$}
\letstate C3=(0,55) \drawstate(C3){$\underline{(,)}$}
\letstate C4=(0,40) \drawstate(C4){$\uCn{\mkern2mu\Tree}$}
\letstate C5=(0,25) \drawstate(C5){$\uinj_1$}
\letstate C6=(0,10) \drawstate(C6){$\uep$}
\drawtrans(C1,C2){}
\drawtrans(C2,C3){}
  \setprofcurve{-2.5} 
  \drawcurvedtrans[r](C3,C4){\scriptsize$\pi_1$ }
  \setprofcurve{2.5} 
  \drawcurvedtrans(C3,C4){\scriptsize$\pi_2$}
\drawtrans(C4,C5){}
\drawtrans(C5,C6){}
\end{picture}\hfill{\ }
\caption{$\grow(\numeral1)$'s value.}\label{fig:grow1}
\end{marginfigure}
The type superscripts on vertices help make value term graphs
explicitly typed, but we typically omit writing these superscripts
unless they are needed for clarity.

\begin{definition}[Bisimilarity] \label{d:bisim}
Two value term graphs, $v$ and $\hat{v}$, are
\emph{bisimilar}, written $v\sim \hat{v}$, iff $v$ and $\hat{v}$
have the same type $\gamma$ and:
\begin{asparaitem}
  \item 
  	$\gamma=\Unit$ (so $v=\uep=\hat{v}$); or
  \item 
  	$\gamma=\gamma_1\times\gamma_2$,
  	$v=\upair{v_1}{v_2}$, $\hat{v} = \upair{\hat{v}_1}{\hat{v}_2}$, 
  	$v_1\sim \hat{v}_1$, and $v_2\sim\hat{v}_2$; or 
  \item 
  	$\gamma=\gamma_1+\gamma_2$,
  	$v=(\uinj_i v')$, $\hat{v} = (\uinj_j \hat{v}')$, $i=j$, and 
  	$v'\sim \hat{v}'$; or
  \item 
  	$\gamma=\muP$, $v=(\uCn\muP v')$, $\hat{v}=(\uCn\muP \hat{v}')$, 
  	and $v'\sim \hat{v}'$.   
\end{asparaitem}
\end{definition}

It is clear that any $\Smi$ function maps bisimilar inputs 
to bisimilar outputs.  Hence, $\Smi$ cannot distinguish 
distinct but bisimilar values.\sidenote[][0.5cm]{The 
denotation of a ground-type $\Smi$ term is a bisimilarity equivalence class.}

\begin{definition}   \label{d:size}
  For each value term graph $v$, let $\size(v)=$
  the number of data-type constructor vertices in $v$, i.e., 
  we count the number of vertices in $v$ with 
  $\uCn\muP$-labels, but \emph{not} those with 
  $\uep$-, $\uinj_j$-, and $\underline{(,)}$-labels.
\end{definition}

By a straightforward induction on the structure of $\gamma$ 
one can show: 

\begin{lemma}\label{l:size:bnd}
  For each $\gamma$, there is a constant $k_\gamma$ such that
  for each type-$\gamma$ $v$, we have
  (the total number of 
  vertices in $v$) $\leq k_\gamma\cdot(1+\size(v))$. 
\end{lemma}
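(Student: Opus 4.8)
The plan is to prove the bound by a \emph{charging argument}: a fixed ground type admits only a bounded amount of ``scaffolding'' (that is, $\uep$-, $\uinj_j$-, and $\underline{(,)}$-labelled vertices) hanging directly below each data-constructor vertex, so the number of non-constructor vertices is at most a constant times $\size(v)+1$. Only a preliminary induction on the structure of types is needed to pin down that constant; the main counting step is then almost immediate.

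First I would introduce the \emph{crust} of a value term graph $v$: the set of vertices reachable from $v$'s root along a directed path every vertex of which is a non-constructor vertex (so the crust consists of non-constructor vertices only, and is empty when $v$'s root is itself a $\uCn{\muP}$-vertex). Let $\kappa_\gamma$ be the maximum size of the crust over all type-$\gamma$ values. A routine induction on the structure of $\gamma$ will show that $\kappa$ is finite and bounded syntactically: $\kappa_\Unit=1$; $\kappa_{\gamma_1+\gamma_2}\le 1+\max\bigl(\kappa_{\gamma_1},\kappa_{\gamma_2}\bigr)$; $\kappa_{\gamma_1\times\gamma_2}\le 1+\kappa_{\gamma_1}+\kappa_{\gamma_2}$; and $\kappa_\delta=0$ whenever $\delta$ is a data type. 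The last clause is the crux of termination --- it does not recurse through the $\mu$-binder --- so there is no circularity. The same recursion applied to the ground type $P(\delta)$ (which, by the grammar, is again a ground type, with $\Id$- and constant-occurrences at its leaves) yields a finite bound $\kappa_{P(\delta)}$, computed by structural recursion on the polynomial functor $P$ using $\kappa_\delta=0$ at the $\Id$-leaves.

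The main step would be the following count. Fix a type-$\gamma$ value $v$ with root $r$, let $U$ be its set of $\uCn{\muP}$-vertices (so $\lvert U\rvert=\size(v)$) and $W$ its set of non-constructor vertices. Each $u\in U$ has a unique child $y_u$, of type $P_u(\delta_u)$ where $\delta_u=\mu P_u$ is the data type of $u$, and the sub-dag rooted at $y_u$ is a value term graph of type $P_u(\delta_u)$. I claim $W$ is the union of the crust of $v$ together with, for each $u\in U$, the crust of the sub-dag rooted at $y_u$: trace a path from $r$ to any $w\in W$; if it meets no constructor vertex then $w$ lies in $v$'s crust, and otherwise, taking $u$ to be the last constructor vertex on the path, the remaining segment from $y_u$ to $w$ consists of non-constructor vertices, so $w$ lies in the crust below $u$. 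Hence, by subadditivity of set size, $\lvert W\rvert\le\kappa_\gamma+\sum_{u\in U}\kappa_{P_u(\delta_u)}\le\kappa_\gamma+M_\gamma\cdot\size(v)$, where $M_\gamma=\max\{\,\kappa_{P(\delta)} : \delta=\mu P \text{ a data type occurring in }\gamma\,\}$ is finite since only finitely many data types occur in the finite type expression $\gamma$. Therefore the total vertex count of $v$ is $\lvert U\rvert+\lvert W\rvert\le(1+M_\gamma)\cdot\size(v)+\kappa_\gamma\le\bigl(1+M_\gamma+\kappa_\gamma\bigr)\cdot(1+\size(v))$, so $k_\gamma:=1+M_\gamma+\kappa_\gamma$ works.

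I expect the main obstacle to be exactly the data-type case, which also explains why a direct induction on the structure of $\gamma$ proving ``$\lvert v\rvert\le k_\gamma(1+\size(v))$'' outright does not close: a $\muP$-value is built from a $P(\muP)$-value, and $P(\muP)$ is not a structurally smaller type --- it contains $\muP$ again --- so there is no well-founded type recursion that threads through the scaffolding of $P$, and the constant-sized scaffolding contributed per constructor cannot be amortized by such an induction. The charging argument circumvents this by paying for that scaffolding out of the budget of the constructor immediately above it. A secondary, minor wrinkle is structure sharing: a non-constructor vertex may be reachable from several constructors, but since the bound comes from a union rather than a disjoint sum, each vertex is counted at most once and sharing only decreases the left-hand side.
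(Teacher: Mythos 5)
Your argument is correct. The paper itself offers no proof beyond the assertion that the bound follows ``by a straightforward induction on the structure of $\gamma$'', and you have correctly put your finger on why that one-liner is less straightforward than advertised: in the $\gamma=\muP$ case the child of a $\uCn{\muP}$-vertex has type $P(\muP)$, which is not structurally smaller, so the intended induction must implicitly be a nested one --- outer induction on the type, inner induction on the well-founded structure of the inductive value, with the per-constructor scaffolding bounded by a constant depending on $P$. Your charging argument reaches the same constant by a different decomposition: every non-constructor vertex is charged either to the root (the crust of $v$, bounded by $\kappa_\gamma$) or to the unique constructor vertex immediately above it (bounded by $\kappa_{P(\delta)}$), and the crust bound itself needs only a terminating type recursion because it bottoms out at data types with $\kappa_\delta=0$. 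This buys you a single global count in place of the nested induction, and it handles structure sharing cleanly, since the decomposition is a union rather than a disjoint sum. The one point worth stating explicitly is the fact the paper also uses silently in the proof of Lemma~\ref{l:rep:size:bnd}: the data types labelling constructor vertices of a type-$\gamma$ value all come from the finite set of types occurring (hereditarily, i.e.\ including inside the bodies of nested $\mu$'s) in $\gamma$, which is exactly what makes your $M_\gamma$ well defined and finite.
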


\begin{definition}\label{d:sizes} 
  For a value term graph $v$, the \emph{tree size} of $v$ (written: $\ts(v)$) and 
the \emph{compressed size} of $v$ (written: $\cs(v)$) are given by:
\begin{align*}
  &\ts(v)= \max\set{\size(v') \suchthat v \sim v'}.\\
  &\cs(v)= \min\set{\size(v') \suchthat v \sim v'}.
\end{align*}
For each  $\gamma$, let 
$\ts_\gamma$ and $\cs_\gamma$ be the restriction 
  to type $\gamma$-values of $\ts$ and $\cs$, respectively,
\end{definition}

For each $\gamma$,
it is clear that $\ts_\gamma$ is $\Smi$-computable
(as a $(\gamma\to\Nat)$-function), but a bit
less obviously, $\cs_\gamma$ is feasibly $\Smi$-computable.
We shall return to this point in \S\ref{S:completing} below. 

\subsection{The cost of $\Smi$ evaluation}

We are concerned with two operational semantics for $\Smi$:
the top-down (abbreviated, TD)
strategy of Figure~\ref{fig:Smi:typing:eval} and 
the dynamic-programming (abbreviated, DP)
strategy formalized in 
\citep{Avanzini2018OnSM,DR:23}.  Our notion of the cost of an evaluation
under either of these semantics is based on the size of the evaluation's derivation tree.\sidenote{If you prefer machines and step counting for your notions of cost,  see \S\ref{S:seq} below.} 

\begin{definition} \label{d:cost}
Suppose $\BX$ is either TD or DP. 
\begin{asparaenum}[(a)]
  \item \label{i:cost}
  	Suppose $e\theta\yields v$ under the $\BX$ evaluation strategy for 
  	$\Smi$. 
  	The $\BX$-\emph{cost} of this evaluation (written: 
	$\cost_\BX(e\theta)$)
  	is the  number of nodes in the derivation tree for 
  	$e\theta\yields v$.  

  \item \label{i:mono}
    Suppose $\entails_{S^-} f\of\gamma_1\to\gamma_0$.
    We say that $f$ is an $\BX$-\emph{polynomial-cost} 
    (abbreviated, $\BX$-\emph{poly-cost}) function iff there is a 
    polynomial function $q(\cdot)$ such that, for all 
    $\theta\of (x_1\of\gamma_1)$, \;
    $\cost_\BX(\,(f\,x_1)\theta\,)\leq q(\size(\theta(x_1)))$.
  \end{asparaenum}
\end{definition}

\begin{lemma} \label{l:costs}   
Suppose $\entails_{\Smi} f\of\gamma_1\to\gamma_0$. 
\begin{asparaenum}[(a)] 
  \item \label{i:cost:size} 
    $\size((f\,x_1)\theta\hbox{'s value})\leq \size(\theta(x_1)) + \cost_\BX((f\,x_1)\theta)$ 
    for all $\theta\of(x_1\of \gamma_1)$ and where $\BX$ is either 
    is either TD or DP. 
  \item \label{i:TD>DP}
  	If $f$ is  TD-poly-cost, then $f$ 
    is also DP-poly-cost.  
  \item\label{i:DP<TD:seq}
    If $f$ is a DP-poly-cost $\Smi$-function for which 
  	each $\fold_{\muP}$-expression within $f$ has $\muP$ sequential, 
  	then $f$ is also TD-poly-cost.
\end{asparaenum}
\end{lemma}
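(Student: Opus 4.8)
I would prove the three parts of Lemma~\ref{l:costs} essentially in the order given, with part~(\ref{i:cost:size}) as a structural induction, part~(\ref{i:TD>DP}) as a corollary of a general ``DP-cost $\le$ TD-cost'' inequality, and part~(\ref{i:DP<TD:seq}) as the only part requiring real work. Throughout, the governing intuition is: a TD derivation for $\eval{e\theta}{v}$ is a tree whose nodes correspond (up to constant factors absorbed by the rule-by-rule structure) to subcomputations; a DP derivation memoizes repeated subcomputations on shared subdags, so its derivation tree is never larger — but when $\muP$ is sequential, the only data a $\fold_{\muP}$ ever recurses on is (bisimilar to) a list-like spine, so top-down re-evaluation can blow up by at most the discrepancy between $\ts$ and $\cs$, which for sequential types is itself polynomial.

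\medskip\noindent\emph{Part (\ref{i:cost:size}).} The plan is to prove, by induction on the structure of the derivation of $\eval{e\theta}{v}$ (under either semantics), the stronger statement that $\size(v) \le \size(\theta) + \cost_\BX(e\theta)$, where $\size(\theta)$ abbreviates the total $\size$ of the values $\theta$ binds that actually occur in $v$ (or, more crudely, $\sum_{x}\size(\theta(x))$). The point is that the only rule that \emph{creates} a new $\uCn{\muP}$-vertex is \emph{Const}$_{\muP}$, which contributes exactly one node to the derivation tree and exactly one constructor-vertex to the value; \emph{Pair}, \emph{Inj}$_j$, \emph{Unit} add non-constructor vertices (not counted by $\size$); \emph{Proj}$_j$, \emph{Destr}$_{\muP}$, \emph{Case} only select substructures of already-built values; \emph{Env} copies a value from $\theta$; and the $\lambda$-\emph{App} and \emph{Fold}$_{\muP}$ rules just pass control along. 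So each constructor-vertex in $v$ is charged either to a distinct \emph{Const}$_{\muP}$ node of the derivation or to $\theta$. Under DP the same accounting works because memo look-ups only make the derivation tree larger relative to the values produced, not smaller; if the formal DP rules are taken from \citep{DR:23} one simply checks the analogous case split there.

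\medskip\noindent\emph{Part (\ref{i:TD>DP}).} The plan is to establish once and for all that $\cost_{\mathrm{DP}}(e\theta) \le \cost_{\mathrm{TD}}(e\theta)$ for every closure, from which DP-poly-cost follows immediately from TD-poly-cost with the \emph{same} polynomial $q$. This is the standard ``memoization never hurts'' fact: one exhibits a map from the DP derivation tree into the TD derivation tree that is injective on the non-look-up nodes (a look-up node in the DP tree corresponds to a subtree already accounted for), by induction on the DP derivation; the only subtlety is the $\fold_{\muP}$ case, where the DP evaluator walks the input dag once per vertex while the TD evaluator walks the unfolded tree, so the DP node set injects into the TD node set. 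As this inequality is purely about the two operational semantics from \citep{Avanzini2018OnSM,DR:23}, I would cite the relevant soundness/comparison lemma there if one is available and otherwise give the injection explicitly.

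\medskip\noindent\emph{Part (\ref{i:DP<TD:seq}) --- the main obstacle.} Here the hard part is bounding the \emph{blow-up} incurred by abandoning memoization when every $\fold_{\muP}$ recurses on a sequential $\muP$. The plan: first observe that for a sequential data type $\delta = \muP$ with $\deg P \le 1$, any value $v\of\delta$ is bisimilar to a ``spine'' (a list-shaped term graph) and, crucially, $\ts_\delta(v) = \cs_\delta(v) \cdot$ (something polynomial); in fact for a degree-$\le 1$ functor there is no branching, so sharing can save only contiguous repeated suffixes, and one shows $\ts_\delta(v) \le p_\delta(\size(v))$ for a fixed polynomial $p_\delta$ — this is the key quantitative input, proved by induction on the structure of $\gamma$ (or of $P$) exploiting that a degree-$\le 1$ polynomial functor applied to a dag at most duplicates a constant amount of structure per level. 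Given this, I would simulate the DP evaluation of $f$ by a TD evaluation: each DP \emph{Fold}$_{\muP}$ step that would consult a memo table instead re-evaluates its argument top-down, but since the argument ranges over (values bisimilar to) the at most $\ts_\delta(\text{input}) \le p_\delta(\size(\theta(x_1)))$ many distinct spine-suffixes, and each such re-evaluation itself has DP-cost $\le q(\size(\theta(x_1)))$ by hypothesis, the total TD-cost is at most a polynomial (roughly $p_\delta$ composed with / multiplied by $q$, times the number of nested folds in $f$, a constant) in $\size(\theta(x_1))$. The delicate points to get right are: (i) controlling the \emph{sizes} of the intermediate values so that re-evaluations stay poly-cost — here part~(\ref{i:cost:size}) is used to bound those sizes by $\size(\theta(x_1)) + q(\size(\theta(x_1)))$; (ii) handling \emph{nested} folds, which I would do by an outer induction on the fold-nesting depth of $f$ (a syntactic measure, finite and fixed), composing the polynomial bounds; and (iii) making sure that non-fold constructs (products, coproducts, application) contribute only a constant multiplicative overhead to the TD-vs-DP cost ratio, which follows from a routine rule-by-rule comparison. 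I expect obstacle (i)–(ii), the simultaneous control of size and cost across nested sequential folds, to be where the argument needs the most care; everything else is bookkeeping on the derivation trees.
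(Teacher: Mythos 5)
For parts~(\ref{i:cost:size}) and~(\ref{i:TD>DP}) your plan is fine and matches what the paper intends: the paper's entire proof of~(\ref{i:cost:size}) is the one-sentence observation that each freshly created $\uCn{\muP}$-vertex owns a distinct node of the derivation tree, which is exactly your charging argument, and~(\ref{i:TD>DP}) is left to the reader, with your ``memoization never hurts'' injection being the standard way to discharge it.

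Part~(\ref{i:DP<TD:seq}) is also left to the reader by the paper, but your route through it rests on a false quantitative lemma. You claim that for a \emph{sequential} $\delta=\muP$ one has $\ts_\delta(v)\leq p_\delta(\size(v))$ for a fixed polynomial $p_\delta$. Definition~\ref{d:sequential}(\ref{i:seq}) only requires the top-level functor $P$ to have degree at most one; it places no constraint on the constituent types. So $\List_{\Tree}=\mu(\mathsf{C}_\Unit+\mathsf{C}_\Tree\times\Id)$ is sequential, yet a one-element list whose element is a maximally shared complete binary tree of dag-size $n$ has $\ts$ on the order of $2^n$. The paper is careful on exactly this point: the bound $\ts_\gamma(v)\leq q_\gamma(\size(v))$ (Corollary~\ref{cor:ts:bnd}) is stated only for \emph{hereditarily} sequential $\gamma$. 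The reason a degree-$\leq 1$ fold is TD-safe is not that $\ts$ and $\cs$ are polynomially close for sequential types --- they are not --- but that each step of such a fold makes at most one recursive call, so the TD call structure of the fold is a path whose length is the spine length, which (the dag being acyclic, with one tail-edge per spine vertex) is at most the number of vertices, hence at most $k_\gamma\cdot(1+\size(v))$ by Lemma~\ref{l:size:bnd}; the element values of possibly branching type are merely passed to the step function, never unfolded. Your simulation should therefore be re-based on the spine-length bound in terms of $\size$ rather than on $\ts_\delta$, and relatedly the step ``each such re-evaluation has DP-cost $\leq q(\size(\theta(x_1)))$ by hypothesis'' needs care, since the hypothesis bounds the cost of $f$ as a whole, not of subexpressions re-run in intermediate environments; the cleaner invariant is $\cost_{TD}(e\theta)\leq p(\cost_{DP}(e\theta)+\size(\theta))$ proved by induction on the derivation, with part~(\ref{i:cost:size}) controlling intermediate sizes as you propose. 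With that replacement your outer induction on fold-nesting depth goes through.
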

\begin{proof}
  For part~\eqref{i:cost:size} just note that 
  each $\uCn{\muP}$ vertex created in the course of an evaluation
  requires its own separate vertex in the evaluation derivation tree.
  Parts~\eqref{i:TD>DP} and~\eqref{i:DP<TD:seq} we leave to the reader. 
\end{proof}

\section{Ramified structural recursions}\label{S:RSmi}

\begin{figure*}[p]
\textsc{Revisions for types}
\begin{align*}
   G \;\; & \makebox[5.5cm][l]{$\is\; N \synsep S \synsep G+G \synsep G \times G$} 
   \hbox{(Ground types)}
   \\
   S  \; \;  & \makebox[5.5cm][l]{$\is\; \safe\Unit \synsep \safe{(\muP)}$}
    \hbox{(Safe base types)}            
\end{align*}
\textsc{Revisions for raw syntax} (where $\delta=\muP$)
\begin{align*}
  E &\;\is\; \dots
  \\
          &
        \makebox[5.4cm][l]{\Quad1$\synsep (\Cs{\delta} E)
        \synsep (\Ds{\delta} E)$}
         \hbox{(safe data)}
        \\
        &\makebox[5.4cm][l]{$\Quad1
        \synsep (\toSafe E)        
        \synsep (\toNorm E)$}        
         \hbox{(coercions)}
\end{align*}
\textsc{Revisions for typing} (where $\delta=\muP$)
\begin{gather*}
    \rulelabel{$+$-E}
    \irule{
	\Gamma\entails e_0\of \gamma_1+\gamma_2
    \Quad{1.5}             
    \set{\Gamma,x_j\of\gamma_j\entails e_j\of \gamma}_{j=1,2}}%
    {\Gamma\entails (\Case e_0 \Of \,
        (\inj_1 x_1) \Rightarrow e_1 ;\;
        (\inj_2 x_2) \Rightarrow e_2)\of\gamma}					\sidecond{\ast}
\\					
 \rulelabel{$\Cs\delta$-I} 
  \irule{\Gamma\entails e\of \safe{(P\delta)}}%
        {\Gamma\entails (\Cs\delta e)\of\safe\delta}
  \Quad{3}
  \rulelabel{$\Ds{\delta}$-I} 
  \irule{\Gamma\entails e\of\safe\delta}%
        {\Gamma\entails (\Ds{\delta}\,e) \of \safe{(P\delta)}}  
\\
  \rulelabel{$\fold_{\delta}$-I}
  \irule{\Gamma \entails \lam{z}e_0\of P(\safe\gamma)\to\safe\gamma
         \Quad{1} 
         \Gamma\entails e_1\of\delta}%
        {\Gamma\entails \fold_{\delta}\, (\lam{z}e_0) \;e_1\of \safe\gamma}
        \sidecond{\dagger}         
\\
  \rulelabel{toSafe-I}
  \irule{\Gamma\entails e\of\gamma}{
  \Gamma\entails (\toSafe\, e)\of\safe\gamma}
  \Quad{2.5}
  \rulelabel{toNorm-I}
  \irule{\Gamma\entails e\of\gamma}{
  \Gamma\entails (\toNorm\, e)\of\norm\gamma}
  \sidecond{
            \ddagger}
  \\[1ex]
  \text{($\ast$)\;
       $\gamma_1+\gamma_2$ is normal or $\gamma$ is safe.}
  \\
  \text{($\dagger$)\;
       $\delta$ is normal.}
  \\     
  (\ddagger)\;\,
    \parbox[t]{0.55\textwidth}{\raggedright 
    for each $x\in\fv(e)$, $\Gamma(x)$ is normal}
\end{gather*}

\smallskip
\textsc{Revisions for evaluation semantics} (where $\delta=\muP$)
\begin{gather*}
  \rulelabel{Const$_{\safe\delta}$}
  \irule{e\theta \yields v}%
  {(\Cs\delta e)\theta \yields (\uCn{\delta} v)} 
  \Quad{2.5}
  \rulelabel{Destr$_{\safe\delta}$}
  \irule{e\theta \yields (\uCn\delta\,v) }%
        {(\Ds\delta e)\theta \yields v } 
\\[1ex]
  \rulelabel{ToSafe}
  \irule{e\theta\yields v }{
   (\toSafe e)\theta\yields v }
  \Quad{2.5}
  \rulelabel{ToNorm}
  \irule{e\theta\yields v }{
   (\toNorm e)\theta\yields v }
\end{gather*} 
\caption{Revisions of $\Smi$ for $\RSmi$.}\label{f:Rsmi:rev}
\end{figure*}

$\RSmi$ is our ramified version of $S^-_1$,
where the changes to $S^-_1$ are given in Figure~\ref{f:Rsmi:rev}.
The aim of this
ramification is to forbid  the infeasible
recursions of $S^-_1$.  $\RSmi$ uses a version of Bellantoni and Cook's
normal/safe distinction that splits base-type data into two sorts:
\emph{normal} data that can drive recursions and \emph{safe} data
that is the object of recursions.  For example, in $(\fold_\delta g\,x)$
we want $x$'s type to be normal and $g\of$(normal and safe data) $\to$
(safe data).  Typing constraints enforce this normal/safe distinction, 
which is
roughly the idea behind the $BC$ function algebra of
\citep[\S5]{BellantoniCook} and the formalism of
\citep{Leivant:FM2}, but \textbf{not} the better known $B$ function
algebra \citep[\S2]{BellantoniCook}.  {The normal/safe
  distinction applies to just ground types.}

\subsection{$\RSmi$ types} 
Changes to $\Smi$'s types are detailed in Figure~\ref{f:Rsmi:rev}.
$\RSmi$ inherits all of $\Smi$'s normal base types.
Paralleling  each normal data type $\muP$,
we have the corresponding safe type is $\safe{(\muP)}$.
Also, $\safe\Unit$ is the safe version of $\Unit$.
We extend -$\safe{}$ to all ground types by: 
\begin{gather*}
\safe{(\sigma_1\times\sigma_2)}=\safe{\sigma_1}\times\safe{\sigma_2} 
\qquad
\safe{(\sigma_1+\sigma_2)}=\safe{\sigma_1}+\safe{\sigma_2}
\qquad
\safe{(\safe\gamma)}=\safe\gamma
\end{gather*}
Also, let $\norm\gamma$ denote the version of $\gamma$ with all 
$\mathsf{S}$'s removed.   A \emph{normal} (respectively, \emph{safe}) ground 
type
is one in which each of constituent base types is normal (respectively, safe).  A \emph{mixed type} is a
ground type, such as $\Nat\times\safe{\Lst_\Nat}$,   that is neither normal nor safe.

\subsection{$\RSmi$ syntax and typing}
$\RSmi$ inherits  $\Smi$'s raw syntax  and adds safe data-type
constructors and two type-coercion operators:
$\toSafe$ and $\toNorm$.  (See Figure~\ref{f:Rsmi:rev}.)
Let $\delta=\muP$.  
The type $\safe\delta$ has
the constructor
$\Cs\delta\of\safe{(P \delta)}\to\safe\delta$ and destructor
${\Ds\delta}\of\safe\delta\to\safe{(P \delta)}$. 
We sometimes use
sugared constructors for $\safe\delta$, e.g., $\safe\Succ \of
\safe\Nat\to\safe\Nat$.
A type such as 
$\safe{(\Lst_\Nat)}$ is usually written  as
$\safe{\Lst_\Nat}$.%
%
%
%
\footnote{\textbf{N.B.} Inductive data types \emph{cannot} mix normal 
and safe things,
  e.g., a normal list of $\Nat$s is fine, as is a safe list of 
  $\safe\Nat$s, but not a normal list of $\safe\Nat$s nor 
  a safe list of $\Nat$s.}
Also, $\safe{()}$ is the sole inhabitant of $\safe\Unit$.

\begin{figure}[t]
\begin{leftbar}\begin{align*}
  & \plus' \of \safe\Nat \times \Nat \to\safe\Nat
  \\
  & \plus' (x,y) = \fold_\Nat \incr \;y 
  \\
  & \Quad2 \hbox{where } \incr \of (\Unit + \safe\Nat)\to\safe\Nat
  \\
  & \Quad{5.1}  \incr (\inj_1 ()) = x
  \\
  & \Quad{5.1}  \incr (\inj_2(n)) = \safe\Succ(n)
  \\[1ex]
  &\mathit{times}' \of\Nat\times\Nat \to\safe\Nat
  \\
  & \mathit{times}'(x,y) = \fold_\Nat f \;y
  \\
  &\Quad{2}\hbox{where } \step \of (\Unit + \safe\Nat)\to\safe\Nat
  \\
  & \Quad{5.1} \step(\inj_1()) = \safe\Zero;  
  \\
  & \Quad{5.1} \step(\inj_2(n)) = \plus'(n,x)
  \\[1ex]
  &\plus, \,\mathit{times} \of\Nat\times\Nat \to\Nat
  \\
  & \plus (x,y) = \toNorm(\plus'(\toSafe(x),y))
  \\
  & \mathit{times}(x,y) = \toNorm(\mathit{times}'(x,y))
  \\[1ex]
  &\mathit{sumLst} \of {\Lst_\Nat} \to\Nat
  \\
  &\mathit{sumLst}(xs) 
   = \toNorm(\fold_{\Lst_\Nat} g\; \xs)
   \\
  & \Quad{2}\hbox{where } g\of(\Unit + \Nat\times\safe\Nat) \to\safe\Nat
  \\
  & \Quad{5.1} g(\inj_1 ()) = \safe\Zero
  \\
  & \Quad{5.1} g(\inj_2 (x,s)) = \plus' (s,x)
\end{align*}\end{leftbar}
\caption{Sample $\RSmi$ recursions \\[1ex]
 \textbf{\emph{Important convention:}} Displayed 
$\RSmi$  definitions have a 
left-margin bar; displayed $\Smi$ definitions 
do not.
}
\label{f:RSmi:ex}
\end{figure}

$\RSmi$ revises  $\Smi$'s typing rules by adding
new side-conditions to the $+$-\emph{E} and $\fold_\delta$-\emph{I} rules
and adding rules for the safe data-type
constructors and type-coercion operators.  (See Figure~\ref{f:Rsmi:rev}.)
The new side-condition on $+$-E helps 
make the value of a normal-type expression independent the 
values of its safe-type subexpressions.\sidenote{See \S\ref{S:noninterf}.}  
The new side condition for $\fold_{\delta}$-\emph{I} is key to reducing
the power of folds down to mere polynomial-time. 
Figure~\ref{f:RSmi:ex} provides some sample $\RSmi$ $\fold$-recursions.

The $\toSafe$ operator simply shifts the type $\gamma$ assigned to 
the value of an expression to the type's safe version, $\safe\gamma$.
The $\toNorm$ operator does a safe-to-normal shift, but under
a  strong side-condition. 
The $\toNorm$ construct 
addresses a perennial difficulty of ramified type systems: 
that certain sensible compositions can be untypable.
E.g., for $\mathit{times}'$ as in
Figure~\ref{f:RSmi:ex},
$\mathit{cube} = \lam{x}(\mathit{times}'\,x\,(\mathit{times}'\,x\,x))$ 
fails to type in $\RSmi$. The use of $\toNorm$ in $\mathit{times}$
in Figure~\ref{f:RSmi:ex}
mitigates this difficulty.\sidenote{%
  The $\toNorm$-I rule is an adaptation to $\lambda$-calculi of 
  Bellantoni and Cook's \citeyearpar{BellantoniCook} \emph{Raising Rule} 
  for their BC formalism which in turn is an adaptation of 
  their \emph{safe composition} scheme for the $B$ formalism of 
  the same paper.}   
By convention: 
$\toSafe(\,()\,)$ = $\safe{()}$ and
$\toNorm(\,\safe{()}\,) = ()$.

\subsection{The operational semantics of~$\RSmi$}
If we exclude $\toNorm$ and $\toSafe$, then
$\RSmi$ and $S^-$ share the same DP-operational semantics which 
is oblivious to  normal/safe distinctions. 
(Evaluating a $\Cs\delta$ constructor thus builds a $\uCn\delta$-vertex.)
The  normal/safe  restrictions 
constrain what programs can be constructed, but
these type restrictions 
have no influence on how well-typed programs behave.
Consequently, the 
operational semantics of $\toNorm$ and $\toSafe$ are taken to be trivial:
they simply return their argument with no new
vertices being created. 

\begin{definition}
For a $\RSmi$ type $\gamma$, we say that 
a value-term graph $v$ is an $\RSmi$-type-$\gamma$ value if and only if
$v$ is a $S^-$-type-$\norm\gamma$ value. 
\end{definition}

\subsection{Polynomial-time soundness and seeming incompleteness}

\begin{theorem}[Polynomial-time soundness for $\RSmi$]\label{t:rsmi:sound:1}  
   Suppose that $\entails_{\RSmi} f\of\gamma_1\to\gamma_0$ where 
   $\gamma_1$ is normal.  Then $f$ is DP-poly-cost. 
\end{theorem}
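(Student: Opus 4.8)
The plan is to prove a strengthened statement by induction on the $\RSmi$ typing derivation, simultaneously controlling both the DP-cost of a closure and the $\size$ of the value it produces, with the normal/safe split built into the bounds in the Bellantoni--Cook manner. Concretely, for every ground-type judgment $\Gamma\entails e\of\gamma$ I will exhibit a polynomial $q_e$ such that for all environments $\theta\of\Gamma$,
\begin{align*}
  \mathrm{cost}_{\mathrm{DP}}(e\theta) &\le q_e\bigl(\nmdom(\theta)\bigr), &
  \size\bigl(e\theta\text{'s value}\bigr) &\le q_e\bigl(\nmdom(\theta)\bigr) + \smdom(\theta),
\end{align*}
where $\nmdom(\theta)$ sums the $\size$s of the normal-typed components of the values in $\theta$ and $\smdom(\theta)$ sums the $\size$s of the safe-typed components (a value of a mixed type such as $\Nat\times\safe{\List_\Nat}$ contributing to both; I also track the ``normal part'' and ``safe part'' of the output value separately, the normal part being bounded by $q_e(\nmdom(\theta))$ alone). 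The point of this shape is the asymmetry: cost and the normal part of the output depend only on the normal inputs, while safe inputs enter the output size additively with coefficient exactly $1$ and never enter the cost at all. Function-typed subterms are never evaluated standalone in the operational semantics, so $\lambda$-abstractions are handled implicitly: the $\lambda$-App and $\fold_\delta$ cases recurse into the body's derivation under the appropriately extended environment.

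Most cases are routine: for variables, $\uep$, injections, pairs, and the normal and safe constructors/destructors, the DP derivation adds $O(1)$ nodes on top of the subderivations and $\size$ changes by at most $1$, so composing the $q$'s of the premises suffices. The two coercions are operationally transparent (no new vertices): $\toSafe$ merely relabels a bound as ``safe'', and $\toNorm$ relies on its side condition ($\ddagger$) --- since every free variable of $e$ is normal, $\smdom$ of the relevant subenvironment is $0$, so $\size(\toNorm\,e)\le q_e(\nmdom)$ and the resulting value may legitimately be counted as normal in any enclosing context. The $+$-E case uses side condition ($\ast$): if the result type is normal then the scrutinee is normal and the whole expression stays within the ``normal world'' (this is the non-interference property of \S\ref{S:noninterf}); if the result type is safe, the scrutinee's cost is still governed by its own normal inputs and the branch variable's normal part is a polynomial in the ambient normal inputs.

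The crux is $\fold_\delta\,(\lam z e_0)\,e_1$. Side condition ($\dagger$) forces $\delta$ to be normal, so by the induction hypothesis $\size(e_1\theta)\le q_{e_1}(\nmdom(\theta))=:m$, with $m$ polynomial in the normal inputs alone. Under the DP semantics the evaluation performs one ``recursive step'' per $\uCn\delta$-vertex of $e_1\theta$'s value --- $m$ steps --- each step being $O(1)$ bookkeeping from the functor reduction $P(\fold_\delta\,f)\leadsto g$ (case analyses, projections, memo-lookups) followed by a single evaluation of $e_0$. The variable $z$ has the mixed type $P(\safe\gamma)$: its normal part is the ``label'' stored at the current node, which by Lemma~\ref{l:size:bnd} has $\size\le O(m)$, and its safe part is the tuple of recursive results already computed at the children. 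By the induction hypothesis for $e_0$, one such evaluation costs at most $q_{e_0}$ applied to (the fixed normal variables of $\Gamma$ together with) that label, i.e.\ at most $q_{e_0}(O(m))$ --- and, crucially, this does \emph{not} depend on the safe part of $z$, because nothing in $\RSmi$ consumes safe data in time proportional to its size (destructors, cases, projections and coercions are all $O(1)$, and the only size-sensitive construct, $\fold$, is by ($\dagger$) driven by normal data only). Hence the total cost of the fold is at most $m\cdot q_{e_0}(O(m))+O(m)$, a polynomial in the normal inputs, and the output size is then bounded via Lemma~\ref{l:costs}\eqref{i:cost:size} by that cost plus $\size(e_1\theta)+\smdom(\theta)$. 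This is precisely where the ramification does its work: in an unramified fold one could add or otherwise ``count'' the recursive results (e.g.\ $\treeSize$'s $\Succ(\plus(\treeSize\,tl,\treeSize\,tr))$), making the per-step cost grow with the accumulated safe data and blow up; the normal/safe typing forbids feeding a recursive result into a normal position, so per-step cost stays pinned to a fixed polynomial in the normal inputs.

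Applying the invariant to $f=\lam{x_1}e_0$ with $x_1\of\gamma_1$ and $\gamma_1$ normal: for $\theta=[x_1\mapsto v_1]$ the evaluation of $(f\,x_1)\theta$ is a $\lambda$-App of cost $2+\mathrm{cost}_{\mathrm{DP}}(e_0\theta)\le 2+q_{e_0}(\nmdom(\theta))=2+q_{e_0}(\size(v_1))$, since $\gamma_1$ has no safe components --- exactly the required polynomial bound in $\size(\theta(x_1))$. I expect the main obstacle to be twofold: pinning down the exact form of the inductive invariant so that it survives mixed types and the $\toNorm$ and $+$-E side conditions (in particular, tracking the normal and safe parts of an output value separately), and carrying out the $\fold_\delta$ step rigorously against whichever concrete formulation of the DP semantics is imported from \citep{DR:23} --- i.e.\ verifying that ``one recursive step per constructor vertex, each of cost bounded by a fixed polynomial in the node's label and the ambient normal data'' is faithful to that semantics, memoization accounting included.
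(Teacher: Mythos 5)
Your overall architecture --- a mutual induction over the typing derivation that bounds DP-cost by a polynomial in the normal inputs alone and bounds output size by that polynomial plus the safe inputs with coefficient exactly $1$, with the $\fold$ case as the crux --- is the same as the paper's (Appendix~\ref{S:soundness}). The gap is in the size invariant. You measure the safe contribution as $\smdom(\theta)$, a \emph{sum of sizes} of the safe-typed components, and assert that it enters the output size ``additively with coefficient exactly $1$.'' That coefficient is not preserved by the pairing case (nor by application, nor by a $\fold$ step in which a safe variable occurs more than once). From $\size(e_i\theta\hbox{'s value})\le q_i(\nmdom(\theta))+\smdom(\theta)$ for $i=1,2$, the only bound you can derive for $(e_1,e_2)$ by summing component sizes is $q_1+q_2+2\,\smdom(\theta)$: the two components may \emph{share} the same safe input, and a sum of sizes counts it twice. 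Pushed through a $\fold$, a per-step coefficient $c>1$ compounds to $c^m$. The counterexample is already in the paper: $\grow$, whose step is essentially $\safe\Branch(z',z')$ applied to the safe recursive result $z'$, produces a dag of true size $m+1$ because $\size(\upair{v}{v})=\size(v)$ (the two out-edges point to the \emph{same} sub-dag), yet your inductive recurrence yields $\size_{i+1}\le 1+2\,\size_i$, i.e.\ $2^m$, and the poly-cost conclusion collapses.

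This is precisely the difficulty the paper isolates in Scholium~\ref{s:res:pair} and resolves with the \emph{residual size} $\residual{e}\theta$ (Definition~\ref{d:resSize}): one does not bound $\size(e\theta\hbox{'s value})$ directly, but the size of the value \emph{after deleting the safe span of $e$'s free variables}, i.e.\ one counts only the normal part plus the freshly allocated vertices. Under that measure a shared safe input contributes $0$ to each component's bound, the pairing case closes with $q_1+q_2$ via the dag inclusion of Lemma~\ref{l:setids}\eqref{i:pair}, and the $\fold$ recurrence has genuine coefficient $1$. So your plan needs to replace the scalar quantities $\nmdom(\theta)$ and $\smdom(\theta)$ by a sharing-aware, subgraph-valued accounting of which vertices are already ``paid for'' by the safe inputs; without it the invariant you state, though true for well-typed terms as a consequence of the paper's theorem, cannot be established by the direct induction you propose.
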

      
This theorem follows from more general results in \citep{DR:23}.  
However, as that paper is currently not near public release,
in this version of the paper we have included an 
appendix (\S\ref{S:soundness}) that sketches  the basic soundness 
results for 
$\RSmi$.

Our central concern in this paper is not soundness, but completeness.
Let us recall the $\Smi$-definition of the DP-poly-cost function
$\height \of\Tree\to\Nat$ which we suspect is
not $\RSmi$-computable:
\marginnote[6ex]{\textbf{\emph{Reminder:}} Displayed 
$\RSmi$  definitions have a 
left-margin bar; displayed $\Smi$ definitions 
do not.}
\begin{align*}
  &\height \,\Leaf = \numeral0
  \\
  &\height \,(\Branch(tl,tr)) = \Succ(\Max(\height\,tl,\height\,tr))
\end{align*} 
For an $\RSmi$-definition along the lines of the above, 
both expressions $(\height\,tl)$ and $(\height\,tr)$ 
would necessarily be of type $\safe\Nat$, but 
we would need at least one of $\Max$'s arguments to be normal.
If somehow we had a normal $\Nat$ larger than both 
$(\height\,tl)$ and $(\height\,tr)$, we could compute the 
required maximum, but from whence would this normal argument come?
The formalisms of \citep{gen:rem:2010,Avanzini2018OnSM} 
seems to have similar problems.

\paragraph{Why the $\cs_\gamma$ functions are good candidates 
for $\RSmi$'s missing ingredient}
Consider  $\entails_{\Smi} f\of\gamma\to\Nat$ where $f$ is DP-poly-cost and $\theta\of(x\of\gamma)$.  
For such a term we know, by Lemma~\ref{l:costs}\eqref{i:cost:size}, that there is a polynomial
$p$ such that, for all $\theta$, $\size((f\,x)\theta$'s value$) 
\leq p(\size(\theta(x)))$.  But since $f$ must produce the same $\Nat$ for all bisimilar
inputs, this inequality is really 
\begin{align*}
  \size((f\,x)\theta\hbox{'s value}) \leq p(\cs_\gamma(\theta(x))).
\end{align*}  
The $\cs_\gamma$'s turn out to be DP-poly-cost $\Smi$-computable
(Corollary~\ref{c:cs} below), but
when we try to use $\RSmi$ to  compute, say, $\cs_{\Tree}$, 
there appears the same sort of problem encountered in trying to compute $\height\of\Tree\to\Nat$.  Moreover, computing $\height\of\Tree\to\Nat$
in $\RSmi$ would be straightforward \emph{if} we had access to the 
$\cs_\gamma$'s.

\subsection{Comparing formalisms}\label{S:compare}

Here we briefly compare $\RSmi$ to the  formalism
of \citep{Avanzini2018OnSM}, which  we shall call ADL.\sidenote{ADL
amounts to an extension of the formalism of \citep{gen:rem:2010}.}
Despite surface differences, 
ADL and $\RSmi$ are broadly comparable systems in most respects.
ADL is a function algebra using multiple tiers, $\RSmi$ is a
$\lambda$-calculus using just two tiers, but for type-level-1 systems
these differences are largely a matter of taste. ADL includes 
simultaneous structural recursions, but $\RSmi$ can mimic these 
by use of product and coproduct types.\sidenote{E.g., If $f_1\of\gamma\to\gamma_1$ and $f_2\of\gamma\to\gamma_2$ are 
defined via simultaneous structural recursion, in $\RSmi$ one 
can define $f\of\gamma\to(\gamma_1\times\gamma_2)$ such that
$f_1=\proj_1\circ f$ and
$f_2=\proj_2\circ f$.}  
We do claim, however, that
inductively defined data types are a bit more developed in $\RSmi$
than in ADL.

\section{Completeness of $\RSmi$ over hereditarily sequential types}
\label{S:seq}

The goal of this paper is to patch an apparent weakness in $\RSmi$.
To do this we first need to understand
$\RSmi$'s strengths.  The following theorem establishes that $\RSmi$
is complete with respect to a particular broad class of
feasible structural recursions.

\begin{theorem} \label{t:hs}
  Suppose $\entails_{\Smi}f\of   
  \gamma_1\to\gamma_0$ where 
  $\gamma_1$ is normal and hereditarily sequential 
  (Definition~\ref{d:sequential}(\ref{i:hseq}))
  and where $f$ is TD-poly-cost. Then $f$ 
  is $\RSmi$-computable. 
\end{theorem}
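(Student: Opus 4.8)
The plan is to show that any TD-poly-cost $\Smi$-function $f\of\gamma_1\to\gamma_0$ with $\gamma_1$ normal and hereditarily sequential can be re-expressed as an $\RSmi$-program, by a structural induction on the derivation of $\entails_{\Smi} f\of\gamma_1\to\gamma_0$, carrying along a polynomial cost bound. The key observation is that when $\gamma_1$ is hereditarily sequential, every $\fold_{\muP}$ subexpression that is "driven" by a value reachable from the input recurses over a degree-$\leq 1$ polynomial functor, so by Lemma~\ref{l:costs}(\ref{i:DP<TD:seq}) TD-cost and DP-cost are polynomially interchangeable and the $\fold_\delta$-\emph{I} side condition ($\delta$ normal) is the only real constraint we must respect. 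The $\toSafe$/$\toNorm$ coercions, whose operational semantics are trivial, are the tool that lets us shuttle between the normal world (where recursions live) and the safe world (where recursion results must land).

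First I would set up the right induction hypothesis. Rather than translating $f$ directly, I would prove: for every $\Smi$-typing judgment $\Gamma\entails e\of\gamma$ in which $\Gamma$ assigns only normal hereditarily-sequential ground types and the whole sub-derivation is TD-poly-cost in $\size$ of the environment, there is an $\RSmi$-term $e'$ with $\Gamma'\entails e'\of\safe\gamma$ (a "safe copy" of the judgment) computing the same function, where $\Gamma'$ replaces each variable type by either its normal or safe version as dictated by how it is used. The $\lambda$-calculus cases ($\mathit{Id}$, $\to$-I, $\to$-E), the product/coproduct cases, and the constructor/destructor cases are routine: they impose no side conditions in $\RSmi$ beyond the $+$-E condition, which is met because we will be careful to keep the scrutinee of a $\Case$ normal whenever the branches need to produce normal data — and for hereditarily sequential $\gamma_1$ this is exactly arrangeable. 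For a literal produced "from nothing" ($\Zero$, $\Leaf$, $()$) we can produce it normally and coerce with $\toSafe$ as needed.

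The heart of the argument is the $\fold_{\muP}$ case. Here I would use that $\muP$ is hereditarily sequential, hence the polynomial-functor reduction $(P\,(\fold_\delta f))\leadsto g$ (Example~\ref{ex:g} for lists is the template) produces a $g$ that makes at most one recursive call — the data type is a "sequence-like" type $\mu(\mathsf{C}_\Unit + \mathsf{C}_\gamma\times\Id)$ up to isomorphism. The subtlety is that the step function $f\of P(\gamma)\to\gamma$ in the original $\Smi$-program may itself be an arbitrary TD-poly-cost function whose output type $\gamma$ is a ground type; in $\RSmi$ we need the step function to have type $P(\safe\gamma)\to\safe\gamma$, and to supply $e_1$ of normal type $\delta$. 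By the induction hypothesis applied to $f$ (with its argument and result both moved to the safe world) we get a safe step function, and $e_1$ is obtained from the IH in the normal world; the $\fold_\delta$-\emph{I} side condition "$\delta$ normal" is then satisfied. The cost bookkeeping uses Lemma~\ref{l:costs}(\ref{i:cost:size}) to bound the sizes of all intermediate $\Nat$s (more generally, safe values) by a polynomial in $\size(\theta(x_1))$, and the number of recursion steps is linear in that size because $P$ has degree one; composing the per-step polynomial cost with the linear step count gives the overall polynomial cost, and polynomial-soundness of the resulting $\RSmi$-program is then automatic from Theorem~\ref{t:rsmi:sound:1}.

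The main obstacle I anticipate is managing the normal/safe discipline through compositions: an $\Smi$-term may use the result of one recursion to drive a second recursion (e.g., iterated use of a computed $\Nat$), which in $\RSmi$ would demand a normal value where we only have a safe one. This is precisely the $\mathit{cube}$-style failure mentioned after Figure~\ref{f:RSmi:ex}, and the fix is to insert $\toNorm$ — but $\toNorm$-\emph{I} carries the side condition that every free variable of the coerced expression has normal type. So the real work is to show that any such "re-normalization" point in a TD-poly-cost hereditarily-sequential computation can be placed at a closed (or normal-environment) subexpression; I expect this to follow because a TD-poly-cost bound forces the recursion-driving data to be polynomially bounded in the \emph{input} size, so one can always recompute it from the (normal) input rather than threading a safe intermediate, legitimizing the $\toNorm$ insertion. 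Making this "recompute from the normal input" transformation precise, and checking it preserves the polynomial cost, is the step I would budget the most care for.
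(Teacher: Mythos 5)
Your proposal takes a genuinely different route from the paper --- a compositional, derivation-by-derivation translation of $\Smi$ into $\RSmi$ --- and it founders on exactly the point you flag at the end. The paper instead proves Theorem~\ref{t:hs} by a global simulation: it first shows (Lemma~\ref{l:hssize}) that $\ts_{\gamma_1}$ is $\RSmi$-computable for hereditarily sequential $\gamma_1$, uses this to compute a \emph{normal} $\Nat$ bounding $3\cdot q_0(\ts_{\gamma_1}(\theta_0(x_1)))$, and then drives a single $\fold_\Nat$ over that clock to iterate the step function of a CEK machine for $\Smi$ (Figure~\ref{f:cek:s1-}), with the entire machine state living at type $\safe\State$. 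All of $f$'s internal folds, of whatever type, are flattened into single machine steps, so the only recursion in the $\RSmi$ program is the one driven by the normal clock. This is precisely what dissolves the normal/safe composition problem that your approach must confront head-on.

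Two concrete gaps in your plan. First, you assume that every $\fold_{\muP}$ occurring in $f$ has $\muP$ of degree at most one because $\gamma_1$ is hereditarily sequential; this does not follow. Hereditary sequentiality is a hypothesis on the \emph{input type} only --- $f$ may internally construct values of branching types (e.g.\ trees) and fold over them while remaining TD-poly-cost, so your appeal to Lemma~\ref{l:costs}(\ref{i:DP<TD:seq}) and to the ``one recursive call'' shape of $g$ in the fold case is unavailable in general. Second, and more seriously, your fix for the composition problem --- ``recompute the recursion-driving data from the normal input rather than threading a safe intermediate'' --- is not an argument but a restatement of the difficulty: when a fold must be driven by the safe recursive-result variable of an enclosing fold (a pattern that can be TD-poly-cost), the $\toNorm$-\emph{I} side condition blocks the coercion, and ``recomputing'' that value from $x_1$ would itself require a recursion driven by something normal, which is the original problem. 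The paper's interpreter construction is the standard way out (compute the time bound once, in the normal world, from the input), and your proposal as written does not reach it. The parts of your plan that do go through (the routine product/coproduct/constructor cases, and the observation that $\toNorm$ may be applied to any subexpression all of whose free variables are normal) are correct but do not suffice without a replacement for the two steps above.
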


The rest of this section is devoted to proving Theorem~\ref{t:hs}. 

The first key fact to note is that, in $\RSmi$,  one can compute an upper 
bound on the size of a hereditarily sequential value. 

\begin{lemma} \label{l:hssize}
  For each normal, hereditarily-sequential $\gamma$,  
  $\ts_\gamma$ is $\RSmi$-computable. 
\end{lemma}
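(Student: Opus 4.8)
The plan is to prove Lemma~\ref{l:hssize} by induction on the structure of the normal, hereditarily-sequential ground type $\gamma$, exhibiting in each case an $\RSmi$-term that denotes $\ts_{\gamma}$ at normal type $\gamma\to\Nat$. The three non-recursive cases fix the pattern. For $\gamma=\Unit$, take $\lam{x}\Zero$, since $\size(\uep)=0$. For $\gamma=\gamma_1\times\gamma_2$, Definition~\ref{d:bisim} yields $\ts(\upair{v_1}{v_2})=\ts(v_1)+\ts(v_2)$, the maximum over bisimilar dags being witnessed by making the two components share no vertices, so $\lam{(x_1,x_2)}\plus(\ts_{\gamma_1}\,x_1,\,\ts_{\gamma_2}\,x_2)$ works, using the normal-type $\plus\of\Nat\times\Nat\to\Nat$ of Figure~\ref{f:RSmi:ex} and the induction hypothesis; note that everything here stays at normal type. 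For $\gamma=\gamma_1+\gamma_2$, a single $\Case$ dispatching to $\ts_{\gamma_1}$ or $\ts_{\gamma_2}$ suffices, and the side-condition on $+$-E is met because $\gamma_1+\gamma_2$ is normal.

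The one substantive case is $\gamma=\delta=\muP$ with $P$ of degree at most one. Put $P$ in sum-of-products form; hereditary sequentiality forces each summand to contain at most one occurrence of $\Id$, the other slots being constant functors $\mathsf{C}_{N'}$ whose $N'$ is a smaller type, hence again normal and hereditarily sequential, so the induction hypothesis supplies an $\RSmi$-term for each $\ts_{N'}$ with $N'$ inhabited (a summand with a vacuous constant factor is itself empty and occurs only in a dead case-branch, where we may simply return $\safe\Zero$). From Definition~\ref{d:bisim} one reads off $\ts(\Cn\delta\,w)=1+\ts(w)$, and that $\ts$ of a $P(\delta)$-value equals the sum over its slots of the value at the $\Id$-slot and $\ts_{N'}$ of the payload at each $\mathsf{C}_{N'}$-slot. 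Accordingly I will define
\begin{align*}
  \ts_{\delta} \;=\; \lam{x}\toNorm(\fold_\delta\, h\; x),
\end{align*}
where $h\of P(\safe\Nat)\to\safe\Nat$ is $h(u)=\safe\Succ(\mathit{sum}_P(u))$ and $\mathit{sum}_P\of P(\safe\Nat)\to\safe\Nat$ is defined by recursion on the static, finite shape of $P$: at $\Id$ it returns its argument; at $\mathsf{C}_{N'}$ it returns $\toSafe(\ts_{N'}\,u)$; at a coproduct it is a $\Case$; and at a product it uses $\plus'\of\safe\Nat\times\Nat\to\safe\Nat$ of Figure~\ref{f:RSmi:ex} to add the $\safe\Nat$ coming from recursion into the (unique, if any) $\Id$-bearing factor to the \emph{normal} $\Nat$ obtained by $\plus$-summing the $\ts_{N'}$'s of the constant factors. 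One then verifies, by induction on the input value using the fold equation $(\fold_\delta\,h)\circ\Cn\delta = h\circ P(\fold_\delta\,h)$ together with the $\ts$-identities just noted, that the displayed term denotes $\ts_{\delta}$.

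The crux — and the only essential use of the sequentiality hypothesis — is the normal/safe discipline. Since $\fold_\delta$ yields safe data, the accumulator threaded through the recursion is a $\safe\Nat$; were $\delta$ branching, a single recursive step would have to combine two such safe accumulators, which $\RSmi$'s typing forbids — exactly the obstruction that makes $\ts_{\Tree}$ and $\height$ appear not to be $\RSmi$-computable. Because each summand of $P$ carries at most one recursive slot, every addition performed inside $h$ is between one $\safe\Nat$ and a normal $\Nat$, which is precisely what $\plus'$ permits; and the outer $\toNorm$ is legal because the only free variable of its body is $x\of\delta$, which is normal. The remaining side-conditions are immediate: the $\fold_\delta$-I side-condition needs $\delta$ normal (it is), and every $\Case$ occurring inside $h$ has a safe result type, so its side-condition holds regardless of the scrutinee's (possibly mixed) type. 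I expect this normal/safe bookkeeping to be the only delicate point; the verification that the terms denote $\ts_{\gamma}$ is routine.
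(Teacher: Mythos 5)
Your proof is correct and follows essentially the same route as the paper's: structural induction on $\gamma$ with $\plus$ for products, $\Case$ for sums, and a $\fold_\delta$ whose step function uses $\safe\Succ$ and $\plus'$ to combine the single safe accumulator with the normal $\ts_{N'}$-values of the constant slots, wrapped in a $\toNorm$. Your generic $\mathit{sum}_P$ just spells out what the paper waves off as ``the $\List_{\gamma_0}$-case with more bureaucracy,'' and your observation that degree~$\leq 1$ is exactly what keeps every addition of the form safe-plus-normal is the same key point the paper relies on.
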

\begin{proofsketch}  
For each normal, hereditarily sequential $\gamma$, we introduce a
closed  $\RSmi$-function
$\treeSize_\gamma\of\gamma\to\Nat$  that computes $\ts_\gamma$. 
The $\treeSize_\gamma$ functions are defined inductively on the structure of $\gamma$.
Recall that Figure~\ref{f:RSmi:ex} introduced $\RSmi$-functions
$\plus$ and $\plus'$ such that  
$\plus(\numeral{m_0},{\numeral{m_1}}) = {\numeral{m_0+m_1}}$
and $\plus'(\toSafe(\numeral{m_0}),{\numeral{m_1}}) = \toSafe(\numeral{m_0+m_1})$.
The $\Unit$, sum, and product cases are
 straightforward:\sidenote{\emph{Recall:} $\size$ (Definition~\ref{d:size})
 does not count $\uep$-, $\uinj_j$-, or $\underline{(,)}$-vertices.}
\vspace*{-3ex}\begin{leftbar}\begin{align*}
& \treeSize_\Unit (x) \Quad{0.7} = \numeral0
\\
& \treeSize_{\gamma_1+\gamma_2}(x) = 
\Case x \Of \left\{\strut(\inj_j x_j) \Rightarrow \treeSize_{\gamma_j}(x_j)\right\}_{j=1,2}
\\
& \treeSize_{\gamma_1\times\gamma_2}(x) = \plus(
\treeSize_{\gamma_1}(\pi_1(x)) ,\treeSize_{\gamma_2}(\pi_2(x))) 
\end{align*}\end{leftbar}
Next let us consider the special case of 
$\gamma=\List_{\gamma_0} = \lfp{t}(\Unit+\gamma_0\times t)$
where the lemma is know to hold for $\gamma_0$. 
For this case we have:
\vspace*{-3ex}\begin{leftbar}\begin{align*}
  & \treeSize_{\List_{\gamma_0}}(x) = \toNorm(\fold_{\List_{\gamma_0}} \tally \; x)
  \\
  & \Quad1\hbox{where}\; \tally\of(\Unit + {\gamma_0}\times\safe\Nat) \to\safe\Nat 
  \\
  & \Quad4 \tally (\inj_1 ()) \Quad{2} = \toSafe(\numeral1) 
  \\
  & \Quad4 \tally(\inj_2 (w,n')) = \safe\Succ(\plus'  (n',\treeSize_{\gamma_0}(w)))
\end{align*}\end{leftbar}
Above, the $\tally (\inj_1 ())$ equation corresponds to the $\Empty$ case and 
the
$\tally(\inj_2 (w,n'))$ equation  corresponds to the $\Cons(w,x')$ case in which
$n'=\treeSize_{\List_{\gamma_0}}(x')$.
Thus $\treeSize_{\List_{\gamma_0}}(x)$ is the sum of:
$1$ (for the $\Empty$ constructor) and, for each item in the list, 
(the $\treeSize_{\gamma_0}$ of the item$)+1$ (where the $+1$ 
counts the $\Cons$ constructor for this item). 
The general $\gamma=\muP$ case is just the 
$\List_{\gamma_0}$-case with more bureaucracy. 
\end{proofsketch}

\begin{corollary}\label{cor:ts:bnd}
  For each hereditarily sequential $\gamma$,  there is a 
  polynomial function $q_\gamma(\cdot)$ such that, for each type-$\gamma$ 
  value $v$: \ $\ts_\gamma(v)\leq q_\gamma(\size(v))$. 
\end{corollary}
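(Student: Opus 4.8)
The plan is to extract the polynomial bound directly from the recursive structure of the $\treeSize_\gamma$ functions constructed in the proof sketch of Lemma~\ref{l:hssize}, rather than reasoning about arbitrary $\RSmi$-computations. First I would note that by Lemma~\ref{l:hssize}, for each hereditarily sequential $\gamma$ (I will treat the case where $\gamma$ is normal; the general case reduces to it by erasing $\mathsf{S}$'s, since $\ts_\gamma$ and $\size$ are insensitive to the normal/safe distinction on values) we have an $\RSmi$-function $\treeSize_\gamma$ that computes $\ts_\gamma$. Applying Theorem~\ref{t:rsmi:sound:1} to $\treeSize_\gamma \of \gamma \to \Nat$, there is a polynomial $p_\gamma$ with $\cost_{\mathrm{DP}}((\treeSize_\gamma\, x)\theta) \leq p_\gamma(\size(\theta(x)))$ for all $\theta \of (x\of\gamma)$.

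Next I would convert this cost bound into a size bound on the output. By Lemma~\ref{l:costs}\eqref{i:cost:size}, $\size((\treeSize_\gamma\,x)\theta\text{'s value}) \leq \size(\theta(x)) + \cost_{\mathrm{DP}}((\treeSize_\gamma\,x)\theta) \leq \size(\theta(x)) + p_\gamma(\size(\theta(x)))$. The value produced is the $\Nat$ $\numeral{\ts_\gamma(v)}$, whose $\size$ is exactly $\ts_\gamma(v)$ (the number of $\Succ$-constructors, plus one should one's convention count $\Zero$; in any case it equals $\ts_\gamma(v)$ up to an additive constant absorbed into the polynomial). Hence taking $\theta(x) = v$ and setting $q_\gamma(n) \defeq n + p_\gamma(n) + 1$, which is again a polynomial, we obtain $\ts_\gamma(v) \leq q_\gamma(\size(v))$ for every type-$\gamma$ value $v$, as required.

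An alternative, more self-contained route — which I would mention as a remark — is to prove the bound by induction on the structure of $\gamma$ directly from the defining equations of $\treeSize_\gamma$ in the proof sketch: the $\Unit$ case gives the constant $0$; the coproduct case gives $q_{\gamma_1+\gamma_2} = \max(q_{\gamma_1}, q_{\gamma_2})$; the product case gives $q_{\gamma_1\times\gamma_2}(n) = q_{\gamma_1}(n) + q_{\gamma_2}(n)$; and the $\muP$ (list-like) case gives a bound of the form $q_{\muP}(n) \leq n\cdot(1 + q_{\gamma_0}(n)) + 1$, using that each of the at most $\size(v)$-many constructor vertices contributes at most one $\Succ$ plus the $\treeSize_{\gamma_0}$-value of its stored item, whose underlying value has $\size$ at most $\size(v)$. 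Each clause preserves polynomiality, so the induction goes through.

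The only point requiring care — and the main obstacle — is the $\muP$ case of the inductive argument: one must be sure that the sub-values fed to $\treeSize_{\gamma_0}$ at the list items have $\size$ bounded by $\size(v)$ (true, since they are sub-dags of $v$), and that the number of recursive steps in the $\fold$ is bounded by $\size(v)$ rather than by the tree size — which is exactly where hereditary sequentiality (degree $\leq 1$, Definition~\ref{d:sequential}(\ref{i:hseq})) is used, guaranteeing there is no branching-induced blow-up. Given that, I would favor the short argument via Theorem~\ref{t:rsmi:sound:1} and Lemma~\ref{l:costs}\eqref{i:cost:size} as the clean proof, since it sidesteps re-deriving the recursive polynomial bounds by hand.
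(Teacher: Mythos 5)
Your main argument is correct and is essentially the paper's own proof: the paper derives the corollary exactly by combining Lemma~\ref{l:hssize} (to get an $\RSmi$-program for $\ts_\gamma$), Theorem~\ref{t:rsmi:sound:1} (to bound its DP-cost polynomially in $\size(v)$), and Lemma~\ref{l:costs}(\ref{i:cost:size}) (to turn the cost bound into a bound on the $\size$ of the output numeral), which is precisely the chain you spell out. Your alternative structural-induction remark is a reasonable more elementary route, but since you yourself settle on the soundness-based argument, the proposal matches the paper's approach.
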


\begin{proof}
This follows from  Lemmas~\ref{l:costs}\eqref{i:cost:size}
and~\ref{l:hssize} and Theorem~\ref{t:rsmi:sound:1}.
\end{proof}

Returning to the proof of Theorem~\ref{t:hs},
suppose $f=\lam{x_1}e_0$ is as in 
the theorem's hypothesis.
Since $\lam{x_1}e_0$ is TD-poly-cost,
there is a polynomial function
$q_0(\cdot)$  such that, for all  $\theta_0\of(x_1\of\gamma_1)$, 
$\cost_{TD}(e_0\theta_0)$ is $\leq q_0(\size(\theta_0(x_1)))$ which, since
$q_0(\cdot)$ is monotone, is $\leq q_0(\ts_{\gamma_1}(\theta_0(x_1)))$.
It follows then 
from Lemma~\ref{l:hssize} that there is an $\RSmi$ expression $\bnd_0$
with $x_1\of\gamma_1\entails \bnd_0\of\Nat$ such that  
$\bnd_0\theta_0\yields \numeral{q_0(\ts_{\gamma_1}(\theta_0(x_1)))}$
for all $\theta_0\of(x_1\of\gamma_1)$. 
So, $\bnd_0$ gives us an $\RSmi$-computable upper bound on $e_0$'s TD-cost.

\begin{figure*}[ph]
\begin{align*}
\note{$\mkern-15mu\lambda$-calculus}
\\
   &(x,\theta,k)
   \TO  (\theta(x),\theta,k)
   &&\rulenum{1}
   \\
   &(((\lam{x}e_0)\;e_1),\theta,k)
   \TO  
   (e_1,\theta,(\App_1\,(\lam{x_1}e_0))::k)
   &&\rulenum{2a}
   \\
   &(v_1,\theta,(\App_1\;(\lam{x_1}e_0))::k)
   \TO  
   (e_0,\theta[x_1\mapsto v_1],\App_2::k)
   &&\rulenum{2b}   
   \\
   &(v,\theta[x_1\mapsto v_1],\App_2::k)
   \TO   
   (v,\theta,k)
   \Quad9 (*)
   &&\rulenum{2c}
\\[2ex]
 \note{$\mkern-15mu$products} 
 \\
   &((),\theta,k) 
   \TO    
   (\uep,\theta,k) 
   &&\rulenum{4}
   \\
   &((e_1,e_2),\theta,k)    
   \TO    
   (e_1,\theta,(\Pair_1\,e_2)::k)
   &&\rulenum{4a}
   \\
   &(v_1,\theta,(\Pair_1\,e_2)::k)    
   \TO    
   (e_2,\theta,(\Pair_2\,v_1)::k)
   &&\rulenum{4b}
   \\
   &(v_2,\theta,(\Pair_2\,v_1)::k)    
   \TO     
   (\upair{v_1}{v_2},\theta,k)
   &&\rulenum{4c}
   \\
   &((\proj_j e),\theta,k)    
   \TO    
   (e,\theta,(\Proj\,j)::k)
   &&\rulenum{5a}
   \\   
   &(\upair{v_1}{v_2},\theta,(\Proj\,j)::k)   
   \TO   
   (v_j,\theta,k)
   &&\rulenum{5b}
\\[2ex]
   \note{$\mkern-15mu$coproducts}
   \\
   &((\inj_j e),\theta,k) 
   \TO    
   (e,\theta,(\Inj\,j)::k)
   &&\rulenum{6a}
   \\
   &(v,\theta,(\Inj\,j)::k) 
   \TO    
  ((\uinj_j{v}),\theta,k)
   &&\rulenum{6b}
   \\
   &\lefteqn{(\Case\,e_0\,\Of (\inj_1 x_1) \Rightarrow e_1; (\inj_2 x_2) \Rightarrow e_2),
   \theta,k)}
   \TO  
   (e_0,\theta,(\mathsf{Case}_1\,x_1\,e_1\,x_2\,e_2)::k)   
   &&\rulenum{7a}
   \\
   & \lefteqn{((\uinj_j v_j),\theta,(\mathsf{Case}_1\,x_1\,e_1\,x_2\,e_2)::k)}   
   \TO   
   (e_j,\theta[x_j\mapsto v_j],\mathsf{Case}_2::k)
   &&\rulenum{7b}
   \\
   & \lefteqn{(v,\theta[x_j\mapsto v_j],\mathsf{Case}_2::k)}   
   \TO   
   (v,\theta,k)\Quad9 (\dagger)
   &&\rulenum{7c}   
\\[2ex]
   \note{$\mkern-15mu$data/recursion}
   \\
   &((\Cn{\delta} e),\theta,k) 
   \TO    
   (e,\theta,(\Const\,\delta)::k)
   &&\rulenum{8a}
   \\
   &((v,\theta,(\Const\,\delta):k) 
   \TO    
   ((\uCn{\delta} v),\theta,k)
   &&\rulenum{8b}
   \\   
   &((\Dn{\delta} e),\theta,k)
   \TO   
   (e,\theta,\Destr::k)
   &&\rulenum{9a}
   \\
   & ((\uCn\delta v),\theta,\Destr::k)
   \TO   
   (v,\theta,k)  
   &&\rulenum{9b}
   \\
   &((\fold_{\delta} f\,e),\theta,k)
  \TO  
   (f(g(\Dn\delta e)),\theta,k)\Quad5(\ddagger)    
   &&\rulenum{10}
\end{align*}
  $(*)$~~\parbox[t]{0.65\textwidth}{\raggedright
  Since $\Smi$ is a type-level~1 formalism, environments are stacklike.
  Hence, the $\theta[x_1\mapsto v_1] \leadsto \theta$ transition amounts to  a stack-pop.}\smallskip
  
  $(\dagger)$~~\parbox[t]{0.65\textwidth}{\raggedright
  As in $(*)$, $\theta[x_j\mapsto v_j] \leadsto \theta$ 
  amounts to a stack-pop.}\smallskip
  
  $(\ddagger)$~~\parbox[t]{0.65\textwidth}{\raggedright
  If $\delta=\mu P$, then $(P\,(\fold_\delta f))$ simplifies to $S_1^-$-term $g$ per the polynomial functor reduction rules.}
\caption{The $\Smi$-CEK using the top-down evaluation 
         strategy.}
\label{f:cek:s1-}
\\[1ex] 
\end{figure*}

\subsection{An abstract machine for $\Smi$} 

To help make use of the above $\RSmi$-computable upper bound, 
we provide a CEK abstract machine (see Figure~\ref{f:cek:s1-})
derived from the TD-evaluation rules of Figure~\ref{fig:Smi:typing:eval}.
CEK machines are due to \citet{FF87}.\sidenote{Also see \citep{FF03}.}  
Our $\Smi$-CEK machine
is a set of rules for evaluating  a $\Smi$-closure, $e\theta$, by doing 
a left-to-right depth-first traversal of 
$e\theta$'s evaluation derivation-tree (built per the rules of Figure~\ref{fig:Smi:typing:eval}).  
Machine \emph{states} are triples, $(c,\theta,k)$, where
$c$ is a \emph{context} which is either an
expression  or else a value, $\theta$ is an environment, and 
$k$ is a \emph{continuation}.  Here a {continuation} is a
stack of to-do items for the derivation-tree traversal. 
The machine's initial state  is $(e,\theta,[\,])$ where $e\theta$
is the closure to be evaluated and $[\,]$ is the empty stack.
The final state is of the form $(v,\theta',[\,])$ where $e\theta\yields v$.
The rules tell us how to rewrite the initial state to the final state.  
\begin{itemize}
\item
If the left-hand side of a rule has a state of the form
$(e,\theta,k)$ where $e$ is an expression, then either:
\begin{itemize}
  \item $e$ is immediately reduced to a value (rules R1, R2, and R3), or 
  \item $e$ is broken into subexpressions, one of which, $e'$, 
is picked out to be evaluated next and with a record, $r$, 
pushed onto the continuation stack where $r$ indicates what
is to be done with the value of $e'$
(Rules R2a, R4a, R5a, R6a, R7a, R8a, and R9a), or 
  \item $e$ is fold-expression which is expanded out one step
  (Rule 10).
\end{itemize}
\item
If the left-hand side of a rule has a state of the form
$(v,\theta,r::k)$ where $v$ is a value and $r$ is the record
at the top of the stack, then the record details what to do
with $v$ and $\theta$ to either: 
\begin{itemize}
  \item produce another value (rules  R4c, R5b, R6b, R7b, R8b, and R9b) 
	or 
  \item pop the environment stack (rules R2c and R7c) or
  \item evaluate another expression $e'$ while saving $v$
  for later use with the value of $e'$ (rules R2b and R4b). 
\end{itemize}
\end{itemize}

\emph{Note:}
For each vertex in the derivation tree for $e\theta$, there
are at most three steps in the CEK's execution.  Hence, in a 
CEK evaluation of our $e_0\theta_0$, the machine makes at most 
$3\cdot q_0(\ts_{\gamma_1}(\theta_0(x_1)))$ steps. 

\subsection{Implementing a CEK next-step function in $\RSmi$ to evaluate $e_0$}  

We want an $\RSmi$-implementation of the CEK specialized to evaluations 
of $e_0\theta_0$ for $\theta_0\of(x_1\of\gamma_1)$.  To do this
we need to represent CEK states with a $\Smi$/$\RSmi$-type 
$\State$ and write an $\RSmi$-function $\step \of 
(\Unit+\safe\State) \to\safe\State$  such that:
\begin{align*}
  &\step(\inj_1 ()) \;= \hbox{ the initial $\State$ of a CEK-evaluation of $e_0$.}\\ 
  &\step(\inj_2 (s)) = \begin{cases}
  s', & \hbox{if }\parbox[t]{6cm}{$s$ is not final and $s'$ is the next $\State$ after $s$ in this 
	CEK-evaluation of $e_0$;}\\
  s, & \hbox{otherwise.} 
	\end{cases}
\end{align*}
Thus from the value of 
$(\fold_\Nat\,\step\;\numeral{m})\theta_0$, we can read off the 
value of $e_0\theta_0$ provided $m\geq 3\cdot q_0(\ts_{\gamma_1}(\theta_0(x_1)))$.

To formalize $\State$ and $\step$ we need sketch how to represent
the various components of states, e.g., expressions, values, 
continuations, and environments.
Since we have branching types, it is straightforward to 
define a data-type for parse trees of $\Smi$-expressions by:
\begin{align*}
 &    \!\datatype \Term = \Var \Of \String \synsep \mathsf{Lambda}\Of \String\times \Term \synsep \dots 
\end{align*}
The continuation stack can be straightforwardly represented 
as a simple list of appropriate records:  
\begin{align*}
   & \datatype\, \Record = \mathsf{App}_1 \Of \Term \synsep \dots 
   \\
   & \mathsf{type}\, \Kont = \List_\Record
\end{align*}
We shall discuss how to 
represent
environments shortly. Let us first note that under the assumptions
that we have made thus far, all of CEK rules except R1, R7b, and R10
are shallow rearrangements of fragments of syntax and values and 
thus are straightforward to express in $\RSmi$ as a $\safe\State$ to $\safe\State$ transformation.  In regards to rule R10, in a 
$(P\,(\fold_\delta f))\leadsto g$
reduction, the number of reduction steps depends solely on $P$. 
Thus, relative to $e_0$ (which is fixed), each of these 
particular reductions can be done in the course of a safe-to-safe 
computation in carrying out rule R10.

For environments, we need to represent just those environments adequate 
for those expressions appearing in a $e_0\theta_0\yields v$ evaluation.
Let $\hat\gamma_0,\dots,\hat\gamma_{\ell-1}$ be a list of ground $\Smi$-types
that includes all the  ground types occurring in $\lam{x_1}e_0$'s type
derivation and let: 
\begin{align*}
	\mathsf{type}\, \Grd = \hat\gamma_0+\dots+\hat\gamma_{\ell-1}
\end{align*}	
Without loss of generality, we take 
$\hat\gamma_0=\gamma_0$ and
$\hat\gamma_1=\gamma_1$
For each $i<\ell$, let $\hat{c}_i$ be some constant of type $\hat\gamma_i$.
(Recall that we have forbidden empty types.)  We can extract
a $\hat{\gamma_i}$-value from a $\Grd$-value by:
\vspace*{-4ex}\begin{leftbar}\begin{align*}
  & \extract_{i} \of \safe\Grd \to \safe{\hat\gamma_i}
  \\
  & \extract_{i}\, u = \Case u \Of (\inj_{i+1,\ell} x) \Rightarrow x;\; 
  \mathsf{otherwise} \Rightarrow \toSafe(\hat{c}_i)
\end{align*}\end{leftbar}
We shall also need versions of the various $\Smi$ constructors and
destructors for $\Grd$.  For example:
\vspace*{-2ex}\begin{leftbar}\begin{align*}
  & \mathit{pair} \of \safe\Grd\times\safe\Grd \to \safe\Grd
  \\
  &\!\left\{\mathit{pair}(\inj_{i+1,\ell} v_1,\inj_{j+1,\ell} v_2) = \Strut{1cm}
  \begin{cases}
      \inj_{1,\ell}(\,\hat{c}_0\,),     
         & \hbox{if $\hat{\delta}_i\times\hat{\delta}_j$ has no 
                 $\hat{\delta}$-index;}
      \\
      \inj_{k+1,\ell}(\,(v_1,v_2)\,), 
         & \hbox{if $k$ is the least $\hat{\delta}$-index of 
                  $\hat{\delta}_i\times\hat{\delta}_j$}
  \end{cases}\right\}_{i,j< \ell}
\end{align*}\end{leftbar}
We leave it to the reader to fill in the details for the 
analogues of various other constructors and destructors.

Let $\String$ be a type for strings and 
let $\Alist=\List_{\String\times\Grd}$ which we use
to represent an environment as an association list.  To extend an
environment with a new variable binding (as in Rules R2b and R7b),
it suffices to add a $\String$-$\Grd$ pair to the front of the 
$\Alist$. 
Searching the environment (as in Rule R1) involves a 
bit more work. 
Let us assume that
$\equal \of \Nat\times\safe\String\to\safe\String\to \safe{(\Unit+\Unit)}$
is an $\RSmi$-function 
such that $\equal(\numeral{m},s,s')=(\inj_1\,\safe{()}) ~(\equiv \mathsf{true})$ if 
$s$ and $s'$ are equal $\String$s of length $\leq m$, and 
$=(\inj_2\,\safe{()}) ~(\equiv \mathsf{false})$ otherwise.  
Then we can write $\lookup$ as in Figure~\ref{fig:lookup}.
\begin{figure}[t]
\begin{leftbar}\begin{align*}
  & \lookup \of \Nat \times \safe\String \times \safe\Alist \to\safe\Grd 
  \\
  &\lookup\,(n,s,as) = \search\,(n,as)
  \\
  & \Quad2 \hbox{where } \search\of\Nat\times \safe\Alist \to\safe\Grd 
  \\
  &\Quad5   \search\,(\Zero,\,as)
  \Quad{5.8} = (\inj_{1,\ell}\,\hat{c}_0)
  \\
  &\Quad5   \search\,((Succ\,n'),[])
  \Quad{4.3} = (\inj_{1,\ell}\,\hat{c}_0)
  \\
  &\Quad5 \search\,((Succ\,n'),\,(s',u)::as') = \\
  &\Quad7
  \Let w = \search(n',\,as')\\
  &
  \Quad{7.5} \In\,\Case \equal(n,s,s') \Of\,(\inj_1 x_1)\Rightarrow u; (\inj_2 x_2) \Rightarrow w
\end{align*}\end{leftbar}\label{fig:lookup}\caption{The $\lookup$ function}
\end{figure}

Now suppose $m_0\geq q_0(\ts_{\gamma_1}(\theta_0(x_1)))+k$ where $k$ is 
the length of the longest variable name in $\lam{x_1}e_0$
(and the $g$'s in the possible rule R10 expansions).
Then it is clear that
$\lam{(s,as)}(\lookup\,(\numeral{m_0},s,as))$
will successfully look up values in the environment 
the R1-steps of the CEK-evaluation of $e_0\theta_0$. 

To finally construct our $\RSmi$-function $f'$
that computes $f$ (=$\lam{x_1}e_0$), we need two more types. 
\begin{align*}
   & \!\datatype \Context = \mathsf{E} \Of \Term 
                          \synsep \mathsf{V} \Of \Grd 
   \\
   & \mathsf{type}\, \State = \Context \times\Alist\times \Kont
\end{align*}
For the type $\Context$, the $\mathsf{E}$ constructor is for expressions,
the $\VAL$ constructor is for ground-type values.
We outline $f'$ in Figure~\ref{fig:f0'}.  Filling in the 
details of $f'$ and checking that this $f'$ suffices 
are both straightforward and left to the reader. 

\begin{figure}[t]
\begin{leftbar}
\begin{align*}
   &f' \of \gamma_1 \to \gamma_0
   \\
   &f' = \lam{x_1} \toNorm\big( \\
   & \Quad5 \Let\; (t_0\of\type{exp}) = \hbox{(an expression 
   constructing the parse tree for $e_0$)};
   \\
   & \Quad{6.5} (as_0\of\Alist) = (nx_1,(\inj_{1,\ell} x_1))::[];
   \quad \Comment{$nx_1 =$ the $\String$ name of variable $x_1$}
   \\
   & \Quad{6.5} (s_0\of\safe\State) = \toSafe(\,((\mathsf{E}\, t_0),as_0,[])\,);
   \\
   & \Quad{6.5} (b_0\of\Nat) = 
   \left(\hbox{an expression computing 
   $\numeral{3\cdot q_0(\ts_{\gamma_1}(x\hbox{'s value}))}$}\right);
   \\
   & \Quad{7.5} \vdots \Quad1 
   \hbox{\em(definitions of the various auxiliary functions as mentioned above)}
   \\[1ex]
   & \Quad{6.5} \step \of (\Unit + \safe\State) \to \safe\State               
   \\
   & \Quad{6.5} \step\,y = \Case y \Of 
   \\
   &\Quad{11}   (\inj_1\, ()) \Rightarrow s_0; \\
   &\Quad{11}   (\inj_2\, (c,e,k)) \Rightarrow 
   \\
   &\Quad{12}\Case\, (c,e,k) \Of
   \\
   & \Quad{13} ((\mathsf{E}\, (\mathsf{Var} \,nx)\,),as,k) 
               \Rightarrow ((\VAL\, (\lookup(b_0,\,nx,\,as))),as,k);
   \\
   & \Quad{13.5} \quad\vdots  \quad
   \hbox{\em(cases left to the reader)} 
   \\[1ex]
   & \Quad{13} ((\mathsf{V}\, v),as,[])
               \Rightarrow ((\mathsf{V}\, v),as,[])
  \\
   & \Quad{6.5} (c_*,e_*,k_*) =  \fold_\Nat \step \;b_0  \\
   &\Quad{5.25} \In \, \Case c_* \Of \\
   &\Quad7     (\VAL \,v) \Rightarrow \extract_1(v); \\
   &\Quad7     \mathsf{otherwise} \Rightarrow \hat{c}_0 
   \\
   &\Quad{4.5} \big)         
\end{align*}\end{leftbar}
\caption{Definition of $f'$} \label{fig:f0'}
\end{figure}

\section{Serialization, compression, and factorization}\label{S:serial}

This section  establishes (in Theorem~\ref{t:dp:byhand}) that each DP-poly-cost 
$\Smi$-computable
function $f\of\gamma_1\to\gamma_0$ has a factorization:
\begin{align*}  
   &  f = \deserialize_{\gamma_0}\circ \widehat{f} \circ \serialize_{\gamma_1}
\end{align*}
where 
\begin{itemize}
  \item $\serialize_{\gamma_1}$  DP-poly-cost maps each $\gamma_1$-value to 
  a particular hereditarily sequential 
  $\sim$-representation (Definition~\ref{d:val:rep}\eqref{i:simrep}) 
  of that value,
  \item $\deserialize_{\gamma_0}$ TD-poly-cost maps 
  each hereditarily sequential representation of a $\gamma_0$-value to 
  the $\gamma_0$-value so represented, and 
  \item $\widehat{f}$ is TD-poly-cost $\Smi$-computable function.
\end{itemize}
Thanks to Theorem~\ref{t:hs}, both 
$\widehat{f}$   and  $\deserialize_{\gamma_0}$ above are
$\RSmi$-computable.   Thus, this factorization 
serves to isolate the completeness question to the 
$\serialize_{\gamma_1}$ functions. 

\subsection{A hereditarily sequential representation of values}

For each $\gamma$, let $\cquote{\gamma}$ be the 
$\String$ version of $\gamma$ where we have fixed 
some sensible $\String$-representation of the syntax of $\Smi$ ground-types.
%
Let $\Vtg = \List_\Vertex$ where:
\begin{align*}
  &\mathsf{datatype} \, \Vertex = \VERT_{()} 
     \synsep \VERT_{\inj_1} \Of \String\times\Nat
     \synsep \VERT_{\inj_2} \Of \String\times\Nat
  \\
  &\Quad{7}
     \synsep \VERT_{(,)} \Of \String\times\Nat\times\Nat
     \synsep \VERT_{\mu} \Of \String\times\Nat
\end{align*}

\begin{definition}\label{d:val:rep} 
Suppose $L=[u_{n-1},\dots,u_1,u_0]$ is a $\Vtg$-value.
\begin{asparaenum}[(a)]
  \item
  	The $L$-item $u_i$ is  said to have $L$-\emph{address} $i$. 
  \item
  	$L$ \emph{represents} a value $v$ iff there is a one-to-one map 
	between $v$'s vertices 	and $L$'s items  such that:
	\begin{itemize}
	  \item 
	    each $\uep$-vertex maps to a $\VERT_{()}$-item;
	  \item 
	  	each type-$(\gamma_1\times\gamma_2)$ vertex $\upair{v_1}{v_2}$
	  	maps to an  
		$\VERT_{(,)} (\cquote{\gamma_1\times\gamma_2},\allowbreak a_1, a_2)$
		where  $a_j$  is the $L$-address of $v_j$'s item, for $j=1,2$,
	  \item 
	    each type-$(\gamma_1+\gamma_2)$ vertex $(\uinj_j\,v_j)$
	  	maps to a $\VERT_{\inj_j} (\cquote{\gamma_1+\gamma_2},
		\allowbreak a_j)$ 
		where $a_j$ is the $L$-address for $v_j$'s item; 
	  \item 
  		each type-$\muP$ vertex $(\uCn{\muP}\,v_0)$
  		maps to a $\VERT_{\mu} (\cquote{\muP\,},a_0)$ where 
  	    $a_0$ is the $L$-address for $v_0$'s item; and
	  \item the order of the items in $L$ corresponds to a topological
	    sort of $v$.
\end{itemize}  
  \item 
  \label{i:simrep}%
  $L$ $\sim$-\emph{represents} a value $v$ iff there 
  is some $v'\sim v$ such that $L$ represents $v'$. 
\end{asparaenum}\end{definition}

\begin{lemma}\label{l:rep:size:bnd}
   For each $\gamma$, there is a quadratic function $q_\gamma(\cdot)$
   such that, for each type-$\gamma$ value $v$, if $L$ represents
   $v$, then $\size(L) \leq q_\gamma(\size(v))$. 
\end{lemma}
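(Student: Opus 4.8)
The plan is to reduce everything to a count of vertices, then invoke Lemma~\ref{l:size:bnd}. Write $n$ for the number of items of $L$. By the one-to-one correspondence in Definition~\ref{d:val:rep} between $v$'s vertices and $L$'s items, $n$ equals the number of vertices of $v$, so Lemma~\ref{l:size:bnd} gives $n\le k_\gamma\cdot(1+\size(v))$; that is, $n$ is linear in $\size(v)$. Since a quadratic composed with a linear function is again quadratic, it suffices to produce a quadratic $p$ with $\size(L)\le p(n)$.

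First I would count the contributions to $\size(L)$, recalling from Definition~\ref{d:size} that only data-type constructor vertices are counted, so the $\uep$-, $\uinj_j$-, and $\underline{(,)}$-vertices that glue tuples, injections, and units together are free. There are two sources. (i) The list spine of $\Vtg=\List_\Vertex$: a list of $n$ items has $n$ $\Cons$-constructor vertices and one $\Empty$-constructor vertex, contributing $n+1$. (ii) Each item $u_i$ contributes one constructor vertex for the $\Vertex$ datatype, at most one string $\cquote{\gamma'}$ --- where $\gamma'$ is the type of the $v$-vertex matched to $u_i$ --- and at most two $\Nat$s, each of which is an $L$-address and hence a number $a<n$, so of the form $\numeral{a}$ with $\size(\numeral{a})=a+1\le n$.

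Two observations make the bound go through. First, only finitely many ground types can occur as the type of a vertex in a type-$\gamma$ value: the set of such types is obtained from the subterms of $\gamma$ by repeatedly adjoining one-step unfoldings $P(\muP)$, and a short induction on the structure of $\gamma$ shows this closure is finite; hence $\size(\cquote{\gamma'})\le c_\gamma$ for a constant $c_\gamma$ depending only on $\gamma$. Second, every address stored in an item is $<n$, and this is where the quadratic --- and only quadratic --- blow-up comes from, since $L$ has $n$ items each carrying unary $\Nat$s of magnitude up to $n$. Summing the two sources, $\size(L)\le(n+1)+n\cdot(1+c_\gamma+2n)=2n^2+(2+c_\gamma)n+1$, and composing with $n\le k_\gamma\cdot(1+\size(v))$ yields a suitable quadratic $q_\gamma$. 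The whole argument is bookkeeping; the one spot that needs a moment's care is the finiteness of the set of vertex-types, which is exactly what lets the type-strings $\cquote{\gamma'}$ be absorbed into a constant rather than becoming a second source of growth.
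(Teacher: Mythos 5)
Your proof is correct and follows essentially the same route as the paper's: relate the number of $L$-items to the number of $v$-vertices via the bijection, bound each item's size by a constant (absorbing the finitely many possible type-strings $\cquote{\gamma'}$) plus two unary addresses of magnitude at most $n$, multiply to get a quadratic in $n$, and compose with Lemma~\ref{l:size:bnd}. Your accounting of the list spine and the explicit justification of the finiteness of the vertex-type set are slightly more detailed than the paper's, but the argument is the same.
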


\begin{proof}
It is straightforward that for each $\gamma$, there is a finite
set of types $T_\gamma$ such that, for each type-$\gamma$-value $v$, 
the types labeling vertices in $v$ are all from $T_\gamma$. 
Now, fix $\gamma$, a type-$\gamma$-value $v$, and an $L$ that 
represents $v$.  Let $n_v$ be the total number of vertices in $v$, 
and hence,
the number of items in $L$. Thus, if $a$ is an $L$-address occurring
within an $L$-item, then $\size(a) \leq n_v+1$.  Let 
$t_\gamma=\max\set{\size(\cquote{\sigma}) \suchthat \sigma\in T_\gamma}$. 
Then it follows from Definition~\ref{d:val:rep} that 
$2 n_v+t_\gamma +1$ is an upper bound on the size of any $L$-item.  
Hence, $\size(L) \leq n_v\cdot(2 n_v+t_\gamma +2)$.  
By Lemma~\ref{l:size:bnd}, $n_v\leq k_\gamma\cdot(1+\size(v))$. 
Thus, the  quadratic bound on $\size(L)$ follows. 
\end{proof}

\begin{lemma}[Deserialization]\label{l:deser}
  For each $\gamma$, there is a TD-poly cost 
  $\Smi$-function 
  $\deserialize_\gamma\of\Vtg\to\gamma$ such that if
  $(\Gamma\entails e \of\Vtg)\theta \yields L$
  where $L$ represents a type-$\gamma$-value $v$,
  then $(\Gamma\entails \deserialize_\gamma(e)\of\gamma)\theta\yields v$. 
\end{lemma}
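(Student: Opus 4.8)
\medskip\noindent\emph{Proposed proof.}\enspace The plan is to reconstruct $v$ from $L$ by a single structural recursion over the list $L$, processing its items in the order $u_0,u_1,\dots,u_{n-1}$ and maintaining, as the recursion's accumulator, an association list binding each already-processed $L$-address $a$ to a value-term-graph equal to the subgraph of $v$ rooted at the vertex with address $a$, encoded as an element of the fixed finite coproduct type $\Grd=\hat\gamma_0+\dots+\hat\gamma_{\ell-1}$. Here $T_\gamma$ is the finite set of ground types that can label vertices of a type-$\gamma$-value --- its finiteness is exactly the first claim in the proof of Lemma~\ref{l:rep:size:bnd} --- and $\hat\gamma_0,\dots,\hat\gamma_{\ell-1}$ enumerates $T_\gamma$ with $\hat\gamma_0=\gamma$; the coproduct $\Grd$ with its injections, the $\extract_i$ functions, and its ``generalized constructors'' are used just as in the proof of Theorem~\ref{t:hs}. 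Because Definition~\ref{d:val:rep} requires the order of $L$'s items to be a topological sort of $v$, every address stored inside item $u_i$ is $<i$, so when $u_i$ is processed all the $\Grd$-values it references are already in the accumulator.

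Concretely, $\deserialize_\gamma$ will be $\lam{x}\bigl(\text{``}\extract_0\text{ of the head of''}\ (\fold_{\List_\Vertex} f\; x)\bigr)$, where $f(\inj_1())$ is the empty accumulator and $f(\inj_2(u,\as))$ prepends to $\as$ the pair $(a,\ \mathit{mk}(u,\as))$ with $a$ the next address (one more than the address stored at $\as$'s head, or $0$ if $\as$ is empty). The key ingredient $\mathit{mk}$ is a bounded nested case split: first on the constructor of $u$ (one of $\VERT_{()},\VERT_{\inj_1},\VERT_{\inj_2},\VERT_{(,)},\VERT_{\mu}$), then on the $\String$ type-label stored in $u$, which for a genuine representation of a type-$\gamma$-value ranges over the finite set $\{\cquote\sigma\of\sigma\in T_\gamma\}$. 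In the $\VERT_{(,)}$ branch with label $\cquote{\gamma_1\times\gamma_2}$ and child addresses $a_1,a_2$, one looks up $a_1$ and $a_2$ in $\as$, strips the outer $\Grd$-injection off each result to expose its $\gamma_1$- resp.\ $\gamma_2$-typed content (this succeeds because $v$ is well typed, so $u_{a_j}$ has type $\gamma_j$), pairs them, and re-injects into $\Grd$ via the injection for $\gamma_1\times\gamma_2$; the $\VERT_{\inj_j}$ and $\VERT_{\mu}$ branches are analogous, applying $\inj_j$ resp.\ $\Cn{\muP}$, and $\VERT_{()}$ yields the $\Grd$-injection of $()$. All of this is ordinary $\Smi$ syntax: address lookup in the association list is a plain structural recursion (as with $\lookup$ in Figure~\ref{fig:lookup}, but keyed by $\Nat$ and needing no bound), and the label case split is a finite string decision whose number of alternatives depends only on $\gamma$.

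Correctness is an induction along the topological order: assuming $L$ represents $v$ via the bijection of Definition~\ref{d:val:rep}, one checks that after $u_i$ has been processed the accumulator binds each $a\le i$ to a value-term-graph for the subgraph of $v$ rooted at address $a$, the inductive step being a direct match against the clauses of Definition~\ref{d:val:rep} and using that the needed addresses are $<i$. Since the root of $v$ carries the maximal address $n-1$, after the whole fold the head of the accumulator is the $\Grd$-injection of $v$, and $\extract_0$ returns $v$. (A lookup in the $\Smi$ semantics aliases rather than copies a value-term-graph, so the sharing of $v$ is reproduced and we obtain $v$ on the nose; in any event $\Smi$ cannot distinguish bisimilar values.)

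For the cost bound, the fold performs $n_v$ steps, where $n_v$ is the number of items in $L$; since $L$'s list spine alone contributes $n_v+1$ data-constructor vertices, $n_v\le\size(L)$. Each step does at most two association-list lookups, each of cost $O(n_v)$ since the accumulator has at most $n_v$ entries, plus a constant (in $\size(L)$, depending on $\gamma$) amount of case analysis and value construction; the final $\extract_0$ is $O(1)$. Hence $\cost_{TD}\bigl((\deserialize_\gamma\,x)\theta\bigr)=O(\size(\theta(x))^2)$, so $\deserialize_\gamma$ is TD-poly-cost. The main obstacle is not the recursion itself but this typing bookkeeping: because $\Vtg$ is effectively untyped (its type information is carried as strings), reconstruction must dispatch on runtime type strings while still producing statically well-typed output, and the remedy --- funnelling every intermediate value through the fixed finite coproduct $\Grd$ and exhaustively casing over $T_\gamma$ so that each strip/build/inject step is justified by the well-typedness of $v$ --- must be laid out carefully. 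Finally, note that $\Vtg$ is hereditarily sequential, so once $\deserialize_\gamma$ is in hand Theorem~\ref{t:hs} immediately upgrades it to an $\RSmi$-computable function, which is what the subsequent sections invoke.
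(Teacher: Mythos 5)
Your construction is correct and is precisely the ``fairly standard task'' that the paper's proof sketch explicitly leaves to the reader: a single pass through $L$ in topological order, accumulating an address-indexed table of already-built subgraphs funnelled through the finite coproduct $\Grd$ and the $\extract$/generalized-constructor machinery of Theorem~\ref{t:hs}, with a default value on malformed input. The only quibble is cosmetic: each association-list lookup also pays for $\Nat$-key comparisons on keys of size up to $n_v$, so the honest bound is cubic rather than quadratic in $\size(L)$ --- which of course changes nothing about TD-poly-cost.
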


\begin{proofsketch}
Fix some default type-$\gamma$-value for $\deserialize_\gamma$ to 
return when $L$ fails to represents a type-$\gamma$-value.
Testing whether a $\Vtg$-value does represent some type-$\gamma$-value $v$,
and if so, constructing this $v$ are fairly standard tasks. 
We leave the details to the reader. 
\end{proofsketch}

\subsection{Compression}

For each $\gamma$, we want $\serialize_\gamma$ to be a $\Smi$-function 
that maps each type-$\gamma$-value $v$ to a $\Vtg$-value that
$\sim$-represents $v$. 
Suppose for the moment $\serialize_\gamma$ has this property
and that $v$ and $v'$ are type-$\gamma$ values
with $v\sim v'$.   Then, as $v\sim v'$, 
we have $\serialize_\gamma(v)\sim\serialize_\gamma(v')$.
By $\Vtg$'s  sequentiality and Definition~\ref{d:val:rep},
it follows that $\serialize_\gamma(v)$ and $\serialize_\gamma(v')$
represent the identical value.
Thus
$\serialize_\gamma$ essentially collapses each bisimilarity-equivalence 
class to a single representative element.
To write $\serialize_\gamma$ we thus need to determine what 
this representative element should be.  One choice, the 
no-sharing/tree $\sim$-equivalent of each value, is ruled out by 
the exponential size blow-ups we  observed in \S\ref{S:intro}.
Instead we choose the $\sim$-equivalent value with \emph{maximal sharing}
and thus minimum size. 
The task of finding this value (it turns out to be unique) is a version 
of the \emph{common subexpression elimination problem} that 
arises in compiler optimization.  
There is a well-known
linear-time algorithm for this problem due to  \citet{DST:comm:sub80}.\sidenote{The linear-time bound is for a RAM model of computation.}
In the terminology of this paper, the algorithm takes a value $v$
(in something like an adjacency-list representation of the dag) and returns 
a bisimilar value $v'$ with $\size(v') = \cs_\gamma(v)$. 
We shall call this $v'$ the \emph{compressed version of $v$}.
By making use of this algorithm (adapted for $\Smi$) we can show:

\begin{lemma}[Serialization]\label{l:compress}
  For each $\gamma$, there is a closed DP-poly-cost $\Smi$-function
  $\serialize_\gamma\of\gamma\to\Vtg$ such that
  if $(\Gamma\entails e\of\gamma)\theta\yields v$, then
  $(\Gamma\entails \serialize(e)\of\Vtg)\theta\yields L$, 
  where the $\Vtg$-list $L$ represents the compressed form of $v$. 
\end{lemma}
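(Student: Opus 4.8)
The plan is to translate into $\Smi$ the classical common-subexpression-elimination algorithm of \citet{DST:comm:sub80}, letting the dynamic-programming fold supply the bottom-up traversal of $v$'s dag for free. I would define $\serialize_\gamma$ by recursion on the structure of $\gamma$: for $\Unit$ it returns the singleton list $[\VERT_{()}]$; for $\gamma_1+\gamma_2$ it serializes the injected component and puts a fresh $\VERT_{\inj_j}$ vertex on top; for $\gamma_1\times\gamma_2$ it serializes both components and puts a fresh $\VERT_{(,)}$ vertex on top; and for a data type $\delta=\muP$ it is a $\fold_\delta$ whose step function, given the (already serialized) recursive children supplied by the fold together with the fixed injection/pair/unit wrapping of a single $\delta$-constructor, assembles a $\Vtg$-list with a fresh $\VERT_{\mu}$ vertex on top. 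At \emph{every} such ``combine'' point the glued list is then passed through an auxiliary $\Smi$-function $\mathit{compress}\of\Vtg\to\Vtg$ that takes a topologically sorted but possibly redundant $\Vtg$-list and returns the (generally shorter) list representing its maximally shared, hence minimal-size, bisimilar value. Re-running $\mathit{compress}$ after each step, rather than once at the very end, is essential: as \S\ref{S:intro} shows, an uncompressed serialization of $v$ can be of tree size, so continual re-compression is what keeps every intermediate $\Vtg$-value polynomially bounded.

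Implementing $\mathit{compress}$ is first-order list programming of the same flavor as the machinery of \S\ref{S:seq}. One walks the input list in its topological order, maintaining --- as association lists, searched the way the $\lookup$ function of \S\ref{S:seq} is --- a \emph{representative} map sending each already-processed address to the address of its canonical equivalent, and a \emph{signature} map sending each pair formed of a vertex label and the list of representatives of that vertex's children to the canonical address carrying it. A vertex is new iff its signature is unseen, in which case it becomes its own representative; otherwise it is mapped to the stored representative. A final pass discards the non-representative items, compactly renumbers the survivors, and rewrites every child address through the representative map followed by the renumbering. This is plainly $\Smi$-computable over our branching types, and --- since all lists involved have length and address magnitudes bounded by a fixed polynomial in the length of the input (addresses being unary $\Nat$s of value at most that length) --- it is DP-poly-cost in $\size$ of its argument; we need only polynomial, not the linear, running time of \citet{DST:comm:sub80}, so association-list search in place of hashing costs nothing.

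Correctness follows by induction on $v$'s dag together with the outer recursion on $\gamma$: at a combine point the children's serializations represent the compressed forms of the children (induction hypothesis), so their side-by-side union with the fresh top vertex is a correct, if redundant, $\Vtg$-representation of that node's subvalue, and $\mathit{compress}$ turns it into a representation of the compressed form (which is precisely the correctness claim of \citet{DST:comm:sub80} recast for value term graphs); at the root one obtains a $\Vtg$-list representing the compressed form of $v$. For the cost bound, the folds run under the DP semantics, so each distinct sub-dag of $v$ is serialized exactly once, and by Lemma~\ref{l:size:bnd} the total number of vertices in $v$ --- hence the number of combine steps --- is at most $k_\gamma\cdot(1+\size(v))$; by Lemma~\ref{l:rep:size:bnd}, applied to the subvalue handled at each step (whose size is at most $\size(v)$ and whose compressed form is no larger), every serialization arising in the computation has $\size$ at most a fixed quadratic in $\size(v)$, so every list handed to a combine step or to $\mathit{compress}$ is polynomially bounded. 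Hence each combine step together with its $\mathit{compress}$ call is DP-poly-cost in $\size(v)$, and since there are only linearly many such steps, $\cost_{DP}\bigl((\serialize_\gamma\,x)\theta\bigr)$ is polynomial in $\size(\theta(x))$. The main obstacle will be getting $\mathit{compress}$ exactly right --- the interplay of the representative and signature maps with the address renumbering --- so that its output genuinely attains minimal size (i.e.\ represents the value of size $\cs_\gamma(v)$) rather than merely some correct but under-compressed representation; the rest is bookkeeping already exercised in \S\ref{S:seq}.
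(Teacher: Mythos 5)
Your proposal follows essentially the same route as the paper's proof sketch: structural induction on $\gamma$ with the same $\Unit$/sum/product cases, a fold over each data type that re-runs the \citet{DST:comm:sub80}-style dag compression at every combine point, and the same cost argument via Lemma~\ref{l:rep:size:bnd} for the polynomial bound on intermediate $\Vtg$-lists together with the DP semantics for the linear bound on the number of recursion steps. The only difference is that you spell out an association-list implementation of the compression subroutine, which the paper leaves as an appeal to the cited algorithm ``adapted for $\Smi$.''
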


\begin{proofsketch}
  We proceed by induction on the structure of $\gamma$. 
  
  \CASE $\gamma=\Unit$.   Then let $\serialize_\gamma (x) =
  [\VERT_{()}]$, which is clearly TD-poly-cost. 
  
  \CASE $\gamma=\gamma_1+\gamma_2$ where 
  the lemma holds for $\gamma_1$ and $\gamma_2$. 
  Then let: 
  \begin{gather*}
  \serialize_\gamma(x) = \\ \Quad1
  \Case x \Of \,\Set{(\inj_j x_j) \Rightarrow 
  \begin{array}{l} \Let \, n_j = \hbox{the length of $\serialize_{\gamma_j}(x_j)$} \\[0.5ex]
  \; \In\;
  \left(\VERT_{\inj_j} (\cquote{\gamma},n_j-1)\right)::\left(\serialize_{\gamma_j}(x_j)\right)
  \end{array}
  }_{j=1,2}
  \end{gather*} 
  Then since both $\serialize_{\gamma_1}$ and 
  $\serialize_{\gamma_2}$ are TD-poly-cost,  it follows that 
  $\serialize_\gamma$ is too. 
  
  \CASE $\gamma=\gamma_1\times\gamma_2$ where 
  the lemma holds for $\gamma_1$ and $\gamma_2$. 
  Then let: 
  \begin{align*}
  \lefteqn{\serialize_\gamma(x) =}\\
   & \Quad1 \Let\,     L_1 = \serialize_{\gamma_1}(\pi_1(x)); \\
   & \Quad{2.6} L_2 = \serialize_{\gamma_2}(\pi_2(x)); \\
   & \Quad{2.6} L = \parbox[t]{8.5cm}{\raggedright the
   $\Vtg$-representation of the compressed version of $x$'s value computed
   by the dag compression algorithm using $L_1$ and $L_2$}
   \\
   &\Quad1\;\;\;\In\;L 
  \end{align*}  
  From $L_1$ and $L_2$ we have complete 
  information about a type-$(\gamma_1\times\gamma_2)$ value $v'$ 
  that is bisimilar to $x$'s value and   
  whose
  two branches are compressed but disjoint. Clearly this is enough
  information to run the dag compression algorithm on $v'$.  
  Hence, we can construct $L$ as required in polynomial time.
  
  \SCASE $\gamma=\Ltree = (\Leaf \synsep \Fork \Of 
  \gamma_0\times\Ltree\times\Ltree$)  
  where the lemma holds for $\gamma_0$. 
  Then $\serialize_\gamma$ is given by:
  \begin{align*}
    & \serialize_{\Ltree} (\Leaf) \\
    & \Quad1 = 
    [\VERT_\mu(\cquote{\,\Ltree},1),\;
     \VERT_{\inj_1}(\cquote{\Unit + \gamma_0\times \Ltree \times \Ltree)},0),\;
     \VERT_{()}]
     \\
    & \serialize_{\Ltree} (\,\Fork(a_0,t_1,t_2)\,) \\
    & \Quad1 = \Let \, L_0 = \serialize_{\gamma_0}(a_0)\\
    & \Quad{3.9} L_1 = \serialize_\Ltree(t_1) \\    
    & \Quad{3.9} L_2 = \serialize_\Ltree(t_1) \\    
    & \Quad{3.9} L = \parbox[t]{8.5cm}{\raggedright the
   $\Vtg$-representation of the compressed version of $\Fork(a_0,t_1,t_2)$'s value computed
   by the dag compression algorithm using $L_0$, $L_1$, and $L_2$}  \\
   &\Quad3 \In\;L 
  \end{align*}
It follows as in the $(\gamma_1\times\gamma_2)$-case that we have 
enough information to compute $L$ as required.  To show that 
$\serialize_{\gamma}$ is DP-poly-cost let us first note that by
Definition~\ref{d:val:rep} and Lemma~\ref{l:rep:size:bnd},
in the computation of $\serialize_{\gamma}(x)$, 
every $L_0$, $L_1$, $L_2$, and $L$ in each step of the recursion is
of size $\leq q_\gamma(\size(x$'s value)).  It follows then that
there is a polynomial function $q(\cdot)$ such that
the cost of each step of the recursion is $\leq q(\size(x$'s value)).
As we are following the dynamic programming evaluation strategy,
there are at most $(1+\size(x$'s value))-many steps of the recursion.  
Thus, there is polynomial in $\size(x$'s value) that bounds the cost of the 
entire recursion.

\CASE $\gamma=\muP$.  This general case is just the prior special case
with more bureaucracy. 
\end{proofsketch}

\begin{corollary}\label{c:cs}
  For each $\gamma$, $\cs_\gamma$ is  DP-poly-cost $\Smi$-computable.
\end{corollary}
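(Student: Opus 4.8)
The corollary is essentially immediate from Lemma~\ref{l:compress}: the substantive work --- adapting the \citet{DST:comm:sub80} common-subexpression-elimination algorithm to $\Smi$ and verifying its DP-cost bound --- has already been carried out there. The plan is simply to post-compose $\serialize_\gamma$ with a trivial list-counting function and keep track of the costs.

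First I would note that what remains is only counting. By Definition~\ref{d:val:rep}, whenever a $\Vtg$-list $L$ represents a value $v'$, the bijection between $v'$'s vertices and $L$'s items sends exactly the $\uCn\muP$-labeled vertices of $v'$ to the $\VERT_\mu$-items of $L$; hence the number of $\VERT_\mu$-items in $L$ equals $\size(v')$ (recall that $\size$, Definition~\ref{d:size}, counts precisely the data-constructor vertices). So I would introduce a $\Smi$-function $\mathit{countMu}\of\Vtg\to\Nat$ given by a routine $\fold_{\Vtg}$-recursion that walks down the list, adding $\numeral1$ for each $\VERT_\mu$-item and $\numeral0$ otherwise. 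Since $\Vtg=\List_\Vertex$ is a sequential data type, $\mathit{countMu}$ is TD-poly-cost --- its cost and output size are in fact linear in $\size(L)$ --- and therefore DP-poly-cost by Lemma~\ref{l:costs}\eqref{i:TD>DP}. Then I would put $\cs_\gamma = \mathit{countMu}\circ\serialize_\gamma$. Correctness is immediate from Lemma~\ref{l:compress}: $\serialize_\gamma(v)$ evaluates to a $\Vtg$-list $L$ that represents the compressed form $v'$ of $v$, and $\size(v')=\cs_\gamma(v)$ by the definition of compression, so $\mathit{countMu}(L)=\size(v')=\cs_\gamma(v)$.

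For the DP-cost bound: by Lemma~\ref{l:compress}, evaluating $\serialize_\gamma(v)$ under the DP strategy costs $\le p_1(\size(v))$ for some polynomial $p_1$, and by Lemma~\ref{l:rep:size:bnd} the resulting list $L$ has $\size(L)\le q_\gamma(\size(v))$ for a quadratic $q_\gamma$; the subsequent evaluation of $\mathit{countMu}(L)$ then costs $\le p_2(\size(L))\le p_2(q_\gamma(\size(v)))$. Since the DP-derivation tree for $(\mathit{countMu}\,(\serialize_\gamma\,x))\theta$ is, up to an additive constant, the derivation tree for $(\serialize_\gamma\,x)\theta$ together with that for $\mathit{countMu}$ applied to its value, the total DP-cost is bounded by $p_1(\size(v))+p_2(q_\gamma(\size(v)))+O(1)$, which is polynomial in $\size(v)$. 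The one point that deserves a careful sentence --- the only non-mechanical step --- is precisely this additivity of DP-cost along a composition; more generally it is this observation, together with Lemma~\ref{l:costs}\eqref{i:cost:size} (which keeps intermediate values polynomially bounded), that makes DP-poly-cost closed under composition, and that is all the corollary really needs on top of Lemma~\ref{l:compress}.
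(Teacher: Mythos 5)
Your proposal is correct and follows essentially the same route as the paper's own (much terser) proof: compute $\serialize_\gamma(x)$ and count the data-type-constructor items in the resulting $\Vtg$-list, with the cost accounting delegated to Lemma~\ref{l:compress}. The extra detail you supply---identifying the $\VERT_\mu$-items as exactly the things to count, and spelling out the additivity of DP-cost along the composition---is a faithful elaboration of what the paper leaves implicit.
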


\begin{proof} 
  Given $x\of\gamma$, to compute $\cs_\gamma(x$'s value), all one
  needs to do is count the number of data-type constructors represented 
  in $\serialize_\gamma(x)$.  Clearly this is DP-poly-cost task. 
\end{proof}

\subsection{The factorization theorem} 

\begin{theorem}\label{t:dp:byhand}
  Suppose $f$ is a 
  DP-poly-cost $\Smi$-function with $\entails f\of\gamma_1\to\gamma_0$.  
  Then there is a TD-poly-cost $\Smi$-function 
  $\entails \widehat{f}\of\Vtg\to\Vtg$ such that
  \begin{gather*}  
    f = \deserialize_{\gamma_0}\circ \widehat{f} \circ \serialize_{\gamma_1}.
  \end{gather*}
\end{theorem}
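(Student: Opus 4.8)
The plan is to build $\widehat f$ ``by hand'' as an abstract machine --- in the spirit of the $\Smi$-CEK machine of \S\ref{S:seq}, but implementing the \emph{dynamic-programming} operational semantics of $f$ rather than the top-down one --- every computation state of which carries \emph{all} $\Smi$-values as the $\Vtg$-list encodings used by $\serialize$ and $\deserialize$ (Definition~\ref{d:val:rep}). A state bundles a heap (a single $\Vtg$-list threading every value-vertex ever built, addressed as in Definition~\ref{d:val:rep}), an environment mapping variables to heap addresses (a list of $\String$--$\Nat$ pairs), a continuation stack of $\Term$- and address-valued records (a $\Kont$-like list), and the memo table that the DP strategy requires, mapping (recursion-site, argument-address) pairs to result-addresses (again a list over $\Nat$). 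In this representation a destructor $\Dn\delta$ is a $\Vtg$-address lookup (a $\Lst$-fold, as with $\lookup$), a constructor $\Cn\delta$ is a $\Vtg$-list extension, and --- the crucial point --- a subexpression $\fold_\delta\,f_0\,e$ with $\delta$ branching is, under the DP strategy, a bottom-up pass over the finitely many vertices of $e$'s dag, i.e., a $\fold$ over the corresponding sublist of the heap. As in \S\ref{S:seq}, the machine is driven by a single $\fold_\Nat$ for a number of steps bounded by a polynomial in $\size(L)$ (which is trivially $\Smi$-computable from the input list $L$). The upshot is that every $\fold$ appearing in $\widehat f$ is over a \emph{sequential} data type ($\Nat$, $\Lst$, $\Vtg$, lists of pairs, $\dots$): branching values are only ever manipulated through their $\Vtg$-encodings.

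Verifying the factorization is then routine. Fix $\theta\of(x_1\of\gamma_1)$ and write $v$ for $\theta(x_1)$. By Lemma~\ref{l:compress}, $\serialize_{\gamma_1}(x_1)\theta$ evaluates to a $\Vtg$-list $L$ representing the compressed form $v'$ of $v$, and $v'\sim v$. Running $\widehat f$ on $L$ simulates the DP-evaluation of $f$ on $v'$ (whose dag is exactly the fragment of $L$ we begin with) and returns a $\Vtg$-list $L'$ that represents the value-term-graph $w_0$ this evaluation produces, where $w_0 \sim f(v')$. By Lemma~\ref{l:deser}, $\deserialize_{\gamma_0}(L')$ then evaluates to $w_0$, and $w_0 \sim f(v')\sim f(v)$ since $\Smi$-functions respect bisimilarity (the remark after Definition~\ref{d:bisim}). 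As the denotation of a ground-type $\Smi$-term is a bisimilarity class, this yields $\deserialize_{\gamma_0}\circ\widehat f\circ\serialize_{\gamma_1} = f$.

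For the complexity bound, feed $\widehat f$ a $\Vtg$-list $L$ with $\size(L)=s$, encoding a value $v'$. The number of value-vertices in $v'$ is at most the number of items of $L$, which is at most $s$; in particular $\size(v')\leq s$. Since $f$ is DP-poly-cost, its DP-evaluation on $v'$ takes at most $q(s)$ recursion steps for a fixed polynomial $q$, and by Lemma~\ref{l:costs}\eqref{i:cost:size} it builds a value whose $\size$ is polynomial in $s$. Hence the heap $\Vtg$-list stays polynomial in $s$ throughout (it starts with $O(s)$ vertices by Lemma~\ref{l:size:bnd}, and each simulated step adds only $O(1)$ vertices), and each simulated step is a shallow rearrangement plus a bounded number of lookups into lists of size polynomial in $s$; so $\widehat f$ is DP-poly-cost. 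But every $\fold$ in $\widehat f$ is over a sequential data type, so Lemma~\ref{l:costs}\eqref{i:DP<TD:seq} promotes this to \emph{TD}-poly-cost, which is what the theorem asserts.

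The substance of the argument --- and the main obstacle --- is the first paragraph: making the ``by hand'' DP-machine precise and verifying it correct. It is a more intricate relative of the CEK construction of \S\ref{S:seq}, the extra ingredients being the memo table forced by the DP strategy and the uniform $\Vtg$-encoding of heap values; and the one genuinely load-bearing check is that compiling every construct of $f$ into this machine --- nested $\fold_\delta$'s and the $P(\fold_\delta f_0)\leadsto g$ functor reductions included --- never introduces a $\fold$ over a branching type. As in \S\ref{S:seq}, we would leave the bookkeeping details to the reader.
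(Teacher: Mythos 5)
Your proposal is correct in outline and at essentially the same level of rigor as the paper's own (quite brief) proof sketch, but it takes a genuinely different route. The paper defines $\widehat{f}$ by a compositional, syntax-directed translation of $f$: each constructor, destructor, and projection occurring in $f$ is replaced by a TD-poly-cost $\Smi$-function acting on $\Vtg$-encodings (e.g., $\proj_1$ becomes a $proj_{1,\gamma_1\times\gamma_2}\of\Vtg\to\Vtg$ satisfying the evident commuting equation with $\serialize$ and $\deserialize$), and each subexpression $(\fold_{\muP}\,g\;e)$ is replaced by an expression of the form $h(\fold_{\Vtg}\,\hat{g}^{*}\;\hat{e})$, where $\hat{g}^{*}$ performs the bottom-up DP pass directly as a fold over the topologically sorted serialization of $e$'s value; TD-poly-cost is then argued construct-by-construct, composition being unproblematic because of the output-size bound of Lemma~\ref{l:costs}\eqref{i:cost:size}. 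You instead build one global DP abstract machine interpreting all of $f$, driven by a single $\fold_\Nat$ whose step count is an explicitly computed polynomial in $\size(L)$, and you invoke Lemma~\ref{l:costs}\eqref{i:DP<TD:seq} once at the end to convert DP-poly-cost into TD-poly-cost. Both arguments turn on the same key observation --- that a DP evaluation of a branching $\fold$ is just a sequential fold over the serialized, topologically sorted dag --- and both leave comparable bookkeeping to the reader. Your machine buys uniformity and reuse of the \S\ref{S:seq} CEK machinery, at the price of having to make explicit the DP operational semantics (memo tables and all) that the paper deliberately keeps off-stage; one detail to fix is that your memo table should be keyed per dynamic fold invocation rather than per static recursion site, since a nested fold's step function can be re-entered under different environments. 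The paper's translation keeps the cost accounting local and never needs an explicit global step bound, but it must separately verify the commuting equation for every translated construct.
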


\begin{proof} 
We  take $\widehat{f}$ to be a translation of $f$ that computes
over serialized values.  For each application of a constructor
or destructor in $f$ we produce a TD-poly-cost $\Smi$-function
that performs the same task over serialized values.  For example, 
suppose we have a subexpression $(\proj_1\,e)$ with $e\of\gamma_1\times\gamma_2$.  It is simple to construct a 
TD-poly-cost $\Smi$-function $proj_{1,\gamma_1\times\gamma_2}$ 
with $\entails proj_{1,\gamma_1\times\gamma_2}\of\Vtg\to\Vtg$ 
such that: 
\begin{align*}
   \deserialize_{\gamma_1}\circ proj_{1,\gamma_1\times\gamma_2} \circ \serialize_{\gamma_1\times\gamma_2} \equiv 
   \lam{x\of\gamma_1\times\gamma_2}(\proj_1 x).
\end{align*}
So the translation of $(\proj_1\,e)$ would be 
$(proj_{1,\gamma_1\times\gamma_2}\;\hat{e})$ where $\hat{e}$ is the 
translation of $e$.  
Translating $\fold$-expressions is more involved.  
Suppose that $(\fold_{\muP} g \;e)$ is a subexpression of $f$ 
and that $\hat{g}$ and $\hat{e}$ are the respective 
translations of $g$ and $e$.   
So the translation of 
$(\fold_{\muP} g \;e)$ will be of the form
$h(\fold_{\Vtg}\, \hat{g}^* \;\hat{e})$ where $\hat{g}^*$ carries
out the steps of a dynamic programming evaluation of 
$(\fold_{\muP} g \;e)$ (on serialized values via $\hat{g}$) using 
the serialized (and topologically sorted) version of $e$'s value
provided by $\hat{e}$'s value and where $h$ extracts the appropriate final 
value from $(\fold_{\Vtg}\, \hat{g}^* \;\hat{e})$'s result. 
Given that $\hat{g}$ and $\hat{e}$ are TD-poly-cost, the
indicated translation of $(\fold_{\muP} g \;e)$ should be also. 
\end{proof}

\section{Completing $\RSmi$}\label{S:completing}

First let us note:

\begin{lemma}  \label{l:2/3} \
\begin{asparaenum}[(a)]
  \item \label{i:deser:RSmii}
  For each $\gamma$, $\deserialize_\gamma$ is $\RSmi$-computable. 
  \item \label{i:hatf}
    Suppose $f$ is a 
  DP-poly-cost $\Smi$-function with $\entails    
  f\of\gamma_1\to\gamma_0$.  
  Then $\widehat{f}$ as in Theorem~\ref{t:dp:byhand} is $\RSmi$-computable.
\end{asparaenum}  
\end{lemma}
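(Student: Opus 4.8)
\subsection*{Proof proposal for Lemma~\ref{l:2/3}}

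The plan is to observe that in both parts we are handed a closed, TD-poly-cost $\Smi$-function whose \emph{domain} type is $\Vtg$, and that $\Vtg$ is normal and hereditarily sequential; then Theorem~\ref{t:hs} applies verbatim and delivers the desired $\RSmi$-computability. Nothing beyond Lemma~\ref{l:deser}, Theorem~\ref{t:dp:byhand}, and Theorem~\ref{t:hs} is needed.

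First I would check that $\Vtg$ is normal and hereditarily sequential. It carries no $\mathsf{S}$-annotations, so it is normal. For hereditary sequentiality, recall that $\Vtg=\List_\Vertex=\mu(\mathsf{C}_\Unit+\mathsf{C}_\Vertex\times\Id)$, a degree-one polynomial functor. The only polynomial functors occurring within $\Vtg$ are: this list functor; the functor defining $\Vertex$; and the functors defining the auxiliary types $\String$ and $\Nat$. Each $\Vertex$-constructor ($\VERT_{()}$, $\VERT_{\inj_1}$, $\VERT_{\inj_2}$, $\VERT_{(,)}$, $\VERT_{\mu}$) takes only $\String$s and $\Nat$s as arguments and never a $\Vertex$, so the functor defining $\Vertex$ does not mention its recursion variable and hence has degree zero; $\String$ is a list type (degree one); and $\Nat=\mu(\mathsf{C}_\Unit+\Id)$ has degree one. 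Thus every polynomial functor occurring within $\Vtg$ has degree at most one, so $\Vtg$ is hereditarily sequential by Definition~\ref{d:sequential}\iref{i:hseq}.

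For part~\iref{i:deser:RSmii}: by Lemma~\ref{l:deser} there is a TD-poly-cost $\Smi$-function $\deserialize_\gamma$ with $\entails_{\Smi}\deserialize_\gamma\of\Vtg\to\gamma$. Since its domain $\Vtg$ is normal and hereditarily sequential, Theorem~\ref{t:hs} (taking $\gamma_1=\Vtg$ and $\gamma_0=\gamma$) yields that $\deserialize_\gamma$ is $\RSmi$-computable. For part~\iref{i:hatf}: by Theorem~\ref{t:dp:byhand} there is a TD-poly-cost $\Smi$-function $\widehat f$ with $\entails_{\Smi}\widehat f\of\Vtg\to\Vtg$; again its domain $\Vtg$ is normal and hereditarily sequential, so Theorem~\ref{t:hs} (now with $\gamma_1=\gamma_0=\Vtg$) gives that $\widehat f$ is $\RSmi$-computable.

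The main obstacle: there essentially is none — the lemma is a repackaging of results already in hand, and the completeness theorem for hereditarily sequential domains (Theorem~\ref{t:hs}) does the real work. The only point deserving a moment's care is verifying hereditary sequentiality of $\Vtg$, and this reduces to the observation above that the $\Vertex$-constructors are non-recursive (so $\Vertex$ contributes a degree-zero functor) together with the fact that $\String$ and $\Nat$ are themselves hereditarily sequential. Once that is noted, both parts are immediate instances of Theorem~\ref{t:hs}.
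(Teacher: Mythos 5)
Your proof is correct and is essentially the paper's own argument: part~(a) is Theorem~\ref{t:hs} applied to the TD-poly-cost function of Lemma~\ref{l:deser}, and part~(b) is Theorem~\ref{t:hs} applied to the $\widehat{f}$ of Theorem~\ref{t:dp:byhand}, in both cases using that the domain $\Vtg$ is normal and hereditarily sequential. Your explicit check that $\Vtg$ is hereditarily sequential (the $\Vertex$ functor being degree zero, and $\String$, $\Nat$ being sequential) is a detail the paper leaves implicit, but it is the right thing to verify and you verify it correctly.
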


\begin{proof}
  Theorem~\ref{t:hs} and Lemma~\ref{l:deser} together yield part (a).
  Theorems~\ref{t:hs} and~\ref{t:dp:byhand} give us part (b). 
\end{proof}

Thus by Theorem~\ref{t:dp:byhand}'s factorization, 
if the $\serialize_\gamma$ functions were $\RSmi$-computable,
then $\RSmi$ would be complete for the DP-poly-cost $\Smi$-computable 
functions.  But when we try to define the $\serialize_\gamma$ functions
within $\RSmi$ we face essentially the same problem we encountered
with the $\height$ function in \S\ref{S:intro}:  we seem to be 
blocked from combining the safe results of multiple branches of a 
recursion unless we 
an \emph{a priori}, normal upper bound on the sizes of these safe results. 
To get around this problem we extend $\RSmi$ to a new formalism, 
$\RSmii$. 

\begin{figure}[t]
\begin{align*}
 & E \;\is\; \dots
        \; \synsep (\CS{\delta}\, E) && \hbox{\sc Syntactic extensions}
\\&			
 \rulelabel{$\CS\delta$-I} 
  \irule{\Gamma\entails e\of \delta}%
        {\Gamma\entails (\CS\delta\, e)\of\Nat}
         && \hbox{\sc Typing extentions}
\\
&
  \rulelabel{Compressed Size}
  \irule{e\theta\yields v }{
   (\CS\gamma \,e)\theta\yields \cs(v) }
         && \hbox{\sc Evaluation semantics extensions}   
\end{align*} 
\caption{Extensions for $\RSmii$ where $\delta=\muP$}\label{f:Rsmii:rev}
\end{figure}

\begin{definition}
  $\RSmii$ is the extension of $\RSmi$ that results 
  from adding the $\cs_\delta$'s as new initial functions, 
  i.e., on the same level as the $\Cn\delta$'s and $\Dn\delta$'s.  
  The details of the syntactic, typing, and semantic extensions
  are given in Figure~\ref{f:Rsmii:rev}.
\end{definition}

\begin{theorem}[Polynomial-time soundness for $\RSmii$]\label{t:rsmii:sound}
   Suppose $\entails_{\RSmii} f\of\gamma_1\to\gamma_0$ where 
   $\gamma_1$ is normal.  Then $f$ is DP-poly-cost. 
\end{theorem}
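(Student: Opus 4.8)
The plan is to obtain Theorem~\ref{t:rsmii:sound} by re-running the $\RSmi$ soundness argument of \S\ref{S:soundness} (the one that proves Theorem~\ref{t:rsmi:sound:1}) over $\RSmii$ typing derivations, adding a single new case for the construct $(\CS\delta\,e)$. Recall that that argument proceeds by induction on typing derivations and assigns to every subexpression a polynomial bound, in the sizes of the values of its \emph{normal} free variables (the sizes of its safe free variables entering only additively), on both $\cost_{DP}$ of evaluating the subexpression and the $\size$ of the value it produces; specializing this to a closed $f\of\gamma_1\to\gamma_0$ with $\gamma_1$ normal then yields DP-poly-cost. Since $\RSmii$ is obtained from $\RSmi$ purely by adding the $\cs_\delta$ as new initial functions — a new syntactic form, the new typing rule $\CS\delta$-I that sends a normal argument of data type $\delta$ to a normal $\Nat$, and the single-premise evaluation rule \emph{Compressed Size} of Figure~\ref{f:Rsmii:rev} — every clause of the induction carries over unchanged (now read over $\RSmii$ derivations) except for one new case, namely a subexpression $(\CS\delta\,e)$.

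For that case I would argue as follows. Suppose $\Gamma\entails(\CS\delta\,e)\of\Nat$, so $\Gamma\entails e\of\delta$ with $\delta$ normal, and take any $\theta\of\Gamma$ with $e\theta\yields v$, so that $(\CS\delta\,e)\theta\yields\cs(v)$. By the induction hypothesis, $\cost_{DP}(e\theta)$ and $\size(v)$ are both bounded by polynomials in the sizes of $e$'s normal free variables. Since \emph{Compressed Size} has a single premise, $\cost_{DP}((\CS\delta\,e)\theta)=1+\cost_{DP}(e\theta)$, still polynomially bounded; and the value of $(\CS\delta\,e)\theta$ represents the natural number $\cs(v)=\min\{\,\size(v')\suchthat v\sim v'\,\}$, whose $\size$ is at most $\cs(v)+1\le\size(v)+1$ (reflexivity of $\sim$), again polynomially bounded. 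So both invariants are preserved; and because $(\CS\delta\,e)$ has a \emph{normal} result, its value enters any later bound only through the ``normal'' slots, so no other clause of the induction — in particular no bound on the number of unfoldings of an enclosing $\fold_{\muP}$, and no size bound on a safe accumulator (which, by the ramification constraints, can never be the argument of a $\CS\delta$) — is disturbed. This completes the induction, and the closed-function specialization with the single normal free variable $x_1$ gives the theorem.

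There are really two things to get right, and the first is the main obstacle: the case above must be matched precisely against whatever exact form the inductive invariant takes in \S\ref{S:soundness}. A new normal-valued operation can break ramification only by causing a super-polynomial blow-up in the size of normal values (hence in recursion lengths), and the single inequality $\cs(v)\le\size(v)$ — the new initial function is never size-increasing — is exactly what forecloses that, so what has to be verified is that this inequality is strong enough to slot $\CS\delta$ into the invariant in the same way a destructor is; I expect it is, but this is the step that genuinely needs the appendix's bookkeeping. The second, purely a consistency remark, is that the single-node cost the \emph{Compressed Size} rule assigns to $(\CS\delta\,e)$ stays polynomially related to the running time of a RAM interpreter for $\RSmii$: this is exactly Corollary~\ref{c:cs}, since each $\cs_\delta$ is DP-poly-cost $\Smi$-computable via the dag-compression algorithm of \citet{DST:comm:sub80}, so a \emph{Compressed Size} step runs in time polynomial in $\size(v)$, which by the bound above is polynomial in the input size.
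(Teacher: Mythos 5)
Your proposal is correct and is essentially the paper's argument: the paper proves this theorem by citing Theorem~\ref{t:rsmi:sound:1} together with Corollary~\ref{c:cs}, and the scholium following Theorem~\ref{t:sizebnd} justifies exactly the mechanism you spell out — $\RSmi$ plus a new initial function remains polynomially size-bounded because $\size(\cs_{\muP}(v))\leq\size(v)$, and the cost of the new step is honest by Corollary~\ref{c:cs}. Your expanded case analysis for $(\CS\delta\,e)$ is just the detailed version of what the paper leaves to that scholium.
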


\begin{proof}
The theorem follows by Theorem~\ref{t:rsmi:sound:1}, the poly-time 
soundness of $\RSmi$, and Corollary~\ref{c:cs}, 
the fact that each $\cs_\gamma$ is 
DP-poly-cost $\Smi$-computable. 
\end{proof}

Here is the key theorem.

\begin{theorem}\label{t:ser:RSmii}
  For each $\gamma$, $\serialize_\gamma$ is $\RSmii$-computable. 
\end{theorem}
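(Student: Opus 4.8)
The plan is to transcribe the inductive construction of Lemma~\ref{l:compress} into $\RSmii$, the only new ingredient being the compressed-size primitive. Recall that there $\serialize_\gamma$ is built by induction on $\gamma$, the real work being confined to the $\fold_{\muP}$ step, which at each recursive stage runs the dag-compression algorithm to merge the (already compressed, vertex-disjoint) serialized children under a fresh root. The sole reason this definition does not already live in $\RSmi$ is the recurring one: the merge must combine the \emph{safe} serialized children $L_1,L_2,\dots$, and to carry out the address arithmetic and the passes of the compression algorithm as $\RSmi$ code one needs an \emph{a priori, normal} bound on the sizes of the $\Vtg$-fragments being manipulated. So the first step of the proof is: given $x\of\gamma$, bind a \emph{normal} $\Nat$ $b = \numeral{Q(\CS\gamma(x))}$ for a suitable fixed polynomial $Q$ with nonnegative integer coefficients; this is legal in $\RSmii$ since $\CS\gamma$ is an initial function returning a normal $\Nat$ and $Q$ is $\RSmi$-definable from the normal constructors and the functions $\plus,\mathit{times}$ of Figure~\ref{f:RSmi:ex}.

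Why $b$ suffices: the key observation is that if $v'$ is any subtree of $x$'s value then $\cs(v')\le\cs_\gamma(x)$. Indeed, the compressed version of $x$ --- the unique minimal-size value bisimilar to $x$ --- can have no two distinct bisimilar vertices, else merging them would strictly reduce its size; hence every sub-dag of it is itself minimal among the values bisimilar to it. Since the sub-dag of $x$'s compressed version rooted at any vertex is bisimilar to the corresponding subtree of $x$, it both witnesses $\cs$ of that subtree and, being a sub-dag, contains at most $\size(\text{compressed }x)=\cs_\gamma(x)$ constructor vertices. Consequently every $\Vtg$-fragment arising in the recursive evaluation of $\serialize_\gamma(x)$ --- each $L_0,L_1,\dots$ and each merged $L$ --- represents a compressed subvalue, and so by Lemma~\ref{l:rep:size:bnd} has $\size$ at most $q(\cs_\gamma(x))$, where $q$ is the maximum of the quadratic bounds $q_\sigma$ over the finitely many types $\sigma$ occurring in $\gamma$. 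Folding in the compression algorithm's own (linear, hence polynomial) working-space and iteration-count requirements gives a polynomial $Q$ such that $\numeral{Q(\cs_\gamma(x))}$ dominates every quantity touched in the computation of $\serialize_\gamma(x)$.

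With $b$ in hand we redo the cases of Lemma~\ref{l:compress} inside $\RSmii$. The $\Unit$, $+$, and $\times$ cases are exactly as there, using the same $\toSafe$/$\toNorm$ bookkeeping employed for $\treeSize$ in Lemma~\ref{l:hssize}. For $\gamma=\muP$ we use $\fold_{\muP}$ with $b\of\Nat$ a free \emph{normal} variable of the fold body --- precisely the pattern by which $b_0\of\Nat$ is used inside $\step$ in the CEK construction of \S\ref{S:seq}. At each recursive step the body receives the non-recursive components (normal) together with the already-serialized children (safe $\Vtg$-lists); it serializes the non-recursive components via the induction hypothesis, coerces the results to safe, and then runs a $b$-bounded implementation of the merge: the address arithmetic and the passes of the compression algorithm become bounded recursions on $\numeral{b}$, i.e.\ ordinary normal recursions, with the $\Vtg$-fragments kept safe throughout, so the normal/safe typing discipline is respected just as it is for $\tally$ in Lemma~\ref{l:hssize}. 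Polynomial-time soundness of the resulting program is automatic by Theorem~\ref{t:rsmii:sound}.

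The main obstacle is this $\muP$-merge step itself: showing that the dag-compression algorithm used in the proof of Lemma~\ref{l:compress}, specialized to the task of joining already-compressed, vertex-disjoint children under a single new root, can be implemented in $\RSmi$ as one safe-to-safe transformation parameterized by the normal bound $b$ --- that is, that all of its bookkeeping fits the ``normal data drives the recursion, safe data is recursed upon'' discipline. This is the same species of task as the $\RSmi$-implementation of the $\Smi$-CEK machine in \S\ref{S:seq}, namely implementing a fixed algorithm once a normal size bound is available, and I would discharge it the same way: fix $\Vtg$-level representations of the algorithm's data, observe that each primitive step is a shallow rearrangement of bounded-size addresses and labels, and bound the number of such steps by a polynomial in $b$. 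Everything else is bureaucracy inherited directly from Lemma~\ref{l:compress}.
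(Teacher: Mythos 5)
Your proposal is correct and follows essentially the same route as the paper: obtain a normal bound $\numeral{Q(\CS{\gamma}(x))}$ via the new compressed-size primitive (justified, as you do, by the fact that subvalues have compressed size at most $\cs_\gamma(x)$, so Lemma~\ref{l:rep:size:bnd} bounds every intermediate $\Vtg$-fragment), then run the fold of Lemma~\ref{l:compress} with the merge performed as a safe-to-safe step driven by that normal bound. The one obstacle you flag --- implementing the bounded merge within the normal/safe discipline --- is exactly what the paper's Lemma~\ref{l:hs:mod} packages for you: since the merge function $f_0\of\Vtg\times\Vtg\times\Vtg\to\Vtg$ is a TD-poly-cost $\Smi$-function over a hereditarily sequential type, the reworked completeness theorem yields an $\RSmi$-term $f_0^*\of\Nat\times\safe{(\Vtg\times\Vtg\times\Vtg)}\to\safe\Vtg$ wholesale, so there is no need to re-open the compression algorithm and verify its steps by hand as you propose.
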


For this theorem we need a slight reworking of Theorem~\ref{t:hs}
for use within ramified structural recursions.

\begin{lemma} \label{l:hs:mod}   
  Suppose $\entails_{\Smi}f\of\gamma_1\to\gamma_0$ where 
  $\gamma_1$ is normal and hereditarily sequential and where $f$ 
  is TD-poly-cost. 
  Then there is a $\RSmi$-function $f^\star$ with 
  $\entails_{\RSmi}f^\star\of\Nat\times\safe{\gamma_1}\to\safe{\gamma_0}$
  such that, for all $\theta_0\of(x_1\of\gamma_1)$ and all
  $m\geq \size(\theta_0(x_1))$, if $f(x_1)\theta_0 \yields v$, 
  then  $\toNorm(f^*(\numeral{m},\toSafe(x_1)))\theta_0\yields v$. 
\end{lemma}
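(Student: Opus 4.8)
The plan is to re-run the proof of Theorem~\ref{t:hs}, making two changes: the data input is received in safe form instead of normal form, and the cost bound that the function $f'$ of Figure~\ref{fig:f0'} computed internally (via $\treeSize_{\gamma_1}$ and Lemma~\ref{l:hssize}) is instead handed in from outside as the explicit normal parameter. The key observation is that $f'$ uses the normality of its input $x_1$ in only three places: (a)~to obtain the $\RSmi$-computable bound $3\cdot q_0(\ts_{\gamma_1}(x_1))$ on the number of $\Smi$-CEK steps; (b)~to inject $x_1$ into $\safe\Grd$ when seeding the environment list $as_0$ (in $f'$, the whole initial state is built normal and then coerced once by $\toSafe$); and (c)~in the single outermost $\toNorm$ that returns the extracted answer at type $\gamma_0$. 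Everything else --- the encoding type $\State$, the transition function $\step$, the helpers $\lookup$ and $\extract$, the $\Grd$-level constructors, and the driver $\fold_\Nat\,\step\;b_0$ --- already operates over $\safe\State$ and is insensitive to the normal/safe status of the data.

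Concretely, I would define $f^\star=\lam{(b,x_1)}(\dots)$ of type $\Nat\times\safe{\gamma_1}\to\safe{\gamma_0}$ as follows. For~(a): since $b\of\Nat$ is normal, the arithmetic functions of Figure~\ref{f:RSmi:ex}, composed over normal $\Nat$s, give an $\RSmi$-expression $b_0$ in $b$ that computes $3\cdot q_0(b)+k$, where $q_0$ is the polynomial furnished by the TD-poly-cost hypothesis on $f$ (so $\cost_{TD}(e_0\theta_0)\leq q_0(\size(\theta_0(x_1)))$, as in the proof of Theorem~\ref{t:hs}; take its coefficients to be nonnegative integers) and $k$ bounds the length of every variable name occurring in $e_0$ and in the possible R10 expansions; this $b_0$ serves both as the fold bound and as the fuel passed to $\lookup$. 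For~(b): I build the initial state $s_0\of\safe\State$ directly with the safe data constructors and safe pairing/injection, combining the safe $\Term$-encoding of the fixed closed expression $e_0$ (which, being constant, is expressible with the safe $\Term$-constructors), the singleton $\safe\Alist$ that binds the name of $x_1$ to the injection of the safe datum $x_1$ into $\safe\Grd$, and the empty $\safe\Kont$. For~(c): I copy the body of $f'$ from Figure~\ref{fig:f0'} --- the definitions of $\step$, $\lookup$, and the other auxiliaries, then $(c_*,e_*,k_*)=\fold_\Nat\,\step\;b_0$, then the extraction of the $\safe{\gamma_0}$-valued component of $c_*$ --- but with the new $b_0$ and $s_0$ and with the outermost $\toNorm$ deleted (the lemma statement reinstates it externally). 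The typing obligations are literally those of $f'$, so $f^\star$ is a well-typed $\RSmi$-term of the stated type.

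For correctness, fix $\theta_0\of(x_1\of\gamma_1)$ and $m\geq\size(\theta_0(x_1))$. By the choice of $q_0$ and its monotonicity, $\cost_{TD}(e_0\theta_0)\leq q_0(\size(\theta_0(x_1)))\leq q_0(m)$; hence the $\Smi$-CEK run on $e_0\theta_0$ takes at most $3\cdot q_0(m)\leq b_0$ steps, every environment list arising in it has length at most $3\cdot q_0(m)\leq b_0$, and every variable name has length at most $k\leq b_0$. So $\fold_\Nat\,\step\;\numeral{b_0}$ iterates $\step$ until the run reaches its final state and --- since $\step$ is the identity on final states --- halts there, with every simulated R1-step's $\lookup$ (fuel $\numeral{b_0}$) having succeeded; thus the $\Grd$-component of the final context holds the value $v$ of $e_0\theta_0$, that is, $f^\star(\numeral{m},\toSafe(x_1))\theta_0\yields v$. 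Since the operational semantics of $\toSafe$ and $\toNorm$ are the identity on value term graphs, also $\toNorm(f^\star(\numeral{m},\toSafe(x_1)))\theta_0\yields v$, as claimed; the $\toNorm$-I side condition holds because the only free variable, $x_1$, has the normal type $\gamma_1$.

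I expect the only real friction to be bookkeeping: checking that $e_0$ (and the $g$'s arising in R10 expansions) really is expressible with the safe $\Term$-constructors, that the safe datum $x_1$ sits inside $\safe\Grd$ exactly where $\toSafe(x_1)$ sat in $f'$, and --- the one genuinely load-bearing check --- that $f'$ exploits the normality of its input only through (a)--(c). There is no new idea beyond Theorem~\ref{t:hs} together with the observation that a sufficient cost bound need not be recomputed inside a ramified recursion but may be passed in as a normal argument, which is exactly the role Bellantoni and Cook's normal parameters are designed for.
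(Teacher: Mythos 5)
Your proposal is correct and follows essentially the same route as the paper: the paper's proof likewise takes the body of $f'$ from Figure~\ref{fig:f0'}, changes the environment seed $as_0$ to type $\safe\Alist$ so it can absorb the now-safe input $x_1$, and replaces the internally computed bound $b_0$ by $\numeral{3\cdot q_0(n\text{'s value})}$ driven by the new normal $\Nat$ parameter, with correctness inherited from the proof of Theorem~\ref{t:hs}. Your additional observations --- that the outermost $\toNorm$ must be dropped (its side condition would fail with a safe free variable) and that the rest of the CEK simulation is already insensitive to the normal/safe status of the data --- are exactly the bookkeeping the paper leaves implicit.
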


\begin{proof} 
Let $f^* = (\lam{(n,x_1)}e)$ where $e$ is the body of $f'$
from Figure~\ref{fig:f0'} with the following two modifications:
\begin{inparaenum}[(i)]
  \item the type of $as_0$ is changed to $\safe\Alist$ and
  \item the definition of $b_0$ is changed to 
       an expression computing $\numeral{3\cdot q_0(n\hbox{'s value})}$.   
\end{inparaenum}
Since the only use of $as_0$ is in defining $s_0$, a safe value,
this first change causes no typing problems.   
Now fix $\theta_0$ and an  $m\geq \size(x_1$'s value).
Suppose $f'(x_1)\theta_0 \yields v$.
Then it follows from the proof of Theorem~\ref{t:hs} that 
$\toNorm(f^*(\numeral{m},\toSafe(x_1)))\theta_0\yields v$ as required.
\end{proof}

\begin{proof}[of Theorem~\ref{t:ser:RSmii}]
  We proceed by induction on the structure of $\gamma$. 
  The $\gamma=\Unit$, $\gamma=\gamma_1+\gamma_2$, and 
  $\gamma=\gamma_1\times\gamma_2$ cases are exactly as in the
  proof of Lemma~\ref{l:compress}.  As in 
  Lemma~\ref{l:compress}'s proof, we consider the 
  special case of $\gamma=\Ltree = (\Leaf \synsep \Fork \Of 
  \gamma_0\times\Ltree\times\Ltree$)  
  where the lemma holds for $\gamma_0$. 
  Let $f_0$ be a $\Smi$-function with 
  $\entails_{\Smi} f_0\of \Vtg\times\Vtg\times\Vtg\to\Vtg$
  such that $f_0(L_0,L_1,L_2)$ computes $L$ as in $\Fork$-case
  of the structural recursion for $\serialize_{\Ltree}$ given
  in the proof of Lemma~\ref{l:compress}.
  Moreover, we can take $f_0$ to be TD-poly-cost. 
  Hence, for this $f_0$, there is, by Lemma~\ref{l:hs:mod},
  an $\RSmi$-function $f_0^*$ as in the lemma. 
  (So, $\entails f_0^*\of \Nat \times 
  \safe{(\Vtg\times\Vtg\times\Vtg)} \to \safe\Vtg$.)
  It follows by Lemma~\ref{l:compress} that there is a polynomial 
  $q_\Ltree$ such that for all $\Ltree$-values $v$, 
  $\size(\serialize_\Ltree(v)) \leq q_\Ltree(\cs(v))$. 
  We can now write an $\RSmii$-version of $\serialize_\Ltree$
  as follows.
  \vspace*{-2ex}
  \begin{leftbar}
  \begin{align*}
    & \serialize_\Ltree(x) = \\
    & \Quad1 \Let\, (n\of\Nat)=q_\Ltree(\CS{\Ltree}(x))\\
    & \Quad{2.6}  \step \of (\Unit+\gamma_0\times\safe\Vtg\times\safe\Vtg) \to\safe\Vtg \\
    & \Quad{2.6}  \step (\inj_1()) = 
    \toSafe\left(\left[\VERT_\mu(\cquote{\,\Ltree},1),\;
     \VERT_{\inj_1}(\cquote{\Unit + \gamma_0\times \Ltree \times \Ltree)},0),\;
     \VERT_{()}\right]\right)
     \\
    & \Quad{2.6} \step (\,\inj_2 (a_0,L_1,L_2)\,) 
    = f_0^*\left(n,\, (\toSafe(\serialize_{\gamma_0}(a_0)),L_1,L_2)\,\right)
    \\
   &\Quad{1.2} \In\;(\fold_{\Vtg} \,\step \,x )
  \end{align*}\end{leftbar}
  Note that in the $\step (\,\inj_2 (a_0,L_1,L_2)\,)$ case, 
  we have that 
  $\size($the value of $(\toSafe(\serialize_{\gamma_0}(a_0)),L_1,L_2))
  \leq n$'s value.
  It follows from a straightforward induction that the above definition 
  is as required. 
  As in the proof of Lemma~\ref{l:compress}, the general 
  data-type case is just the  special case with more bureaucracy. 
\end{proof}

\begin{theorem}[$\RSmii$ completeness] 
  \label{t:rsii:comp}
  Each DP-poly-cost $\Smi$-function is $\RSmii$-computable.
\end{theorem}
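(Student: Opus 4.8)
The plan is to chain together the three components already constructed. Given a DP-poly-cost $\Smi$-function $f$ with $\entails f\of\gamma_1\to\gamma_0$, Theorem~\ref{t:dp:byhand} supplies a TD-poly-cost $\Smi$-function $\widehat f\of\Vtg\to\Vtg$ with
\[
  f = \deserialize_{\gamma_0}\circ\widehat f\circ\serialize_{\gamma_1},
\]
so it suffices to exhibit $\RSmii$-programs for the three factors and then for their composite.

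First I would dispatch $\deserialize_{\gamma_0}$ and $\widehat f$. Both are TD-poly-cost $\Smi$-functions whose source type is $\Vtg=\List_\Vertex$, which is normal and hereditarily sequential, so Theorem~\ref{t:hs} (packaged in Lemma~\ref{l:2/3}\eqref{i:deser:RSmii} and~\eqref{i:hatf}) yields $\RSmi$-programs for each; since every $\RSmi$-program is an $\RSmii$-program, both are $\RSmii$-computable. The middle factor, $\serialize_{\gamma_1}\of\gamma_1\to\Vtg$, is $\RSmii$-computable by Theorem~\ref{t:ser:RSmii} --- this is the one place where the new $\cs_\delta$ initial functions of $\RSmii$ are actually exploited.

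It then remains to see that the composite of these three $\RSmii$-programs is again an $\RSmii$-program, and this is the only point that warrants care, since ramification can in principle obstruct composition. Here, however, it does not: the connecting types $\gamma_1$, $\Vtg$, and $\gamma_0$ are all normal ground types, and $\RSmi$/$\RSmii$ leave the $\to$-I and $\to$-E rules unrestricted, so from closed $\RSmii$-terms $g\of\alpha\to\beta$ and $h\of\beta\to\sigma$ with $\alpha,\beta,\sigma$ normal one always obtains a closed $\RSmii$-term $\lam{x}(h\,(g\,x))\of\alpha\to\sigma$ computing $h\circ g$; applying this twice produces the desired $\RSmii$-program for $f$. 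In short, the theorem is pure bookkeeping: every genuine difficulty --- in particular, slipping past the ramification barrier that blocks $\height$ and the $\serialize_\gamma$'s --- was already discharged in Theorem~\ref{t:ser:RSmii} (via the compressed-size functions) and in Lemma~\ref{l:2/3} (via Theorem~\ref{t:hs} and the CEK construction of \S\ref{S:seq}).
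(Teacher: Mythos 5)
Your proof is correct and follows exactly the route the paper takes: cite the factorization of Theorem~\ref{t:dp:byhand}, handle $\deserialize_{\gamma_0}$ and $\widehat f$ via Lemma~\ref{l:2/3}, handle $\serialize_{\gamma_1}$ via Theorem~\ref{t:ser:RSmii}, and compose. The only difference is that you explicitly verify that composition along the normal types $\gamma_1$, $\Vtg$, $\gamma_0$ poses no ramification obstacle, a point the paper leaves implicit.
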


\begin{proof}
  The theorem follows immediately from 
  Lemma~\ref{l:2/3} and
  Theorems~\ref{t:dp:byhand} and~\ref{t:ser:RSmii}.
\end{proof}

\section{Open Questions} 

The first and most obvious open question is 
what is 
the resolution of 
our conjecture on the incompleteness of $\RSmi$ and the 
related formalisms of \citep{gen:rem:2010,Avanzini2018OnSM}. 
We have tried various approaches and had no luck with any of them. 

The
 notion of ``inductively defined data type'' used in this paper 
omits several standard structures, e.g., rose trees \citep{BirdGibbons}.
With very strong notions of inductively defined data (e.g., 
$\lfp{t}(\Unit+(\Nat\to t))$) not only will the techniques of this 
paper break, but it is open what ``feasibly computable'' should 
mean in such a context. 

On a different tack, we seem to have two distinct notions of 
feasibility for computation over inductively defined data: (1)
the no-sharing notion of POLA \citep{Burrell:2009}
and (2) the maximum-sharing notion explored in this paper.  
It would be nice to better understand the strengths and weakness 
of the two approaches and the trade-offs between them.    
We suspect that neither of these two approaches matches 
the intuitions of a practiced functional programmer as to what
feasibility computation on inductively defined data should be. 
It is thus worth asking if there are other interesting approaches
to be explored. 


\bibliographystyle{plainnat}
\bibliography{ops}


\bigskip
\appendix 

\section{Soundness results for $\RSmi$}\label{S:soundness}

Theorem~\ref{t:rsmi:sound:1}, polynomial-time soundness for $\RSmi$,
follows from more general results in \citep{DR:23}.  As
of this writing that paper is not yet public.
This appendix fills the gap by sketching  the basic soundness results for 
$\RSmi$, including a normal/safe noninterference theorem in \S\ref{S:noninterf}.

\subsection{Normal and safe spans and residual sizes}\label{S:res:size}

\emph{Conventions on digraphs:} 
$\emptydag=(\emptyset,\emptyset)$, the empty graph.
For  $G=(V,E)$ and $G'=(V',E')$, 
$G\cup G'= (V\cup V', E\cup E')$,
$G\cap G'= (V\cap V', E\cap E')$,  and
$G\setminus G' =$
the subgraph of $G$ induced by $(V\setminus V')$. 
Also, $(a,b)_\ell$ denotes an edge, with label $\ell$, from vertex $a$ to vertex $b$.

\begin{definition} 
\ \label{d:spans}
\begin{asparaenum}[(a)] 
  \item
  For each value term graph $v$, let $r_v$ denote the root vertex of $v$. 
  \item \label{i:sp:norm}
  For each  $\gamma$ and each $v$, a type-$\gamma$ value-term graph, define the \emph{$\gamma$-normal span of $v$} (written:  
  $\sharp_\gamma (v)$) to be the 
  subgraph of $v$ given by:
      \begin{align*}
      \sharp_\gamma (v) &= 
      \begin{cases}
    	v, & \hbox{if $\gamma$ is normal;}\\
    	\emptydag,  & \hbox{if $\gamma$ is safe;}\\
	    (\set{r_v}\cup V,E\cup E'), 
		     & \hbox{if }
		     \parbox[t]{7cm}{\raggedright $\gamma=\gamma_1+\gamma_2$ 
		        is mixed and $v=\uinj_i(v_0)$, \\
		        where  $(V,E)=
		           \sharp_{\gamma_i}(v_0)$ and 
		           $E'=\set{(r_v,r_{v_0}) 
		           \suchthat \sharp_{\gamma_i}(v_0)\not=\emptydag}$}\\   
	    (\set{r_v}\cup V,E\cup E'), 
		     & \hbox{if }
		     \parbox[t]{7.35cm}{\raggedright $\gamma=\gamma_1\times\gamma_2$ 
		        is mixed and $v=\upair{v_1}{v_2}$,\\
		        where  $(V,E)=
		           \sharp_{\gamma_1}(v_1)\cup
		           \sharp_{\gamma_2}(v_2)$ and \\
		           $E'=\set{(r_v,r_{v_i})_{\pi_i} 
		           \suchthat i=1,2 \;\;\&\;\;\sharp_{\gamma_i}(v_i)\not=\emptydag}$.}
      \end{cases}    
    \end{align*}
  \item \label{i:sp:safe}
  Define  the \emph{$\gamma$-safe span of $v$}
  (written: $\flat_\gamma(v)$) analogously. 
\end{asparaenum}
\end{definition}

%

Note that, for each value-term graph $v$ for a type-$\gamma$ value:
\begin{align*}
    &\sharp_{\norm{\gamma}}(v) = \flat_{\safe{\gamma}}(v) = v.
    &&\sharp_{\safe{\gamma}}(v) = \flat_{\norm{\gamma}}(v) = \emptydag. 
\end{align*}

\begin{definition} \  \label{d:resSize}
\begin{asparaenum}[(a)] 
 \item  Suppose $X\subseteq \dom(\Gamma)$ and $\theta\of\Gamma$.
 Let
   $SS(X)\theta = 
   \bigcup_{x\in X}(\flat_{\Gamma(x)}(\theta(x)))$.
 We call $SS(X)\theta$ the \emph{safe span of $X$ with respect to $\theta$}. 

  \item 
  Suppose $v$ is a type-$\gamma$ value
  The \emph{$\gamma$-nonnormal span of $v$}
  (written: $\nn\gamma(v)$) is given by:  
  $\nn\gamma(v) = v\setminus\sharp_\gamma(v)$.

  \item \label{i:resSize}
The \emph{residual size} of an $\RSmi$ judgment
$\Gamma\entails e\of\gamma$ is the function
$\residual{\Gamma\entails e\of\gamma}\of \set{\theta \suchthat \theta\of\Gamma}\to\nat$ such that:\marginnote[6ex]{$\dotcup$ denotes disjoint union.}
\begin{gather*}
  \residual{\Gamma\entails e\of\gamma}\theta 
  \;\defeq\; \size(\sharp_\gamma(v)\dotcup 
                   (\nn\gamma(v) \setminus SS(\fv(e))\theta)),
   \hbox{ where $e\theta\yields v$.}
\end{gather*}
We usually write $\residual{e}$ for 
$\residual{\Gamma\entails e\of\gamma}$
when $\Gamma\entails e\of\gamma$ is understood.
\end{asparaenum}
\end{definition}

Suppose $e\theta\yields v$.  
Then, 
intuitively, $\residual{e}\theta$ is a sum of the size of 
the normal part of $v$ plus the size of the 
``potentially normal'' part of $v$.  By ``potentially normal'' we 
mean that $\nn\gamma(v) \setminus SS(\fv(e))\theta$ is the 
non-normal part of $v$ that is not claimed as safe by any of 
$e$'s free variables, and hence, it might be eventually 
reclassified as normal by means of $\toNorm$.   
\emph{Example 1:}  Here is a typical calculation.  Suppose that
$x\of\gamma_0,y\of\gamma_1\entails e\of \gamma_2$ where $\gamma_0$
is normal and $\gamma_1$ and $\gamma_2$ are safe and that $e\theta\yields v$.  Then
$\residual{e}\theta = \size(\sharp_{\gamma_2}(v) \dotcup \nn{\gamma_2}(v)
\setminus SS(\set{x,y})\theta) = 
\size(\sharp_{\gamma_2}(v) \dotcup (v\setminus \sharp_{\gamma_2}(v))
\setminus (\flat_{\gamma_0}(\theta(x)) \cup \flat_{\gamma_1}(\theta(y))))$
which, as $\gamma_0$ is normal and $\gamma_1$ and $\gamma_2$ are safe, is 
$=  \size(\emptydag \dotcup (v\setminus \emptydag)
\setminus (\emptydag \cup \theta(y)) = \size(v\setminus \theta(y))$. 
\emph{Example 2:} As shown in the proof of Theorem~\ref{t:sizebnd}:
$\residual{(\Cs\delta e_0)}\theta = 1+\residual{e_0}\theta$.
That ``$1+$'' is from the newly created $\uCn\delta$-vertex 
which is in the safe part of the value, but, as it is 
freshly created,  not in the safe span of $e$'s free variables.
For a key part of the motivation for using this odd-looking notion 
of size, see Scholium~\ref{s:res:pair} below. 

\begin{lemma}  \label{l:res}
    Suppose $\Gamma(x)=\gamma$ and $\theta\of\Gamma$. Then:
  $\residual{x}\theta = \size(\sharp_\gamma(\theta(x)))$. 
%
So, $\residual{x}\theta=\size(\theta(x))$ when $\gamma$ is normal
and $\residual{x}\theta=0$ when $\gamma$ is safe.
\end{lemma}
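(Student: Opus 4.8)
The plan is to unfold the definition of residual size (Definition~\ref{d:resSize}) for the judgment $\Gamma\entails x\of\gamma$ and to check that the ``residual'' correction term contributes nothing. The only evaluation rule applicable to the bare expression $x$ is \emph{Env}, so $x\theta\yields v$ forces $v=\theta(x)$, a type-$\gamma$ value. Since $\fv(x)=\set{x}$, we have $SS(\fv(x))\theta = \flat_{\Gamma(x)}(\theta(x)) = \flat_\gamma(\theta(x))$. Writing $v=\theta(x)$ and recalling $\nn\gamma(v)=v\setminus\sharp_\gamma(v)$, the definition gives
\[
  \residual{x}\theta = \size\bigl(\sharp_\gamma(v)\dotcup(\nn\gamma(v)\setminus\flat_\gamma(v))\bigr).
\]

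The crux is to show $\nn\gamma(v)\setminus\flat_\gamma(v)=\emptydag$. Since the graph difference $G\setminus G'$ retains only those vertices of $G$ not present in $G'$ (together with the edges induced on them), and since $\size$ counts only certain vertices and ignores edges entirely, it suffices to establish the following claim: \emph{for every $\gamma$ and every type-$\gamma$ value $v$, each vertex of $v$ lies in $\sharp_\gamma(v)$ or in $\flat_\gamma(v)$.} I would prove this by induction on the structure of $\gamma$, following exactly the case split of Definition~\ref{d:spans}. If $\gamma$ is normal, then $\sharp_\gamma(v)=v$ and the claim is immediate; dually if $\gamma$ is safe, then $\flat_\gamma(v)=v$. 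If $\gamma=\gamma_1+\gamma_2$ is mixed, then $v=\uinj_i(v_0)$; the root $r_v$ lies in both $\sharp_\gamma(v)$ and $\flat_\gamma(v)$, and every other vertex of $v$ is a vertex of $v_0$, which by the induction hypothesis lies in $\sharp_{\gamma_i}(v_0)$ or $\flat_{\gamma_i}(v_0)$, hence in $\sharp_\gamma(v)$ or $\flat_\gamma(v)$. The mixed product case $\gamma=\gamma_1\times\gamma_2$, $v=\upair{v_1}{v_2}$, is identical, applying the induction hypothesis separately to $v_1$ and to $v_2$. Because the claim concerns only vertex sets, any structure sharing between or within $v_1$ and $v_2$ is harmless: sharing can only shrink the set of vertices that must be covered.

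Granting the claim, $\nn\gamma(v)\setminus\flat_\gamma(v)$ has no vertices, hence equals $\emptydag$, and $\sharp_\gamma(v)\dotcup\emptydag=\sharp_\gamma(v)$; therefore $\residual{x}\theta=\size(\sharp_\gamma(\theta(x)))$. The two stated specializations now follow directly from Definition~\ref{d:spans}: when $\gamma$ is normal, $\sharp_\gamma(\theta(x))=\theta(x)$, so $\residual{x}\theta=\size(\theta(x))$; when $\gamma$ is safe, $\sharp_\gamma(\theta(x))=\emptydag$, so $\residual{x}\theta=0$. The only step needing any care is the inductive claim, and the key to keeping it routine rather than fiddly is to argue purely at the level of vertex sets --- this sidesteps both the shared root vertex in the mixed cases and the possibility of shared substructure, and it means the edge bookkeeping built into the definitions of $\sharp_\gamma$ and $\flat_\gamma$ never enters.
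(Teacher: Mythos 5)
Your proposal is correct and follows essentially the same route as the paper's proof, which simply unfolds Definition~\ref{d:resSize}, notes $SS(\fv(x))\theta=\flat_\gamma(\theta(x))$, and asserts $(\theta(x)\setminus\sharp_\gamma(\theta(x)))\setminus\flat_\gamma(\theta(x))=\emptydag$ without further argument. The only difference is that you explicitly justify that last equality by an induction on $\gamma$ showing every vertex lies in $\sharp_\gamma(v)$ or $\flat_\gamma(v)$ --- a detail the paper leaves implicit --- and your handling of the mixed cases and of sharing is sound.
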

\begin{proof}  Note: 
$\residual{x}\theta 
  = \size(\sharp_\gamma(\theta(x)))$ +
                   $\size(\nn\gamma(\theta(x)) \setminus SS(\fv(x))\theta))$
and  
$\nn\gamma(\theta(x)) \setminus SS(\fv(x))\theta))
=  (\theta(x)\setminus\sharp_\gamma(\theta(x)))\setminus
\flat_\gamma(\theta(x)) = \emptydag$.
\end{proof}

\begin{definition}
Suppose $\set{x_1,\dots,x_k}\subseteq \dom(\Gamma)$,
 $q$ is a polynomial  over indeterminates
  $\residual{x_1},\dots,\residual{x_k}$, and $\theta\of\Gamma$.
Then $q\theta\in\nat$ is given by:
\begin{align*}
  &q  \theta \;\defeq \;
  q\left[\strut\residual{x_1}\gets \residual{x_1}\theta,\dots,
   \residual{x_k}\gets \residual{x_k}\theta\right]. 
\end{align*}
\end{definition}

\subsection{A polynomial size bound}

\begin{theorem}\label{t:sizebnd}
  Suppose  $\Gamma\entails_{\RSmi} e\of\gamma$   with $\fv(e)=\set{x_1,\dots,x_k}$.
  Then there is a polynomial $q$ over indeterminates 
  $\residual{x_1},\dots,\residual{x_k}$ such that, for all 
  $\theta\of\Gamma$, we have
     $\residual{e}\theta
     \;\leq\; q \theta$. 
\end{theorem}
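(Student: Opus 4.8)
The plan is to prove the bound by induction on the derivation of $\Gamma\entails_{\RSmi}e\of\gamma$, producing for each typing rule a polynomial in the residual-size indeterminates of $e$'s free variables that bounds $\residual{e}\theta$, and then composing. For a composite $e$ with $e\theta\yields v$ the recurring subtask is to read off $\sharp_\gamma(v)$, $\nn\gamma(v)$ and $SS(\fv(e))\theta$ from the corresponding data of the immediate subexpressions, by inspecting the value-term-graph constructions of \S\ref{S:vtg} and the spans of Definition~\ref{d:spans}; since $\size$ counts only data-constructor vertices and is subadditive over subgraph unions, these read-offs are mechanical. When a premise is function-typed it is a $\lambda$-abstraction, and the induction descends into its body in the extended context.

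The base cases and all the non-recursive composite cases give only mild bounds. For \emph{Id-I}, Lemma~\ref{l:res} gives $\residual{x}\theta=\size(\sharp_\gamma(\theta(x)))$, so $q=\residual{x}$; for \emph{$\Unit$-I}, $q=0$. Injection and pairing add only non-constructor vertices, so $\residual{\inj_j e'}\theta\le\residual{e'}\theta$ and $\residual{(e_1,e_2)}\theta\le\residual{e_1}\theta+\residual{e_2}\theta$ (in the latter $\fv((e_1,e_2))$ is the union of the two free-variable sets, so the subtracted safe span only grows). Projections and data destructors pass to a subgraph of the argument value, so $\residual{\proj_j e'}\theta$, $\residual{\Dn{\muP}e'}\theta$ and $\residual{\Ds\delta e'}\theta$ are all $\le\residual{e'}\theta$; data constructors add exactly one such vertex, so $\residual{\Cn{\muP}e'}\theta=1+\residual{e'}\theta$ and $\residual{\Cs\delta e'}\theta=1+\residual{e'}\theta$. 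The coercions are nearly free: $\toSafe$ leaves $v$ unchanged and only enlarges the subtracted part, so $\residual{\toSafe e'}\theta\le\residual{e'}\theta$; and $\toNorm$ is where side condition ($\ddagger$) enters — all of $e$'s free variables being normal forces $SS(\fv(e))\theta=\emptydag$, hence $\residual{e}\theta=\size(v)=\residual{\toNorm e}\theta$ and the subexpression's polynomial transfers verbatim. For \emph{$+$-E} (Case), with $e_0\theta\yields(\uinj_j v_j)$ and $e_j\theta[x_j\mapsto v_j]\yields v$, side condition ($\ast$) is exactly what makes the step go through: if $\gamma_1+\gamma_2$ is normal then $v_j$ is a normal value and $\residual{x_j}(\theta[x_j\mapsto v_j])=\size(v_j)\le\residual{e_0}\theta$, whereas if the result type $\gamma$ is safe then $\residual{x_j}(\theta[x_j\mapsto v_j])=0$ and the safe span of $x_j$ absorbs the part of $v$ inherited from $v_j$; either way one applies the induction hypothesis to the chosen branch $e_j$ with $\residual{x_j}$ replaced by the bound just derived. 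Application $(\lam{x}e_0')\,e_1$ is handled the same way, substituting the $e_1$-bound for $\residual{x}$ in the induction hypothesis for $e_0'$ and using $\size(\sharp_{\gamma_1}(v_1))\le\residual{e_1}\theta$.

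The genuinely delicate case — where I expect the real work to lie — is \emph{$\fold_\delta$-I}, $e=\fold_\delta(\lam{z}e_0)\,e_1$ with $\delta=\muP$ normal (side condition ($\dagger$)) and $\lam{z}e_0\of P(\safe\gamma)\to\safe\gamma$. Because $\delta$ is normal, $\residual{e_1}\theta=\size(v_1)$, where $v_1$ is the value of $e_1$, so the induction hypothesis for $e_1$ bounds $n_0:=\size(v_1)$ polynomially in the residuals of $\fv(e_1)$; and, working under the DP operational semantics of $\RSmi$, the recursion computes one value $v_u$ per constructor vertex $u$ of $v_1$, so at most $n_0$ of them. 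At the step for $u$ the argument $w_u$ fed to $e_0$ is a $P(\safe\gamma)$-value whose $\safe\gamma$-positions hold the recursion values at $u$'s children and whose $N$-positions hold subgraphs of $v_1$; the latter have total size $\le n_0$, so $\residual{z}(\theta[z\mapsto w_u])\le n_0$ \emph{uniformly in $u$}, and the induction hypothesis for $e_0$ therefore supplies a single monotone polynomial $q_0$ with $\residual{e_0}(\theta[z\mapsto w_u])\le B:=q_0(n_0,\dots)$ for every $u$, the remaining arguments of $q_0$ being the fixed residuals of $e_0$'s other free variables. The last step — the heart of the proof — is to turn these per-step bounds into a bound on $\residual{e}\theta\le\size(v_{\mathrm{root}}\setminus SS(\fv(e))\theta)$. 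Since each child's recursion value occupies a \emph{safe} position of $w_u$, it lies inside the safe span $\flat_{P(\safe\gamma)}(w_u)$ that $\residual{e_0}$ subtracts off; writing $r(u):=\size\big(v_u\setminus(SS(\fv(e))\theta\cup\bigcup_{u'<u}v_{u'})\big)$ for the part of $v_u$ that is neither shared with a free-variable span nor with a value computed at a proper descendant $u'$ of $u$, one obtains $r(u)\le B$, and then, telescoping the $r(u)$ along a topological sort of $v_1$'s constructor vertices, $\residual{e}\theta\le\sum_u r(u)\le n_0\cdot B$ — polynomial in the residuals of $\fv(e)=(\fv(e_0)\setminus\{z\})\cup\fv(e_1)$. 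This dag-telescoping argument, resting on the fact that safe-position recursion values cannot inflate the normal content, is the $\RSmi$ incarnation of the Bellantoni--Cook bounding lemma for safe recursion, and it is where the $\fold_\delta$-I side condition and the normal/safe discipline pay off. Composing the per-rule polynomials then yields the theorem.
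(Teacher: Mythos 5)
Your overall strategy is the paper's: a strong induction on the typing derivation, assigning to each rule a polynomial bound with the key move being substitution of the argument's bound for the bound variable's residual indeterminate, and handling $\fold$ by a per-step bound times a step count. Your $\fold$ case is in fact slightly more general than the paper's (which works out the $\List_{\gamma_1}$ special case by unrolling into nested applications and a downward induction, then asserts the general case follows under DP evaluation); your dag-telescoping over a topological sort of the constructor vertices is a correct and cleaner way to say the same thing, and your justification that each child's recursion value is hidden inside $\flat_{P(\safe\gamma)}(w_u)$ is exactly the point.

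There is, however, a genuine gap in your application (and, by the same token, your $\Case$) rule: pure substitution $q_0[\residual{z_1}\gets q_1]$ is \emph{not} a sound bound when the argument type $\gamma_1$ has a safe component. The problem is that fresh vertices created by $e_1$ that land in the safe span of $v_1$ are subtracted from $\residual{e_0}\theta[z_1\mapsto v_1]$ (they lie in $SS(\set{z_1})\theta[z_1\mapsto v_1]$) but are \emph{not} subtracted from $\residual{((\lam{z_1}e_0)\,e_1)}\theta$, since $z_1$ is no longer free. Concretely, take $e=((\lam{z_1}z_1)\,(\toSafe\,x))$ with $x\of\Nat$ normal and $z_1\of\safe\Nat$. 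The induction hypothesis for the body $z_1$ may legitimately return $q_0=0$ (indeed $\residual{z_1}\theta_0=0$ always, by Lemma~\ref{l:res}), so $q_0[\residual{z_1}\gets q_1]=0$; yet $\residual{e}\theta=\size(\theta(x))$, which is unbounded. The correct bound requires an \emph{additive} term $+\,q_1$ covering the not-yet-claimed safe part of $v_1$, i.e.\ $q=q_0[\residual{z_1}\gets q_1]+q_1$ in the mixed case and $q=q_0[\residual{z_1}\gets 0]+q_1$ in the safe case; this is precisely what the paper's decomposition Lemma~\ref{l:ugh} and subcases 2--3 of its application case supply (and what its Scholium~\ref{s:res:pair} is implicitly celebrating). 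Your phrase ``the safe span of $x_j$ absorbs the part of $v$ inherited from $v_j$'' shows you saw the phenomenon, but you then discarded that absorbed mass instead of re-adding it, bounded by the argument's own residual bound. With that correction the per-rule polynomials compose as you describe and the proof goes through.
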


For the proof of Theorem~\ref{t:sizebnd} we need two bits of set/dag algebra.

\begin{lemma}  \label{l:setids}
Suppose $A$, $B$, $C$, and $D$ are sets.\sidenote{Thanks to the digraph conventions  at the beginning of \S\ref{S:res:size}, the lemma extends to digraphs.}
\begin{asparaenum}[(a)]
  \item \label{i:pair}
  $(A\cup C)\setminus (B\cup D) \subseteq (A\setminus B)\cup (C\setminus D)$.
  \item \label{i:app}
  $A\setminus ((B\setminus C)\cup D) 
  = (A \setminus (B\cup D)) \cup ((A\cap C)\setminus D)$.
\end{asparaenum}
\end{lemma}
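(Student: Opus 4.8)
Both parts are elementary set-theoretic identities/inclusions, and the plan is to verify each one by a routine element-chase, i.e.\ by tracking membership of an arbitrary element $z$ through the Boolean operations. I would first observe that, since union, intersection, and set difference all act componentwise on digraphs (i.e.\ on vertex sets and on edge sets separately), it suffices to prove the statements for plain sets; the digraph versions then follow by applying the set statements to the vertex component and the edge component independently. So the whole lemma reduces to two short calculations.

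For part~\eqref{i:pair}, I would take $z \in (A\cup C)\setminus(B\cup D)$, so $z \in A\cup C$ while $z\notin B$ and $z\notin D$. If $z\in A$ then $z\in A\setminus B$; if $z\in C$ then $z\in C\setminus D$; in either case $z$ lies in $(A\setminus B)\cup(C\setminus D)$, which is the claimed inclusion. (One notes the reverse inclusion can fail, e.g.\ with $z\in A\cap D$, $z\notin B$, $z\notin C$ — hence only $\subseteq$.)

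For part~\eqref{i:app}, I would argue by equivalences: $z \in A\setminus((B\setminus C)\cup D)$ iff $z\in A$, $z\notin D$, and $z\notin B\setminus C$; the last condition is equivalent to ($z\notin B$ or $z\in C$). Splitting on this disjunction, $z\in A$, $z\notin D$, $z\notin B$ gives $z\in A\setminus(B\cup D)$, while $z\in A$, $z\notin D$, $z\in C$ gives $z\in (A\cap C)\setminus D$; conversely each of these two sets is clearly contained in the left-hand side. Hence the two sides have the same elements. I do not anticipate any real obstacle here — the only thing to be careful about is bookkeeping the nested difference $B\setminus C$ correctly and stating the digraph-reduction remark cleanly; everything else is mechanical.
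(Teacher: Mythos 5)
Your proof is correct and is essentially the paper's argument: part (a) is dismissed as straightforward there too, and part (b) is verified in the paper by rewriting both sides with relative complements and distributing ($A\cap(\compl{B}\cup C)\cap\compl{D}$), which is just your element-chase case split on ``$z\notin B$ or $z\in C$'' written algebraically. Your remark that the digraph version follows componentwise on vertex and edge sets matches the paper's sidenote, so nothing is missing.
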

\begin{proof}
Part (a) is straightforward.  For part (b),
let $\compl{X} = ((A\cup B\cup C\cup D)\setminus X)$.  Then:
\begin{align*}
A\setminus ((B\setminus C)\cup D) 
  &= A \cap (\compl{B}\cup C)\cap\compl{D} \\
  &= (A \cap \compl{B}\cap\compl{D})\cup (A\cap C\cap \compl{D}) \\
  &= (A\setminus (B\cup D)) \cup ((A\cap C)\setminus D).
\end{align*}  
\end{proof}

\begin{scholium}[Residual size and sharing]\label{s:res:pair}
 The size bounds used by \citet{BellantoniCook} 
were of the form $p+\max(|y_1|,\dots,|y_n|)$
where $p$ was a polynomial over the sizes of normal variables 
and $y_1,\dots,y_n$ were  safe variables.  These poly-max bounds
do not combine well when trying to size-bound a pairing
$(e_1,e_2)$ in terms
of size-bounds for $e_1$ and $e_2$.  In contrast,
the $e=(e_1,e_2)$ case of the proof of Theorem~\ref{t:sizebnd}
below shows that if $q_1$ and $q_2$ are polynomial bounds
for $\residual{e_1}$ and $\residual{e_2}$, respectively, then
$q_1+q_2$ serves as a bound on $\residual{(e_1,e_2)}$.  
Thus while we bounding a more complex notion of size than that used by
Bellantoni and Cook, our bounds are simpler to work with
than poly-max bounds. 
\end{scholium}

\begin{proofsketch}[of Theorem~\ref{t:sizebnd}]
We proceed by strong induction on the derivation of $\Gamma\entails e\of\gamma$.  Suppose $\theta\of\Gamma$.

\CASES $e=()$ and $e=\safe{()}$.  
It clearly suffices to take $q=0$.

\CASE $e=x$, a variable.
It clearly suffices to take  $q=\residual{x}$.

\CASE $e=\toSafe(e_0)$, 
where $\Gamma\entails e_0\of\gamma_0$
and $\Gamma\entails \toSafe(e_0)\of\safe{\gamma_0}$.
By the IH for $e_0$, there is a polynomial bound $q_0$ for $\residual{e_0}$.
Suppose $e_0\theta\yields v$, hence, $e\theta\yields v$ also. 
Then 
\begin{align*}
 \residual{e}\theta 
   & = \size(\sharp_{\safe{\gamma_0}}(v) \dotcup (
       \nn{\safe{\gamma_0}}(v)\setminus SS(\fv(e))\theta))
   && \hbox{(by Definition~\ref{d:resSize}\eqref{i:resSize})}    
\\
  &= \size(\emptydag \dotcup (v\setminus SS(\fv(e_0))\theta))
   && \hbox{(since $\safe{\gamma_0}$ is safe and $\fv(e)=\fv(e_0)$)}    
  \\
  &\leq \size(\sharp_{\gamma_0}(v) \dotcup ((v\setminus \sharp_{\gamma_0}(v))
\setminus SS(\fv(e_0))\theta)) 
   && \hbox{(since $v=\sharp_{\gamma_0}(v)\dotcup\nn{\gamma_0}(v)$)}    
\\
  &\leq \residual{e_0}\theta
   && \hbox{(by Definition~\ref{d:resSize}\eqref{i:resSize})}    
\\
  &\leq q_0\theta
   && \hbox{(by the IH on $e_0$).}    
\end{align*} 
It clearly suffices to take   $q=q_0$.

\CASE $e=\toNorm(e_0)$, 
where $\Gamma\entails e_0\of\gamma_0$
and $\Gamma\entails \toNorm(e_0)\of\norm{\gamma_0}$.
By the IH for $e_0$, there is a polynomial bound $q_0$ for $\residual{e_0}$.
Suppose $e_0\theta\yields v$, hence, $e\theta\yields v$ also. 
By the side-condition on \emph{$\toNorm$-I}, $\fv(e_0)$ consists of 
variables of normal types,  hence, $SS(\fv(e_0))\theta=\emptydag$. 
Thus,
\begin{gather*}
 \residual{e}\theta = \size(\sharp_{\norm{\gamma_0}}(v))
  = \size(v)
   = \size(\sharp_{\gamma_0}(v)\dotcup \nn{\gamma_0}(v))
   \\
   \quad = \size(\sharp_{\gamma_0}(v)\dotcup (\nn{\gamma_0}(v)\setminus SS(\fv(e_0))))
   = \residual{e_0}\theta 
    \leq q_0\theta. 
\end{gather*}
It clearly suffices to take   $q=q_0$.

\CASES $e=(\proj_j e_0)$, $e=(\inj_j e_0)$, $e=(\Dn\delta e_0)$, and 
$(\Ds\delta e_0)$. 
By the IH for $e_0$, there is a polynomial bound $q_0$ for $\residual{e_0}$. 
In each of these cases, it clearly suffices to take  $q=q_0$. 

\CASE $e=(\Cn\delta e_0)$, where $\delta=\muP$ is normal. 
Suppose $e_0\theta \yields v_0$.  Then $e\theta \yields (\uCn\delta v_0)$
and, since $\delta$ is normal,  
$\residual{e}\theta = \size(\sharp_\delta(\uCn\delta v_0)) = 
\size(\,(\uCn\delta v_0)\,) = 1+\size(v_0) = 1+\size(\sharp_{P\delta}(v_0))
= 1+\residual{e_0}\theta$.  
By the IH for $e_0$, there is a polynomial bound $q_0$ for $\residual{e_0}$. 
It clearly suffices to take  $q=1+q_0$.

\CASE $e=(\Cs\delta e_0)$, where $\delta$ is normal and 
$\Gamma\entails e_0 \of \safe{(P \delta)}$. 
Suppose $e_0\theta \yields v_0$.  Then $e\theta \yields (\uCn\delta v_0)$.
Since $\safe\delta$ is safe,  $\sharp_{\safe\delta}(\,(\uCn\delta v_0)\,)=\emptydag$. Hence,
$\residual{e}\theta = \size((\uCn\delta v_0)\setminus SS(\fv(e)) )$
which, since the $\uCn\delta$-vertex created by $e$ is fresh, is 
$=1 + \size( v_0\setminus SS(\fv(e)) )$ which, since $\safe{(P\delta)}$ is 
safe, is $=1+\residual{e_0}\theta$.  
By the IH for $e_0$, there is a polynomial bound $q_0$ for $\residual{e_0}$. 
It clearly suffices to take  $q=1+q_0$.

\CASE $e=(e_1,e_2)$. 
\emph{Claim:} 
$\residual{(e_1,e_2)}\theta \leq \residual{e_1}\theta + \residual{e_2}\theta$. 
\emph{Proof:} Suppose 
$(e_1,e_2)\theta \yields \upair{v_1}{v_2}$.  Then 
$\residual{(e_1,e_2)}\theta$
\begin{align*}
 &  = \size(\sharp_{\gamma_1\times\gamma_2}(\upair{v_1}{v_2}))
    \Quad{10}
    && \hbox{(by Definition~\ref{d:resSize}\eqref{i:resSize})}
 \\ & \lefteqn{\Quad3 +\size(\nn{\gamma_1\times\gamma_2}(\upair{v_1}{v_2}) \setminus SS(\fv((e_1,e_2)))\theta)}
 \\
 &  \leq \size(\sharp_{\gamma_1}(v_1))+\size(\sharp_{\gamma_2}(v_2))
  && \hbox{(since $\size(\cdot)$ does not count $\underline{(,)}$-vertices)}
 \\
 & \lefteqn{\Quad3
 + \size(\nn{\gamma_1}(v_1)\cup\nn{\gamma_2}(v_2) \setminus 
 SS(\fv(e_1)\cup SS(\fv(e_2))\theta))}
 \\
  &  \leq \size(\sharp_{\gamma_1}(v_1))+\size(\sharp_{\gamma_2}(v_2))
  && \hbox{(by Lemma~\ref{l:setids}\eqref{i:pair})}
  \\& \lefteqn{\Quad3 + \size(\nn{\gamma_1}(v_1)\setminus SS(\fv(e_1))\theta) 
  + \size(\nn{\gamma_2}(v_2)\setminus SS(\fv(e_2))\theta) }
   \\
     & \lefteqn{= \left(\size(\sharp_{\gamma_1}(v_1))
  + \size(\nn{\gamma_1}(v_1)\setminus SS(\fv(e_1))\theta) \right)}
  \\ &\lefteqn{\Quad3 +\left(\size(\sharp_{\gamma_2}(v_2))  
  + \size(\nn{\gamma_2}(v_2)\setminus SS(\fv(e_2))\theta) \right)}
  \\
  &  = \residual{e_1}\theta + \residual{e_2}\theta
    && \hbox{(by Definition~\ref{d:resSize}\eqref{i:resSize}).}  
\end{align*}
For $i=1,2$, 
by the IH for $e_i$, there is a polynomial bound $q_i$ for $\residual{e_i}$.
Clearly, it suffices to take $q=q_1+q_2$.

\CASE $e=((\lam{z_1}e_0)\,e_1)$, where $\Gamma,z_1\of\gamma_1\entails e_0\of\gamma_0$ and $\Gamma\entails e_1\of\gamma_1$.
Suppose: 
\begin{inparaenum}[\em(i)]
  \item $\theta\of\Gamma$, 
  \item $e_1\theta\yields v_1$, 
  \item $\theta_0=\theta[z_1\mapsto v_1]$, and
  \item $e_0\theta_0\yields v$.
\end{inparaenum}  
Thus, $e\theta\yields v$. 
For $i=1,2$,
by the IH for $e_i$, there is a polynomial bound $q_i$ for $\residual{e_i}$. 
Without loss of generality, we assume $z_1\notin\fv(e)$.

\emph{Subcase 1:} $\gamma_1$ is normal.   
Then $\flat_{\gamma_1}(z_1)\theta_0=\emptydag$, and so,
$SS(\fv(e_0))\theta_0 = 
      SS(\fv(e_0)\setminus\set{z_1})\theta_0 \subseteq
      SS(\fv(e))\theta$.
Hence:
\begin{align*}
  \residual{e}\theta &= 
        \size(\sharp_{\gamma_0}(v) + (\nn{\gamma_0}(v)\setminus
                                      SS(\fv(e))\theta))
     &&\hbox{(by Definition~\ref{d:resSize}\eqref{i:resSize})}                                      
     \\
     &\leq \size(\sharp_{\gamma_0}(v) + (\nn{\gamma_0}(v)\setminus
                                      SS(\fv(e_0))\theta_0))
     &&\hbox{(as $SS(\fv(e_0))\theta_0\subseteq SS(\fv(e))\theta$)}                                      
     \\
     &= \residual{e_0}\theta_0
     &&\hbox{(by Definition~\ref{d:resSize}\eqref{i:resSize})}                                      
     \\
     & \leq q_0\theta_0
     &&\hbox{(by the IH on $e_0$)}                                      
     \\
     & \leq q_0[\residual{z_1}\gets q_1]\theta.
\end{align*}
The last inequality follows since 
$z_1\theta_0$ and $e_1\theta$ have the same type and the same value
and since $q_0$ is monotone and
$\residual{e_1}\theta \leq q_1\theta$.
Thus, it clearly suffices to take $q= q_0[\residual{z_1}\gets q_1]$.

\medskip
For the next two subcases and for the $\fold$-case we need the 
following decomposition lemma.

\begin{lemma} \label{l:ugh}
$\sharp_{\gamma_0}(v) \dotcup (\nn{\gamma_0}(v)\setminus SS(\fv(e))\theta) \;\subseteq$
\begin{align*}
& \left(\sharp_{\gamma_0}(v) \dotcup (\nn{\gamma_0}(v)\setminus SS(\fv(e_0))
\theta_0)\right)
\;\cup \;
\left(\sharp_{\gamma_1}(v_1) \dotcup (\nn{\gamma_1}(v_1)\setminus SS(\fv(e_1))\theta)\right). 
\end{align*}
\end{lemma}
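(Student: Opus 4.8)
The plan is to reduce the statement to elementary set/digraph algebra, carried out inside the common host digraph obtained as the union of $v$, $v_1$, and the environment graphs $\theta(x)$ (these may overlap because of structure sharing, but by the digraph conventions of \S\ref{S:res:size} and Lemma~\ref{l:setids} the operations $\cup$, $\cap$, $\setminus$, and $\dotcup$ on them behave exactly as for sets; and since $\size$ ignores edges we only ever need containments of subgraphs, never identities). First I would do the free-variable bookkeeping. Since $z_1$ is bound in $\lam{z_1}e_0$ we may assume $z_1\notin\fv(e)$, so that $\fv(e)=(\fv(e_0)\setminus\{z_1\})\cup\fv(e_1)$; and since $\theta_0$ agrees with $\theta$ off $z_1$, with $\theta_0(z_1)=v_1$ of type $\gamma_1$, this yields
\[
  SS(\fv(e))\theta = SS(\fv(e_0)\setminus\{z_1\})\theta \cup SS(\fv(e_1))\theta,
  \qquad
  SS(\fv(e_0))\theta_0 \subseteq SS(\fv(e_0)\setminus\{z_1\})\theta \cup \flat_{\gamma_1}(v_1),
\]
the second inclusion being an equality exactly when $z_1\in\fv(e_0)$.

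Write $A=\nn{\gamma_0}(v)$, $B=SS(\fv(e_0)\setminus\{z_1\})\theta$, $C=SS(\fv(e_1))\theta$, and $D=\flat_{\gamma_1}(v_1)$. The summand $\sharp_{\gamma_0}(v)$ occurs both on the left-hand side and inside the first summand on the right-hand side, and it is vertex-disjoint from every set occurring below, so after cancelling it the lemma becomes $A\setminus(B\cup C)\subseteq\bigl(A\setminus(B\cup D)\bigr)\cup\bigl(\sharp_{\gamma_1}(v_1)\dotcup(\nn{\gamma_1}(v_1)\setminus C)\bigr)$. The key step is the obvious decomposition
\[
  A\setminus(B\cup C) \;=\; \bigl(A\setminus(B\cup C\cup D)\bigr)\;\cup\;\bigl((A\cap D)\setminus(B\cup C)\bigr).
\]
For the first piece, $A\setminus(B\cup C\cup D)\subseteq A\setminus(B\cup D)$, and since $SS(\fv(e_0))\theta_0\subseteq B\cup D$ we get $A\setminus(B\cup D)\subseteq\nn{\gamma_0}(v)\setminus SS(\fv(e_0))\theta_0$, so re-attaching $\sharp_{\gamma_0}(v)$ places this piece inside the first summand on the right. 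For the second piece, $A\cap D\subseteq D=\flat_{\gamma_1}(v_1)\subseteq v_1$, hence $(A\cap D)\setminus(B\cup C)\subseteq v_1\setminus C$; and since $v_1=\sharp_{\gamma_1}(v_1)\cup\nn{\gamma_1}(v_1)$ with the two parts vertex-disjoint, $v_1\setminus C\subseteq\sharp_{\gamma_1}(v_1)\dotcup(\nn{\gamma_1}(v_1)\setminus C)$, which is exactly the second summand on the right.

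Recombining the two pieces and re-attaching $\sharp_{\gamma_0}(v)$ finishes the proof. I do not expect any deep obstacle here: all the content is the decomposition identity above together with the observation $A\cap D\subseteq v_1$ --- informally, the only part of $v$'s residual that the binding $z_1\mapsto v_1$ can newly shield as ``safe'' already sits inside $v_1$, hence is already charged to $v_1$'s own residual. The single point to handle with care is bookkeeping: keeping the span operators $\sharp$, $\flat$, $\nn$ and the sets $SS(\cdot)$ all regarded uniformly as induced subgraphs of the one host digraph, so that the elementary set identities apply verbatim and the three ``$\dotcup$''s really are disjoint unions.
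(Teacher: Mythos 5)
Your proof is correct and takes essentially the same route as the paper's: rewrite $\fv(e)$ as $(\fv(e_0)\setminus\{z_1\})\cup\fv(e_1)$, split off the part of $\nn{\gamma_0}(v)$ newly shielded by $\flat_{\gamma_1}(v_1)$, and charge that part to $v_1$'s residual via $v_1=\sharp_{\gamma_1}(v_1)\dotcup\nn{\gamma_1}(v_1)$. The only (inessential) difference is that you obtain the key decomposition by partitioning $A\setminus(B\cup C)$ along $D=\flat_{\gamma_1}(v_1)$ directly, whereas the paper invokes its prepackaged identity, Lemma~\ref{l:setids}(\ref{i:app}); the two are interchangeable.
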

\begin{proof} 
Note that:
\begin{align*}
   \lefteqn{\nn{\gamma_0}(v)\setminus SS(\fv(e))\theta}
   \\
   & \quad= \nn{\gamma_0}(v)\setminus \big(SS((\fv(e_0)\setminus\set{z_1})\cup\fv(e_1))\theta\big)
   \\
   & \quad = 
   \left(\nn{\gamma_0}(v)\setminus \big(SS(\fv(e_0)\cup\fv(e_1))\theta_0\right) 
   \Quad4
   \hbox{(by Lemma~\ref{l:setids}\eqref{i:app})}
   \\
   & \Quad3 \bigcup
   \left(\big(\nn{\gamma_0}(v)\cap SS(\set{z_1})\theta_0\big)\setminus SS(\fv(e_1))\theta_0\right)
   \\
   & \quad \subseteq 
   \left(\nn{\gamma_0}(v)\setminus \big(SS(\fv(e_0))\theta_0\right) \\
   &\Quad3
    \bigcup
   \left(\sharp_{\gamma_1}(v_1) \dotcup (\nn{\gamma_1}(v_1)\setminus SS(\fv(e_1))\theta)\right).   
\end{align*}
\end{proof}

\emph{Subcase 2:} $\gamma_1$ is safe.  Note that:
\begin{align*}
   \lefteqn{\size\big(\sharp_{\gamma_0}(v)\dotcup(\nn{\gamma_0}(v)\setminus SS(\fv(e_0))\theta_0)\big) }
   \\
   &\quad = \residual{e_0}\theta_0
   && \hbox{(by Definition~\ref{d:resSize}\eqref{i:resSize})}
   \\
   & \quad \leq q_0\theta_0
   && \hbox{(by the IH for $e_0$)}
   \\
   & \quad \leq q_0[\residual{z_1}\gets0]\theta
   && \hbox{(as $\gamma_1$ is safe, $\residual{z_1}\theta_0=0$).}  \qquad(\star)
\end{align*}
\begin{align*}
   \lefteqn{\size\big(\sharp_{\gamma_1}(v_1)\dotcup (\nn{\gamma_1}(v_1)\setminus SS(\fv(e_1))\theta)\big)}
   \\
   &\quad \leq \residual{e_1}\theta
   &&\hbox{(by Definition~\ref{d:resSize}\eqref{i:resSize})}
   \\
   & \quad \leq q_1\theta
   &&\hbox{(by the IH for $e_1$).}
\end{align*}
Thus by Lemma~\ref{l:ugh}, it clearly suffices to take $q= q_0[\residual{z_1}\gets 0]+q_1$.

\emph{Subcase 3:} $\gamma_1$ is mixed.  
Take  $q=q_0[\residual{z_1}\gets q_1]+q_1$.
The proof that this choice suffices is a 
combination of the arguments for subcases 1 and 2.
The essential change is that in the line marked with a $(\star)$
in subcase 2 is changed to:
\begin{align*}
  & \quad \leq q_0[\residual{z_1}\gets q_1]\theta
  && \hbox{(by the argument from subcase 1).}
\end{align*}

\CASE $e=(\Case e_0 \Of\; (\inj_1 x_1) \Rightarrow e_1;\;
                        (\inj_2 x_2) \Rightarrow e_2)$.
By the IH on $e_0$, $e_1$, and $e_2$, there are polynomial bounds
$q_0$, $q_1$, and $q_2$ on   
$\residual{e_0}$, $\residual{e_1}$, and $\residual{e_2}$, respectively.
Take $q=q_1[\residual{x_1}\gets q_0] + q_2[\residual{x_2}\gets q_0]$.                      
The argument that this bound works  is a straightforward modification of the 
argument for the previous  case.                         

\medskip

Recall that $\List_\gamma = \lfp{t}(\Unit + \gamma\times t)$.
Let $\gamma_*  =\Unit+\gamma_1\times\safe{\gamma_0}$

\SCASE $e=(\fold_{\List_{\gamma_1}} (\lam{z}e_0) \; e_1)$, where 
$\Gamma\entails e_1\of\gamma_1$ and 
$\Gamma,z\of\gamma_* \entails e_0\of \safe{\gamma_0}$.
For $i=1,2$,
by the IH for $e_i$, there is a polynomial bound $q_i$ for $\residual{e_i}$.

Suppose $e_1\theta\yields v_1$ and
${a}_1,\dots,{a}_{k}$ are  (from front to back) the $\gamma_1$-values 
making up the $\List_{\gamma_1}$-value  $v_1$. 
Hence, $k+1$ and each of $\size(a_1),\dots,\size(a_k)$ is $\leq\size(v_1)$ which, since $\gamma_1$ is normal, is 
$\leq \residual{e_1}\theta$ which in turn, by the IH on $e_1$, is 
$\leq q_1\theta$.

For $i=1,\dots,k+1$, let $u_i$ be the $\safe{\gamma_0}$ value such that
$\hat{e}_i\theta\yields u_i$ where\footnote{\label{fn:abuse}%
  The value $a_{i+1}$ in the equation for $\hat{e}_i$ 
  is an abuse of notation which we use to avoid yet more
  notational clutter.}
\begin{gather*}
\hat{e}_{k+1} = ((\lam{z}e_0)\;(\inj_1 {()})) \quad\hbox{and}\quad
   \hat{e}_i = ((\lam{z}e_0)\;(\inj_2 ({a}_{i},\hat{e}_{i+1})))
   \quad (1\leq i\leq k).
\end{gather*}

Consider $\residual{\hat{e}_{k+1}}\theta$.  Since $\size(\,(\uinj_1 {\uep})\,)=0$, it follows along the lines of
subcase 1 of the $e=((\lam{z_1}e_0)\,e_1)$
case that $\residual{\hat{e}_{k+1}}\theta \leq 
\residual{e_{0}}\theta[z\mapsto (\uinj_1 {\uep})] 
\leq q_0\theta[z\mapsto (\uinj_1 {\uep})] 
\leq q_0[\residual{z}\gets 0]\theta$.

Next  consider $\residual{\hat{e}_{k}}\theta$.  
Let $\hat{v}_k = (\uinj_2 \upair{a_{k}}{u_{k+1}})$ 
and $\theta_0 =\theta[z\mapsto \hat{v}_k]$. 
We note that 
$\sharp_{\gamma_*}(\hat{v}_k) =  (\uinj_2 \upair{a_{k}}{\emptydag})$
and 
$\flat_{\gamma_*}(\hat{v}_k) =  (\uinj_2 \upair{\emptydag}{u_{k+1}})$,
and hence, $\nn{\gamma_*}(\hat{v}_k) \subseteq u_{k+1}$.
Thus, by Lemma~\ref{l:res}, 
$\residual{z}\theta_0  = 
   \size(\sharp_{\gamma_*}(\hat{v}_k) )$ which is =
$ \size(a_k) \leq q_1\theta$.
By Definition~\ref{d:resSize}\eqref{i:resSize},
$\residual{\hat{e}_{k}}\theta = \size(\sharp_{\safe{\gamma_0}}(u_k) 
\dotcup \nn{\safe{\gamma_0}}(u_k)\setminus SS(\fv(\hat{e}_k))\theta )$
which, by Lemma~\ref{l:ugh}, is $\leq$
\begin{align*}
& \size\left(\sharp_{\safe{\gamma_0}}(u_k) \dotcup (\nn{\safe{\gamma_0}}(u_k)\setminus SS(\fv(e_0))
\theta_0)\right)
\\ &\quad +\quad \size\left(\sharp_{\gamma_*}(\hat{v}_k) \dotcup (\nn{\gamma_*}(\hat{v}_k)\setminus SS(\fv(e_{k+1}))\theta)\right). 
\end{align*}
As in the $e=((\lam{z_1}e_0)\,e_1)$ case/subcase~3:
\begin{align*}
  &{\size\left(\sharp_{\safe{\gamma_0}}(u_k) \dotcup (\nn{\safe{\gamma_0}}(u_k)\setminus SS(\fv(e_0))
\theta_0)\right)}
  \\
  &\Quad1 = \residual{e_0}\theta_0
  \;\leq\; q_0\theta_0
  \;\leq\; q_0[\residual{z}\gets q_1]\theta .
\\[1ex]
  & \lefteqn{\size\left(\sharp_{\gamma_*}(\hat{v}_k) \dotcup (\nn{\gamma_*}(\hat{v}_k)\setminus SS(\fv(e_{k+1}))\theta)\right)}
  \\
  & \quad \leq \size(\sharp_{\gamma_*}(\hat{v}_k)) + 
   \size( (u_{k+1}\setminus SS(\fv(e_{k+1}))\theta)) & 
   \hbox{(since $\nn{\gamma_*}(\hat{v}_k)\subseteq u_{k+1}$).}
\end{align*}
We know that $\size(\sharp_{\gamma_*}(\hat{v}_k))\leq q_1\theta$.
Also we know that
$\sharp_{\safe{\gamma_0}}(u_{k+1})=\emptydag$
and $\nn{\safe{\gamma_0}}(u_{k+1}) = u_{k+1}$. So, 
$\size( (u_{k+1}\setminus SS(\fv(e_{k+1}))\theta))=\residual{e_{k+1}}\theta$
which by the previous induction step 
is $\leq q_0[\residual{z}\gets 0]\theta$. 
Thus, putting things together we have 
$\residual{e_k}\theta \leq (q_0[\residual{z}\gets q_1] + q_1 + 
q_0[\residual{z}\gets 0])\theta$.

It follows from a straightforward, if cluttered, downward induction
that for $i=k+1,\dots,1$:
\begin{align*} 
   \residual{e_i}\theta 
   \leq 
   \left((k+1-i)\cdot \left( \strut q_0[\residual{z}\gets q_1] + q_1\right) 
         + q_0[\residual{z}\gets 0] \right)\theta. 
\end{align*}
By the \emph{Fold}$_{\List_{\gamma_0}}$ evaluation rule,
$e\theta\yields u_1$.  Thus, it clearly suffices to take 
$q= q_1\cdot q_0[\residual{z}\gets q_1]+ (q_1)^2$.

\CASE $e=(\fold_{\delta} (\lam{z}e_0) \, e_1)$.
Thanks to the DP evaluation strategy, the 
$\delta=\List_{\gamma_1}$ case is pretty much the general case.
\end{proofsketch}

\begin{definition} Let $Y=\set{y_1,y_2,\dots }$ be a set of 
indeterminates.
  \begin{asparaenum}[(a)]
    \item A collection of terms $\SB$ is \emph{arithmetically closed}\sidenote{$Y$ is an implicit parameter.}
    iff $Y \cup \nat \subseteq \SB$ and, when $q_1,q_2\in\SB$,
    we have
    $q_1+q_2$,  $q_1\cdot q_2$, and $q_1[y_i\gets q_2] \in \SB$.
    \item Let $F$ be a collection of $\nat\to\nat$ functions.
    Let $\SB(F)$ = the smallest arithmetically closed set of terms
    that is also closed under applications of elements of $F$. 
\end{asparaenum}   
\end{definition}

Clearly, $\SB(\emptyset)$ = the polynomial terms over the indeterminates.

\pagebreak[4]
\begin{scholium}\ 
\begin{asparaitem}
  \item 
  	As inspection of the proof of Theorem~\ref{t:sizebnd} shows that
  	we can replace the set of polynomials over the indeterminates as 
  	size bounds with any arithmetically closed set of terms over the 
	indeterminates.  Consequently, if we take $\RSmi$ and add to it
	$I$, a set of initial functions, each of which has a bound in 
	$F$, some set of $\Nat\to\Nat$ functions, then a trivial extension
	of the proof of Theorem~\ref{t:sizebnd} shows that $\RSmi+I$
	is $\SB(F)$ size bounded. 
  \item 
  	Since $\RSmii= \RSmi + \set{\,\CS\muP \suchthat
	 \muP\not\equiv\mathbf{0}\,}$ and since 
  	$\size(\cs_\muP(v) ) \leq \size(v)= \id(\size(v))$ for all 
	type-$\muP$ values $v$, we have that $\RSmii$ terms are size bounded by
  	$\SB(\set{\id})$, i.e., polynomials.
  \item 
  	Let $K = \RSmi + \set{\exp \of\, \Nat\to\Nat}$ where 
	$\exp{\numeral{m}} = \numeral{2^m}$.   
	Since $\numeral{2^m} \leq 1+2^m$, we have that $K$ terms are size 
	bounded by $\SB(\set{\lam{m}(1+2^m)})$.
    $\SB(\set{\lam{m}(1+2^m)})$ turns out to define 
    a cofinal subset of
    $\SE_3$, the Kalm\'{a}r elementary recursive functions \citep{Rose84}.   
    With a bit more work one can show that $K$ is 
    $\SB(\set{\lam{m}(1+2^m)})$ cost bounded and that an analogue of
    the completeness result holds for $K$. 
  \item  
    By the same scheme used to define $K$, we can define
    sound and complete formalisms for each finite level of
    the Grzegorczyk hierarchy above $\SE_3$ \citep{Rose84} 
    and for the quasi-polynomial time\sidenote{I.e., 
    $\bigcup_{k>0}O(2^{(\log n)^{k}})$ time bounded.} functions.  
  \item 
    A key thing to note in all of these results is that the
    ramified types do their assigned job of ruling out diagonalizing
    definitions that climb out of the $\SB(F)$ size bounded functions
    (for the appropriate choice of $F$).    
\end{asparaitem}
\end{scholium}
\subsection{A polynomial cost bound}

\begin{theorem}
  Suppose  $\Gamma\entails e\of\gamma$ is an $\RSmi$-judgment
  with $\fv(e)=\set{x_1,\dots,x_k}$.
  Then there is a polynomial $p$ over indeterminates 
  $\residual{x_1},\dots,\residual{x_k}$ such that, for all 
  $\theta\of\Gamma$, we have
     $\mathit{cost}_{DP}(\Gamma\entails e\of\gamma\theta)
     \;\leq\; p \theta$. 
\end{theorem}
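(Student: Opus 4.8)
The plan is to follow the proof of Theorem~\ref{t:sizebnd} almost line by line, replacing ``the residual size of the value of $e$'' by ``the number of nodes in the DP-derivation tree for $e\theta$'' and carrying the size bounds of Theorem~\ref{t:sizebnd} along as auxiliary data. As there, I would work with polynomials over the indeterminates $\residual{x_1},\dots,\residual{x_k}$ having natural-number coefficients --- such polynomials are monotone, so substituting an upper bound for an indeterminate yields an upper bound --- and argue by strong induction on the derivation of $\Gamma\entails e\of\gamma$, fixing $\theta\of\Gamma$. The recurring shape of the argument is that $\cost_{DP}(e\theta)$ is bounded by a constant (one node for $e$ itself, plus any off-stage polynomial-functor reductions, whose number depends only on the fixed functor $P$) plus the costs of the subderivations; the only rule for which this does not compose trivially is $\fold$.

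For the syntactically simple rules this is immediate. For $e\in\{(),\safe{()},x\}$ take $p=1$. For $e$ of the form $\toSafe(e_0)$, $\toNorm(e_0)$, $(\proj_j e_0)$, $(\inj_j e_0)$, $(\Cn\delta e_0)$, $(\Cs\delta e_0)$, $(\Dn\delta e_0)$, or $(\Ds\delta e_0)$, take $p=c+p_0$, where $p_0$ is the cost bound for $e_0$ given by the induction hypothesis and $c$ absorbs the node for $e$ together with the bounded destructor-side reductions; for $e=(e_1,e_2)$ take $p=1+p_1+p_2$. For an application $e=((\lam{z}e_0)\,e_1)$ (with $z\notin\fv(e)$, without loss of generality), the DP-derivation for $e\theta$ is a root over the derivation for $e_1\theta\yields v_1$ and the derivation for $e_0\theta_0\yields v$, where $\theta_0=\theta[z\mapsto v_1]$. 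By the induction hypotheses there are cost bounds $p_1$ for $e_1$ and $p_0$ for $e_0$ (the latter a polynomial over $\residual{z}$ and the $\residual{x_i}$); by Lemma~\ref{l:res} and Definition~\ref{d:resSize}, $\residual{z}\theta_0=\size(\sharp_{\gamma_1}(v_1))\le\residual{e_1}\theta$, and Theorem~\ref{t:sizebnd} supplies a size-bound polynomial $q_1$ with $\residual{e_1}\theta\le q_1\theta$; and $\residual{x_i}\theta_0=\residual{x_i}\theta$ since $\theta_0$ agrees with $\theta$ on $x_i$. Hence it suffices to take $p=1+p_1+p_0[\residual{z}\gets q_1]$, using monotonicity of $p_0$. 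The $\Case$ rule is the same substitution trick against the size bound $q_0$ for $e_0$, giving $p=1+p_0+p_1[\residual{x_1}\gets q_0]+p_2[\residual{x_2}\gets q_0]$; $\toNorm$ and the new side conditions introduce nothing new.

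As in Theorem~\ref{t:sizebnd}, the main obstacle is the recursor case $e=(\fold_\delta(\lam{z}e_0)\,e_1)$, and the crucial point is that the side condition $\dagger$ of $\RSmi$ forces $\delta$ to be normal. Under the DP strategy the derivation for $e\theta$ decomposes into: the derivation for $e_1\theta\yields v_1$; preamble/epilogue of constant size; and at most $k_\delta\cdot(1+\size(v_1))$ recursion steps (the count used in the proof of Lemma~\ref{l:compress}), each step being a constant-size frame over a copy of the derivation for $e_0\theta[z\mapsto\hat v_i]\yields u_i$, where $\hat v_i$ is the $P(\safe\gamma)$-value injecting the current datum together with the already-memoized values of its recursive calls. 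Since $\delta$ is normal, $\size(v_1)=\residual{e_1}\theta\le q_1\theta$ by Theorem~\ref{t:sizebnd}; and, exactly as in the $\fold_{\List_{\gamma_1}}$ instance of that proof, $\residual{z}\theta[z\mapsto\hat v_i]=\size(\sharp_{P(\safe\gamma)}(\hat v_i))$ counts only the non-recursive part of $\hat v_i$, so it is $\le\size(v_1)\le q_1\theta$ --- this is precisely where the (potentially large, heavily shared) values of the recursive calls are kept out of the bound. With $p_0,p_1$ the cost bounds for $e_0,e_1$ from the induction hypothesis, each recursion step costs at most $c_\delta+p_0[\residual{z}\gets q_1]\theta$, so the whole evaluation costs at most $p_1\theta+k_\delta\cdot(1+q_1\theta)\cdot(c_\delta+p_0[\residual{z}\gets q_1]\theta)+c$; taking $p$ to be the corresponding polynomial over $\residual{x_1},\dots,\residual{x_k}$ finishes the case, and (as there) the $\delta=\List_{\gamma_1}$ instance already exhibits all the ideas, the general data type adding only bookkeeping. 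I expect the genuinely delicate points to be (i) pinning down the $O(1+\size(v_1))$ bound on the number of DP recursion steps together with the constant per-step overhead without the DP rules spelled out here, and (ii) re-deriving the bound $\residual{z}\theta[z\mapsto\hat v_i]\le\size(v_1)$ from the normal-span computation inside Theorem~\ref{t:sizebnd}; both are already implicit in that proof, so I do not anticipate essentially new difficulties.
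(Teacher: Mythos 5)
Your proposal is correct and follows essentially the same route as the paper: strong induction on the typing derivation, with the cost bounds for the composite and $\fold$ cases obtained by substituting the size-bound polynomials of Theorem~\ref{t:sizebnd} for the residuals of the bound variables (e.g.\ $p_0[\residual{z}\gets q_1]$), and the number of recursion steps in a $\fold$ bounded by (a constant times) $\size(v_1)\le q_1\theta$ because $\delta$ is normal. The only cosmetic difference is that the paper handles its $\List_{\gamma_1}$ special case by switching to the TD fold rule via Lemma~\ref{l:costs}, whereas you count DP recursion steps directly against the dag size of $v_1$; both give the same per-step bound $p_0[\residual{z}\gets q_1]+c$ and the same final polynomial.
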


\begin{proofsketch}
Suppose $\theta\of\Gamma$. 
We proceed by strong induction on the derivation of 
$\Gamma\entails e\of\gamma$.  
We leave all but two cases as exercises for the reader. 

\CASE $e = ((\lam{z_1}e_0)\,e_1)$.  
By the induction hypothesis, there are  polynomial bounds $p_0$ and $p_1$
for the evaluation costs of $e_0$ and $e_1$, respectively.  By 
Theorem~\ref{t:sizebnd}, there is also 
a polynomial bound $q_1$ for $\residual{e_1}$. 
Suppose  $e_1\theta\yields v_1$ and $\theta_0=\theta[z_1\mapsto v_1]$.  Then
\begin{align*}
  &\lefteqn{\cost_{DP}(e\theta)}
  \\
  &\quad \leq 1+\cost_{DP}(e_1\theta)+\cost_{DP}(e_0\theta_0) 
  && \hbox{(by the $\lambda$-\emph{App} rule)}
  \\
  &\quad \leq 1+p_1\theta + p_0\theta_0
  && \hbox{(by IH on $e_0$ and $e_1$)}
  \\
  &\quad \leq 1+p_1\theta + p_0[\residual{z_1}\gets q_1]\theta
  && \hbox{(since $\residual{z_1}\theta_0 \leq q_1\theta$).}
\end{align*}
Thus it clearly suffices to take $p=1+p_0[\residual{z_1}\gets q_1]+p_1$.

\SCASE $e=(\fold_{\List_{\gamma_1}} (\lam{z}e_0) \; e_1)$.
By the induction hypothesis, there are polynomial bounds $p_0$ and $p_1$ 
for the evaluation costs of $e_0$ and $e_1$, respectively. 
By Lemma~\ref{l:costs}\eqref{i:DP<TD:seq}, 
without loss of generality we can use
the TD $\fold$-evaluation rule for this particular $\fold$. 
Recall that ${\List_{\gamma_1}}=\muP$  where 
$P X = \Unit + \gamma_1\times X$. By the 
$\fold$-evaluation rule, the base of the 
derivation tree for $e\theta$ is:
\begin{align*}
     \irule{((\lam{z}e_0)\,(\,g(\Dn{{\List_{\gamma_1}}} e_1)\,)) 
      \theta\yields v }%
  {(\fold_{{\List_{\gamma_1}}} (\lam{z}e_0)\;e_1)\theta \yields v}
\end{align*}
where $P \,(\fold_{{\List_{\gamma_1}}} (\lam{z}e_0)) \leadsto g$
and $g$ is, \emph{mutatis mutandis}, as in Example~\ref{ex:g}. 
Let $k$, $a_1,\dots,a_k$ and $u_{1},\dots,u_{k+1}$ be as in
the $e=(\fold_{\List_{\gamma_1}} (\lam{z}e_0) \; e_1)$ case 
in the proof of Theorem~\ref{t:sizebnd}.  Also let $q_1$ be
a polynomial bound for $\residual{e_1}$ as provided for by 
Theorem~\ref{t:sizebnd}.  A straightforward argument 
shows that $\cost(e\theta)$ is bounded by
\begin{align*}
   \left( p_1+\left(\strut p_0[\residual{z}\gets 0] + c\right) + 
   \sum_{i=1}^k \left(\strut p_0[\residual{z}\gets q_1] + c\right)\right)\theta,
\end{align*}
where $c$ is a constant (depending on $\List_{\gamma_1}$)
that bounds the cost of breaking down and rearranging values as 
dictated by $g$.   
Thus it clearly suffices to take $p=p_1 + q_1 \cdot (p_0[\residual{z}\gets q_1]+c)$.
\end{proofsketch}

\subsection{Noninterference}\label{S:noninterf}

Below we show that if one evaluates
$(\Gamma\entails e\of\gamma)\theta$
where $\gamma$ is normal, then the value produced
has no dependence upon the values of $e$'s safe-type free variables nor 
upon the safe parts of $e$'s mixed-type free variables.
We call this property \emph{normal invariance} (Definition~\ref{d:norm:inv}\eqref{i:norm:inv});
it is a form of  \emph{noninterference} \citep{Goguen:1982}.
These forms of noninterference results are common for tiered 
formalisms (e.g., \citep[Theorems 3 and 5]{Marion2011}), but the $\toNorm$-\emph{I} 
rule makes the proof of Theorem~\ref{t:norminvar} below a bit 
tricky.

\emph{Conventions:} For $m>2$: \
$\underline{(}v_1\underline{\ustrut,}v_2 \underline{\ustrut,}\,\dots 
   \underline{\ustrut,}\, v_m\underline{)}
   = 
   \underline{(}v_1\underline{\ustrut,}\,
   \underline{(}v_2\underline{\ustrut,}\,\dots 
   \underline{\ustrut,}\, v_m \underline{)}\,\underline{)}$.
   For dags $G$ and $G'$, $G\cong G'$ means that $G$ and $G'$ are 
   isomorphic.

\begin{definition}
Suppose $\Gamma = x_1\of\tau_1,\dots,x_n\of\tau_n$
and $\theta\of\Gamma$.  
Then:
\begin{gather*} 
   e\theta\Downarrow \underline{(}v_1\underline{\ustrut,}\,\dots 
   \underline{\ustrut,}\, v_n \underline{\ustrut,}\, v\underline{)} 
   \iffdef
   (x_1,\dots,x_n,e)\theta \yields \underline{(}v_1\underline{\ustrut,}\,\dots 
   \underline{\ustrut,}\, v_n \underline{\ustrut,}\, v\underline{)}.
\end{gather*}   
We call 
$\underline{(}v_1\underline{\ustrut,}\,\dots 
   \underline{\ustrut,}\, v_n \underline{\ustrut,}\, v\underline{)}$ 
the \emph{assembly} for $e\theta$.
\end{definition}

The \emph{assembly} for $e\theta$ contains 
$e\theta$'s value and
the value of 
each $x\in\dom(\Gamma)$  
\emph{together with} the sharing in and amongst 
these graphs.

\begin{definition} \label{d:norm:inv}
  Suppose $\Gamma=x_1\of\gamma_1,\dots,x_n\of\gamma_n$
  and $\theta,\theta'\of\Gamma$. 
  \begin{asparaenum}[(a)]
    \item \label{i:norm:iso:env}
      Suppose 
      $(x_1,\dots,x_n)\theta \yields v$ and 
      $(x_1,\dots,x_n)\theta' \yields v'$. 
      Then
     \begin{gather*}
       \theta\cong^\sharp\theta' \iffdef \;
         \sharp_{\gamma_1\times\dots\times\gamma_n}(v) 
         \cong
	     \sharp_{\gamma_1\times\dots\times\gamma_n}(v').
     \end{gather*}
  \item \label{i:norm:inv}
    An $\RSmi$ judgment 
   	$\Gamma\entails e \of\gamma$ is \emph{normal invariant}
    when, for all $\theta,\theta'\of\Gamma$ with
    $e\theta \Downarrow a$ and $e\theta' \Downarrow a'$, we have:
    \begin{gather} \label{e:norm:inv:impl}
	   \theta\cong^\sharp\theta'
	   \implies 
	   \sharp_{\gamma_1\times\dots\times\gamma_n\times\gamma}(a)\cong
	   \sharp_{\gamma_1\times\dots\times\gamma_n\times\gamma}(a').	   
	\end{gather}
\end{asparaenum}  
\end{definition}

If we view an environment as providing a
value context for a term, then \eqref{e:norm:inv:impl}
says that whenever the normal part of the context of $\theta$ 
and $\theta'$ agree, then the normal part of the assemblies $a$ 
and $a'$ agree.  Thus in particular, the normal part of $e$'s value
has no dependence upon the values of $e$'s safe-type free variables nor 
upon the safe parts of $e$'s mixed-type free variables.

\begin{theorem}  \label{t:norminvar}
  Each ground-type $\RSmi$ type judgment is normal-invariant. 
\end{theorem}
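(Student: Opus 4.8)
The plan is to prove the theorem by strong induction on the derivation of $\Gamma\entails e\of\gamma$, establishing a slightly strengthened statement: for all $\theta,\theta'\of\Gamma$ with $e\theta\Downarrow a$ and $e\theta'\Downarrow a'$, \emph{every} isomorphism between $\sharp_{\gamma_1\times\dots\times\gamma_n}$ of the two environment assemblies \emph{extends} to an isomorphism between $\sharp_{\gamma_1\times\dots\times\gamma_n\times\gamma}(a)$ and $\sharp_{\gamma_1\times\dots\times\gamma_n\times\gamma}(a')$. Carrying the witnessing isomorphisms along (rather than merely asserting that some isomorphism exists) is what makes the induction survive the structure sharing in value term graphs: when a subexpression's value shares vertices with the environment, or with an earlier subexpression's value that has been bound into the environment, the extension clause lets us compose the witnessing isomorphisms coherently on the overlap.

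Before the main induction I would prove an auxiliary genericity lemma by an entirely analogous but simpler induction: if $\theta$ and $\theta'$ are environments for $\Gamma$ whose full assemblies are isomorphic (sharing included), then so are the full assemblies of $e\theta$ and $e\theta'$, and again every such isomorphism extends. This lemma has no normal/safe case analysis --- it merely says that $\RSmi$ evaluation is a function up to graph isomorphism --- and its only mildly interesting case is $\fold$, which is handled by unfolding the \emph{Fold} rule one step and doing an inner induction on the length of the list being folded, as in the proof of Theorem~\ref{t:sizebnd}, using Lemma~\ref{l:costs}\eqref{i:DP<TD:seq} to reduce without loss of generality to $\delta=\List_{\gamma_1}$ under the top-down rule.

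For the main induction, the structural cases --- $\times$-I, $\times$-E, $+$-I, $+$-E, the data rules $\Cn\delta$, $\Dn\delta$, $\Cs\delta$, $\Ds\delta$, and application --- all go the same way: by Definition~\ref{d:spans} the normal span of $e$'s value decomposes into the normal spans of the immediate subvalues together with at most one fresh root vertex and its edges, so one threads the environment isomorphism through the inductive hypothesis for each subexpression and glues. The side condition on $+$-E (the sum type is normal or the branch type is safe) guarantees that each case-bound variable is either normal --- so its value is fully pinned down by the scrutinee's normal span --- or safe --- so it contributes $\emptydag$ to the branch value's normal span --- and in either case the inductive hypothesis on the branch applies with an appropriate environment isomorphism; application is handled the same way, splitting on whether the argument type $\gamma_1$ is normal, safe, or mixed. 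The $\fold$ case and the $\toSafe$ case are in fact immediate here: every $\fold$-expression of $\RSmi$ has a safe conclusion type and $\toSafe(e_0)$ has type $\safe{\gamma_0}$, so in both cases the value's normal span is $\emptydag$ and the assembly's normal span is just the environment's, which is preserved by hypothesis.

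The one case needing the auxiliary lemma is $e=\toNorm(e_0)$, and this is where the real difficulty lies: here $e$ has type $\norm{\gamma_0}$, whose normal span is the whole value, so we must reproduce \emph{all} of $e_0\theta$'s value up to isomorphism, not merely its normal part, whereas the inductive hypothesis on $e_0$ delivers only the latter. The rescue is the side condition on $\toNorm$-I: every free variable of $e_0$ has normal type, so the hypothesis $\theta\cong^\sharp\theta'$, restricted to $\fv(e_0)$, is already a \emph{full} isomorphism of the sub-environments that $e_0$ reads; applying the auxiliary genericity lemma to $e_0$ then yields a full isomorphism of $e_0$'s assemblies, hence of its values, which we glue with the given isomorphism on the rest of $\Gamma$ to obtain the required isomorphism of the $\sharp$-spans of the assemblies of $\toNorm(e_0)$. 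Getting the isomorphism-extension bookkeeping exactly right across the application, $\Case$, and $\fold$ cases, and discharging $\toNorm$ cleanly through the auxiliary lemma, are the only substantive work; everything else is a mechanical appeal to the compositional definition of $\sharp_\gamma(\cdot)$ in Definition~\ref{d:spans}.
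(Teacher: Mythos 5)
Your proposal matches the paper's own argument in all essentials: a strong induction on the type derivation in which safe result types (covering $\toSafe$, $\Cs\delta$, $\Ds\delta$, and every $\fold$) are trivial, the structural cases are handled by the compositional decomposition of $\sharp_\gamma(\cdot)$ from Definition~\ref{d:spans}, and the $\toNorm$ case is discharged by exactly the auxiliary fact you call the genericity lemma, which is the paper's Lemma~\ref{l:patch}, made applicable by the normal-free-variable side condition on \emph{toNorm-I}. Your explicit strengthening that every isomorphism of the environment spans extends to the assemblies is a sound way of making precise the gluing the paper performs informally as ``assembly surgery,'' so the route is the same, only more explicitly bookkept.
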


\begin{proofsketch}
The argument is a strong induction over the type 
derivation of 
$\Gamma\entails e\of\gamma$.
Suppose $\set{x_1,\dots,x_n} = \dom(\Gamma)$ and 
$\vec x$ abbreviates $x_1,\dots,x_n$. 
Also suppose $\theta,\,\theta'\of\Gamma$.  
\emph{Notation:}  Suppose 
$(\vec x)\theta\yields v$, $(\vec x)\theta'\yields v'$,
$e\theta\Downarrow a$, and $e\theta\Downarrow a'$, then:
\begin{asparaitem}
\item
    $(\sharp \vec{x})\theta \cong (\sharp \vec{x})\theta'$
    means $\sharp_{\gamma_1\times \dots \times \gamma_n}(v)\cong 
    \sharp_{\gamma_1\times \dots \times \gamma_n}(v')$, i.e., $\theta\cong^\sharp \theta'$.
\item
    $(\sharp \vec{x},\sharp e)\theta \cong (\sharp \vec{x},\sharp e)\theta'$
    means $\sharp_{\gamma_1\times \dots \times \gamma_n\times\gamma_0}(a)\cong 
    \sharp_{\gamma_1\times \dots \times \gamma_n\times\gamma_0}(a')$. 
\item
    $(\sharp \vec{x}, e)\theta \cong (\sharp \vec{x}, e)\theta'$
    means $\sharp_{\gamma_1\times \dots \times \gamma_n\times\norm{\gamma_0}}(a)\cong 
    \sharp_{\gamma_1\times \dots \times \gamma_n\times\norm{\gamma_0}}(a')$. 
\end{asparaitem}
Thus our goal is to show that $\theta\cong^\sharp \theta' \implies
(\sharp \vec{x},\sharp e)\theta \cong 
(\sharp \vec{x},\sharp e)\theta'$.

\CASE $\gamma$ is safe.  
This follows trivially.   Note that this case covers the cases of
$e=\safe{()}$,
$e=(\toSafe\,e_0)$, 
$e=(\Cs\delta e_0)$, 
$e=(\Ds\delta e_0)$, and 
$e=(\fold_\delta (\lam{z}e_0)\,e_1)$.

\medskip

The next case requires a  utility lemma.

\begin{lemma}\label{l:patch}  
  Suppose 
  $\Gamma\entails e\of\gamma$  where $\dom(\Gamma)=\set{x_1,\dots,x_m,y_1,\dots,y_n}$ and $\fv(e)\subseteq\set{y_{1},\dots,y_n}$.
  Then, for each $\theta,\theta'\of\Gamma$
  with $(\sharp\vec{x}, \vec{y})\theta 
  \cong (\sharp\vec{x}, \vec{y})\theta'$, we have that 
  $(\sharp \vec{x},\vec{y},e)\theta \cong (\sharp \vec{x},\vec{y},e)\theta'$.\sidenote{$(\sharp\vec{x}, \vec{y})\theta$ and 
  $(\sharp \vec{x},\vec{y},e)\theta$ are equivalent to 
  $(\sharp\vec{x}, (\vec{y}))\theta$ and 
  $(\sharp \vec{x},(\vec{y},e))\theta$, respectively.}
\end{lemma}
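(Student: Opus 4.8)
The plan is to exploit that $e$ mentions none of the $x_i$, so that replacing $\theta$ by $\theta'$ can affect the evaluation of $e$ only through the $\vec{y}$-components, and those are pinned down by the hypothesis. First I would isolate a utility fact about evaluation itself: if $\phi$ is a graph isomorphism (a label- and edge-preserving vertex bijection) between the assemblies $(y_1,\dots,y_n)\theta$ and $(y_1,\dots,y_n)\theta'$, then evaluating $e$ under the two environments proceeds in lockstep and $\phi$ extends \emph{canonically} to a graph isomorphism $\psi$ between $(y_1,\dots,y_n,e)\theta$ and $(y_1,\dots,y_n,e)\theta'$; moreover $\psi$ agrees with $\phi$ on every vertex that $e$'s value reuses from the environment and is a fresh bijection on the vertices newly allocated while evaluating $e$. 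This is a routine induction on the derivation of $e\theta\yields v$, using that each evaluation rule either copies a pointer already present in $\theta$ or creates exactly one new vertex (the same observation that underlies the proof of Lemma~\ref{l:costs}\eqref{i:cost:size}); since $\fv(e)\subseteq\set{y_1,\dots,y_n}$, no step of this evaluation ever inspects $\theta(x_i)$, so the $x_i$-values play no role.

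Now fix $\theta,\theta'\of\Gamma$ with $(\sharp\vec{x},\vec{y})\theta\cong(\sharp\vec{x},\vec{y})\theta'$ and let $\Phi$ be a witnessing isomorphism. The assembly $(\sharp\vec{x},\vec{y})\theta$ is a nested pair whose $\pi_1,\pi_2$ spine edges lead to the sub-dags $\sharp_{\Gamma(x_i)}(\theta(x_i))$ and to the full values $\theta(y_j)$; an isomorphism of such assemblies must preserve the spine, hence restricts to an isomorphism of each sub-dag. In particular $\Phi$ restricts to an isomorphism $\phi$ between $(y_1,\dots,y_n)\theta$ and $(y_1,\dots,y_n)\theta'$. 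Applying the utility fact to this $\phi$ produces the isomorphism $\psi$ between $(\vec{y},e)\theta$ and $(\vec{y},e)\theta'$ that extends $\phi$.

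It remains to glue $\Phi$ and $\psi$ into the required isomorphism between $(\sharp\vec{x},\vec{y},e)\theta$ and $(\sharp\vec{x},\vec{y},e)\theta'$. Write $v$ for $e\theta$'s value. Because $\fv(e)\subseteq\set{y_1,\dots,y_n}$, every non-fresh vertex of $v$ lies in some $\theta(y_j)$, hence in the $\vec{y}$-portion of $(\sharp\vec{x},\vec{y})\theta$; on such a vertex $\Phi$ acts via $\phi$, and $\psi$ also acts via $\phi$ there, so $\Phi$ and $\psi$ agree on the overlap of their domains. For the same reason no edge of $(\sharp\vec{x},\vec{y},e)\theta$ runs from a vertex of $v$ to a vertex that lies in some $\theta(x_i)$ but in no $\theta(y_j)$ (the only edges out of $v$'s vertices go to $v$'s vertices, and those are fresh or lie in some $\theta(y_j)$). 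Hence $\Phi\cup\psi$ is a well-defined, label- and edge-preserving vertex bijection, and it is onto $(\sharp\vec{x},\vec{y},e)\theta'$ since $\Phi$ and $\psi$ are. That is the required isomorphism. The only real work is the bookkeeping in the utility fact --- maintaining the reused-versus-fresh distinction so that $\psi$ provably agrees with $\phi$ on reused vertices --- together with the observation that $e$'s value cannot ``reach around'' the $\vec{y}$-components into the hidden (non-normal, non-$\vec{y}$) part of an $x_i$-value; both are immediate consequences of $\fv(e)\subseteq\set{y_1,\dots,y_n}$, which is the one hypothesis doing the work.
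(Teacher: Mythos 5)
Your proposal is correct and follows essentially the same route as the paper: restrict attention to the $\vec{y}$-components (which is licensed by $\fv(e)\subseteq\set{y_1,\dots,y_n}$), observe that the two evaluations of $e$ proceed in lockstep so the isomorphism extends along the evaluation derivation by an induction that tracks reused versus freshly allocated vertices, and then reattach the $\sharp\vec{x}$-part. The only difference is organizational --- you factor out the $\vec{y}$-only ``lockstep'' fact and do the gluing with $\Phi$ explicitly, where the paper runs the induction directly over the nodes of the derivation tree --- which merely makes explicit a step the paper leaves as routine.
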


\begin{proof}
  Suppose $\theta,\theta'\of\Gamma$ with
  $(\sharp\vec{x}, \vec{y})\theta 
  \cong (\sharp\vec{x}, \vec{y})\theta'$.
  Let 
  $\hat{\theta}$ and $\hat{\theta}'$ be, respectively, 
  the restrictions of 
  $\theta$ and $\theta'$ to domain $\set{y_1,\dots,y_n}$. 
  Then clearly, $(\vec{y})\hat{\theta}\cong(\vec{y})\hat{\theta}'$.  Let $T$ be the
  evaluation derivation tree for $e\hat{\theta}$.
  It follows that the derivation trees for both $e\theta$ and $e\theta'$
  are isomorphic to $T$ in the sense that the structure of the trees
  and the expressions and values in the tree-node labels are the same,
  the only differences being in what the environments assign to the 
  $x_i$'s.   
  The lemma then follows by a straightforward induction argument on
  the nodes of $T$ that each expression at each of these nodes
  satisfies the lemma.  
\end{proof}

\CASE  $e=(\toNorm e_0)$. 
Since $\gamma$ is normal, 
$(\sharp\vec{x},\sharp(\toNorm e_0))\theta\cong 
 (\sharp\vec{x},(\toNorm e_0))\theta$ which, by
 the \emph{toNorm} evaluation rule, is $\cong   (\sharp\vec{x},e_0)\theta$.
Similarly, $(\sharp\vec{x},\sharp(\toNorm e_0))\theta'\cong (\sharp\vec{x},e_0)\theta'$.  
Suppose $\theta \cong^\sharp \theta'$. 
It is straightforward from Lemma~\ref{l:patch} and
the side-condition on \emph{I}-$\toNorm$
that $(\sharp\vec{x},e_0)\theta \cong (\sharp\vec{x},e_0)\theta'$. 
Therefore, $(\sharp \vec{x},\sharp e)\theta \cong 
(\sharp \vec{x},\sharp e)\theta'$ as required.  

\CASE $e=x$, a variable.  
Suppose $\theta \cong^\sharp \theta'$.
Since $x$ occurs in $\vec{x}$, it is immediate that
$(\sharp\vec{x},\sharp x)\theta \cong (\sharp\vec{x},\sharp x)\theta'$.

\CASE $e=((\lam{z_1}e_0)\,e_1)$ where 
$\Gamma,z_1\of\gamma_1\entails e_0\of\gamma$ and
$\Gamma\entails e_1\of\gamma_1$.
Let $\Gamma_0 = \Gamma,z_1\of\gamma_1$.  Also let 
${\theta}_0 = \theta[z_1\mapsto v_1]$ and 
${\theta}_0' = \theta'[z_1\mapsto v_1']$
where 
$e_1\theta\yields v_1$ and 
$e_1\theta'\yields v_1'$.
Suppose $\theta\cong^\sharp \theta'$.
Then by the induction hypothesis, 
$(\sharp\vec{x},\sharp e_1)\theta
\cong 
(\sharp\vec{x},\sharp e_1)\theta'$. 
Thus, ${\theta}_0\cong^\sharp{\theta}_0'$.
Then by the induction hypothesis, 
$(\sharp(\vec{x},z_1),\sharp e_0)\theta_0
\cong 
 (\sharp(\vec{x},z_1), \sharp e_0){\theta}_0'$. 
By the $\lambda$-\emph{App} evaluation rule, 
$e\theta \yields v\iff e_0{\theta}_0\yields v$ and 
$e\theta' \yields v'\iff e_0{\theta}_0'\yields v'$.
Hence by some assembly surgery, 
$(\sharp\vec{x},\sharp e)\theta
\cong 
 (\sharp\vec{x},\sharp e){\theta}'$ as required.

\CASE $e=(\Cn\delta e_0)$ where 
$\delta=\muP$ and $P\delta$ are normal, and 
$\Gamma\entails e_0\of P\delta$. 
Suppose 
$\theta\cong^\sharp\theta'$.
We need to show 
$(\sharp\vecx ,\sharp e)\theta \cong (\sharp\vecx ,\sharp e)\theta'$
which, since $\delta$ is normal, is equivalent to
$(\sharp\vecx , e)\theta \cong (\sharp\vecx , e)\theta'$.
By the IH on $e_0$ and $P\delta$'s normality,
$(\sharp \vecx ,e_0)\theta 
 \cong 
 (\sharp \vecx ,e_0)\theta'$. 
Suppose $e_0\theta\yields v_0$ and $e_0\theta'\yields v_0'$.
By the operational semantics, 
the result of 
evaluating $(\Cn\delta e_0)\theta$
is a fresh $\Cn\delta$-vertex with an out-edge to the root of $v_0$
and similarly with $(\Cn\delta e_0)\theta'$ and $v_0'$.
Hence since $(\sharp \vecx ,e_0)\theta \cong 
(\sharp \vecx ,e_0)\theta'$, it is evident that
$(\sharp \vecx , e)\theta 
 \cong 
 (\sharp \vecx , e)\theta'$.

\CASE  $e=(\Dn\delta e_0)$  where 
$\delta=\muP$ and $\gamma = P\delta$ are normal, and
$\Gamma\entails e_0\of\muP$. 
Suppose 
$\theta\cong^\sharp\theta'$.
We need to show 
$(\sharp\vecx ,\sharp e)\theta \cong (\sharp\vecx ,\sharp e)\theta'$
which, since $P\delta$ is normal, is equivalent to
$(\sharp\vecx , e)\theta \cong (\sharp\vecx , e)\theta'$.
By the IH on $e_0$ and $\delta$'s normality, 
$(\sharp \vecx ,e_0)\theta \cong (\sharp \vecx ,e_0)\theta'$. 
Suppose $e_0\theta\yields v_0$ and $e_0\theta'\yields v_0'$
By the operational semantics, $v_0$ (respectively, $v_0'$)
consists 
of a  $\Cn\delta$-vertex with an out-edge to the root of a $(P\delta)$-value
$\hat{v}_0$ (respectively, $\hat{v}_0'$). Moreover, 
it follows from the operational semantics
that $e\theta \yields \hat{v}_0$ (respectively, 
$e\theta' \yields \hat{v}_0'$).  Hence since 
$(\sharp \vecx ,e_0)\theta \cong (\sharp \vecx ,e_0)\theta'$,
it is evident that
$(\sharp\vecx ,e)\theta \cong (\sharp\vecx ,e)\theta'$.

\CASE  $e=\inj_i(e')$.
This is roughly a repeat of the $e=(\Cn\delta e_0)$  case. 

\CASE $e=(\Case e_0 \Of\, (\inj_1 z_1) \Rightarrow e_1;\,
          (\inj_2 z_2) \Rightarrow e_2)$,
where $\Gamma\entails e_0\of \gamma_1+\gamma_2$, 
$\Gamma,z_1\of \gamma_1\entails e_1\of\gamma$, and
$\Gamma,z_2\of \gamma_2\entails e_2\of\gamma$.  
Suppose that
$e_0\theta \yields (\uinj_j v_0)$, 
$e_0\theta' \yields (\uinj_k v_0')$,
$\theta_j =\theta[z_j \mapsto v_0]$, and
$\theta_k' =\theta[z_k \mapsto v_0']$.
Recall that the side-condition on $+$-\emph{E} requires that
$\gamma_1+\gamma_2$ is normal or $\gamma$ is safe.
As the $\gamma$-safe case is trivial, we take 
$\gamma_1+\gamma_2$ to be  normal.

Suppose that  $\theta\cong^\sharp \theta'$. 
By the IH on $e_0$, 
$(\sharp \vecx ,\sharp e_0)\theta \cong (\sharp \vecx ,\sharp e_0)\theta'$,
which, since $\gamma_1+\gamma_2$ is normal, is equivalent to
$(\sharp \vecx , e_0)\theta \cong (\sharp \vecx , e_0)\theta'$.
Thus we have that 
$(\uinj_j v_0)   \cong (\uinj_k v_0')$, and so 
$j=k$ and the evaluations of $e\theta$ and $e\theta'$ both
take the $j$-branch of the $\Case$.
Also, since $(\sharp \vecx , e_0)\theta \cong (\sharp \vecx , e_0)\theta'$,
it follows along the lines of the argument for the $e=(\Dn\muP e_0)$ case
that $\theta_j\cong^\sharp \theta_j'$.  Hence, 
by the IH for $e_j$, 
$(\sharp (\vecx,z_j),\sharp e_i)\theta_j \cong 
 (\sharp (\vecx ,z_j),\sharp e_i)\theta_j'$.
But then, it follows along the lines of the argument 
for the $e=((\lam{z} e_0)\,e_1)$ case that 
$(\sharp \vecx ,\sharp e)\theta \cong 
 (\sharp \vecx ,\sharp e)\theta'$ as required.

\CASE $e=()$.  
This  follows trivially. 

\CASE $e=\proj_i(e_0)$ where $\Gamma\entails e_0\entails \gamma_1\times\gamma_2$.  Without loss of generality, suppose
that $\gamma_i$ is not safe. Thus, $\gamma_1\times\gamma_2$ is
not safe either. 
Suppose 
$\theta\cong^\sharp\theta'$. 
We need to show $(\sharp\vecx ,\sharp e)\theta \cong
 (\sharp\vecx ,\sharp e)\theta'$. 
Suppose 
$e\theta\yields v$, $e\theta'\yields v'$, 
$e_0\theta\yields \upair{v_1}{v_2}$ and $e_0\theta'\yields \upair{v_1'}{v_2'}$.  
Since $\gamma_1\times\gamma_2$ is not safe, 
by Definition~\ref{d:spans}\eqref{i:sp:norm}, neither 
$\sharp_{\gamma_1\times\gamma_2}(\upair{v_1}{v_2})$ nor
$\sharp_{\gamma_1\times\gamma_2}(\upair{v_1'}{v_2'})$ is $\emptydag$, 
and hence,  
$\sharp_{\gamma_i}(v) = \sharp_{\gamma_i}(v_i)$ and 
$\sharp_{\gamma_i}(v') = \sharp_{\gamma_i}(v_i')$.
By the IH for $e_0$, 
$(\sharp\vecx ,\sharp e_0)\theta \cong
 (\sharp\vecx ,\sharp e_0)\theta'$. 
It thus follows that 
$(\sharp\vecx ,\sharp e)\theta \cong
 (\sharp\vecx ,\sharp e)\theta'$ as required.

\SCASE $e=(x_j,x_k)$, where $\gamma=\gamma_i\times\gamma_j$
for $j,k\in\set{1,\dots,n}$.  
The case follows trivially when $\gamma_i\times\gamma_j$ is safe, 
so suppose $\gamma_i\times\gamma_j$ is not safe.
Suppose 
$\theta\cong^\sharp\theta'$. 
We need to show $(\sharp\vecx ,\sharp e)\theta \cong
 (\sharp\vecx ,\sharp e)\theta'$.
Suppose $e\theta\yields v$ and $e\theta'\yields v'$.
Since $\gamma_i\times\gamma_j$ is not safe,
neither $\sharp_{\gamma} v$ nor $\sharp_{\gamma} v'$ 
is $\emptydag$. Hence by a slight abuse of notation,
$(\sharp\vec{x},\sharp e)\theta \cong
 (\sharp\vec{x},(\sharp x_j,\sharp x_k))\theta$ 
and 
$(\sharp\vec{x},\sharp e)\theta' \cong
 (\sharp\vec{x},(\sharp x_j,\sharp x_k))\theta'$. 
Since $\theta\cong^\sharp\theta'$, it follows
that 
$(\sharp\vec{x},(\sharp x_j,\sharp x_k))\theta \cong
 (\sharp\vec{x},(\sharp x_j,\sharp x_k))\theta'$, and thus,
$(\sharp\vec{x},\sharp e)\theta \cong (\sharp\vec{x},\sharp e)\theta'$
as required. 

\CASE $e=(\hat{e}_1,\hat{e}_2)$, where $\gamma=\gamma_1\times\gamma_2$.  
Let 
\begin{gather*}
e_* = ((\lam{z_1}((\lam{z_2}(z_1,z_2)) \; \hat{e}_1)\; \hat{e}_2)).
\end{gather*}
For all $\theta''\of\Gamma$, 
$(\vec{x},e)\theta''\cong (\vec{x},e_*)\theta''$.
Then a combination of the arguments for the prior special case and the 
$e=((\lam{z}e_0)\, e_1)$
case shows this case. 
\end{proofsketch}

\end{document}